\documentclass[11pt]{article}

\usepackage{amsmath}
\usepackage{amssymb}
\usepackage{amsthm}
\usepackage{mathtools}
\usepackage{graphicx}
\usepackage{caption}
\usepackage[sort,round]{natbib}

\usepackage{lmodern}

\usepackage[svgnames]{xcolor}
\usepackage[colorlinks=true, allcolors=DarkBlue]{hyperref}

\newtheorem{theorem}{Theorem}[section]

\newtheorem{corollary}[theorem]{Corollary}

\newtheorem{proposition}[theorem]{Proposition}
\newtheorem{remark}[theorem]{Remark}

\DeclareMathOperator{\TCE}{TCE}
\DeclareMathOperator{\TV}{TV}
\DeclareMathOperator{\TS}{TS}
\DeclareMathOperator{\TCov}{TCov}

\usepackage{anysize}
\marginsize{20mm}{20mm}{15mm}{30mm} 

\newcommand{\myfootnote}[1]{\linespread{1.0}\selectfont\footnote{#1}\linespread{1}\selectfont}

\begin{document}


\title{Wishart conditional tail risk measures: An analytic approach\footnote{We thank Katja Ignatieva for her useful suggestions and remarks. We also thank the participants at CICIRM 2024, Ningbo, China and IME 2024, Chicago, USA, for their remarks. The usual caveat applies.}
}

\author{Jos\'{e} Da Fonseca \thanks{Auckland University of Technology, Business School, Department of Finance, Private Bag 92006, 1142 Auckland, New Zealand. Phone: +64 9 9219999 ext. 5063. Email: jose.dafonseca@aut.ac.nz and PRISM Sorbonne EA 4101, Universit\'e Paris 1 Panth\'eon - Sorbonne, 17 rue de la Sorbonne, 75005 Paris, France. ORCID: 0000-0002-6882-4511} 
\and Patrick Wong\thanks{Monash University, Business School, Department of Econometrics and Business Statistics, Melbourne, VIC, 3800, Australia. Email: patrick.wong@monash.edu. ORCID: 0000-0002-6164-901X}
}

\date{\today}

\maketitle

\begin{abstract}
This study introduces a new analytical framework for quantifying multivariate risk measures. Using the Wishart process, which is a stochastic process with values in the space of positive definite matrices, we derive several conditional tail risk measures which, thanks to the remarkable analytical properties of the Wishart process, can be explicitly computed up to a one- or two-dimensional integration. These quantities can also be used to solve analytically a capital allocation problem based on conditional moments. Exploiting the stochastic differential equation property of the Wishart process, we show how an intertemporal {\color{black}(\textit{i.e.}, time-lagged)} view of these risk measures can be embedded in the proposed framework. Several numerical examples show that the framework is versatile and operational, thus providing a useful tool for risk management. 
\end{abstract}

\vspace{8cm}
\textbf{JEL classification:}\\
G11, G12, C46\\
\textbf{Keywords:}\\
Tail conditional expectation, Risk measures, Fourier transform, Wishart process, Portfolio allocation

\clearpage
\section{Introduction}

The modeling of risk measures is crucial for various applications such as risk management and reserving. The literature in this field is extensive and diverse. For example, \citet{LandsmanValdez2005} derive the tail conditional expectation for the exponential dispersion family, \citet{FurmanLandsman2005} examine the tail conditional expectation for a multivariate gamma portfolio with dependent risks, providing explicit formulas for tail conditional expectations and risk capital allocations. \citet{XuMao2013} solve a capital allocation problem based on a tail mean-variance model that requires the conditional higher moments up to order three of the underlying variables. They illustrate their model using a multivariate elliptical distribution. More recently, \citet{IgnatievaLandsman2015}, \citet{IgnatievaLandsman2019}, and \citet{IgnatievaLandsman2021} compute a closed-form expression for the tail conditional expectation for generalized hyperbolic/hyper-elliptical distributions, {\color{black}  while \citet{IgnatievaLandsman2025} extend these computations to the second order, that is, the tail variance.}  \citet{CheungPeraltaWoo2022} compute a wide variety of risk measures for multivariate matrix-exponential affine mixture models.\\

{\color{black}
In this work, we extend the perspective developed in \citet{DufresneGarridoMorales2009}, which computes the Tail Conditional Expectation (TCE) for a univariate loss, to a framework that enables the computation of the TCE for risks, whether individual risks or aggregates, of any moment, across a wide range of random variable distributions, provided their moment generating function (MGF) is available.\footnote{Whenever the MGF is not available, as in the case of the Pareto distribution, for example, the characteristic function can be used instead; see \citet{DufresneGarridoMorales2009} for the scalar case.}\\

Additionally, our framework incorporates general dependence among risks, or losses across individual lines, by considering distributions on the space of symmetric positive definite matrices, for which the joint MGF is known for many distributions. This approach does not rely on a specific distribution to derive the TCE or any higher order tail conditional moments. If the static distribution used to compute risk measures can be embedded into a (time-dependent) stochastic process, it becomes possible to compute intertemporal or time-lagged TCEs (or any higher order tail conditional moments). The intertemporal/time-lagged TCE represents the expected loss at time \( t_1 \), given that the loss at time \( t_0 \) exceeded a certain threshold \( x_* \), where \( t_0 < t_1 \). We can also compute the TCE at two different time points for dependent losses, provided the joint MGF is known at both \( t_0 \) and \( t_1 \) for the individual lines. This property extends the static or time-independent perspective analyzed in the aforementioned literature to the dynamic or time-series point of view considered in more recent studies (e.g., \citet{ChavezDemoulinGuillou2018}, \citet{PattonZiegelChen2019}, \citet{GoegebeurGuillou2024}, \citet{GoegebeurGuillou2025}, or \citet{LiuLiuZhao2025}).\myfootnote{{\color{black} The dynamic property can also be understood in the sense of \citet{Riedel2004}, \citet{ArtznerDelbaenEberHeathKu2007}, \citet{HardyWirch2004}, or \citet{DevolderLebegue2017}, which differs from the purely time-series point of view adopted here.}} Compared to this literature, our modeling strategy based on the Wishart process provides control over both the evolution of losses and their dependencies.\\
}

In modeling portfolio losses, it is important to recognize that losses can exhibit dependence across different losses. Traditional copula-based methods often impose constraints on the types of dependencies we can effectively model, limiting their flexibility. {\color{black} The literature on risk measures using copulas to handle dependence is extensive. Without pretending to be exhaustive, let us mention \citet{CousinDiBernardino2014} and \citet{DiBernardino2015}.} {\color{black} An alternative to the copula approach is the factor model approach of \citet{FurmanLandsman2005}, which is based on a vector of independent positive variables. In this framework, all losses are expressed as linear functions of this vector and share a common component, akin to a principal component analysis. This modeling strategy imposes certain constraints on the sign of the dependencies.} The alternative that we propose is to capture these relationships by directly specifying both the losses and their dependence structure simultaneously. A natural framework for this purpose is to view losses within the space of positive definite matrices, similar to how covariances are treated in portfolio theory. This approach not only aligns with the inherent structure of dependence but also allows for richer modeling of the interactions between different losses.\\

The Wishart process, originally defined by \cite{Bru1991}, is a flexible process applied in option pricing, mortality modeling, and equity modeling. It is a positive definite matrix process, making it suitable for modeling positive events such as losses and their dependencies. {\color{black} As such, it can be seen as an extension of  \citet{FurmanLandsman2005} where the dependence between the losses is directly modeled.} As an affine process, its MGF is known in closed form. Furthermore, the derivative of the MGF can be computed explicitly, a property that plays a crucial role in operationalising the implementation of risk measures. One of the key ingredients of our approach is to express different risk measures in terms of the MGF of the losses, thereby enabling us to fully exploit this property of the Wishart process. Note that many of our results hold for any distribution, as long as its MGF is known. We highlight our framework using the Wishart process, demonstrating that it satisfies the integrability conditions required by our framework and providing a numerical implementation to showcase the speed and accuracy of our results.\\

The paper is organized as follows. In Section~\ref{sec:Framework}, we introduce the Wishart process and provide its key results, including its MGF and the joint MGF at two different dates. In Section~\ref{sec:ConditionaTail}, we present our main result: the ability to compute multivariate risk measures using their Fourier transforms. By utilizing the Fourier transform, we express conditional tail expectations in terms of the joint MGF of random variables and derive these expressions even when a temporal element is introduced. We then prove that the Wishart process meets the integrability conditions for applying these results. We show that our approach is comparable to existing frameworks in terms of numerical cost. We also demonstrate how well-known conditional risk measures can be expressed in our framework. In Section~\ref{sec:numerical_implementation}, we provide a numerical implementation of the framework on the Wishart process, demonstrating accurate estimates and the effect of dependence on the measures. Finally, we conclude in Section~\ref{Conclusion}.

\section{The mathematical framework}\label{sec:Framework}

Given a filtered probability space \((\Omega,\mathcal{F}, \mathcal{F}_t, \mathbb{P})\), we denote by \(\mathbb{E}\left[\,\cdot\,\right]\) the expectation under the probability measure \(\mathbb{P}\), and by \(\mathbb{E}_t\left[\,\cdot\,\right]:=\mathbb{E}\left[\,\cdot\,|\mathcal{F}_t \right]\) the conditional expectation under \(\mathbb{P}\). The Wishart process, proposed in \citet{Bru1991}, satisfies the matrix stochastic differential equation
\begin{equation}
dx_t = (\omega + m x_t + x_t m^\top )dt + \sqrt{x_t}dw_t \sigma + \sigma^\top dw_t^\top \sqrt{x_t}\,, \label{eq:Wishart}
\end{equation}
where \(x_t\) is an \(n\times n\) matrix that belongs to the set of positive definite matrices denoted \(\mathbb{S}_n^{++}\) (\(\mathbb{S}_n^{+}\) stands for the set of positive semi-definite matrices, \(\mathbb{S}_n\) the set of real \(n\times n\) symmetric matrices). The matrix \(m\) belongs to the set of \(n\times n\) real matrices denoted \(\mathsf{M}(n)\) and is such that \(\lbrace \Re(\lambda_i^{m})<0; i=1,\ldots,n \rbrace\), where \(\lambda_i^{m}\in Spec(m)\) for \(i=1,\ldots,n\) and \(Spec(m)\) is the spectrum of the matrix \(m\), while \(\Re(\,\cdot\,)\) stands for the real part. Lastly, \(\lbrace w_t;t\geq 0 \rbrace\) is a standard matrix Brownian motion of dimension \(n\times n\) (\textit{i.e.}, a matrix of \(n^2\) independent scalar standard Brownian motions) under the probability measure \(\mathbb{P}\), and \(\cdot^\top\) stands for the matrix transposition. Thanks to the invariance of the law of the Brownian motion to rotations and the polar decomposition of \(\sigma\), we can restrict to \(\sigma \in \mathbb{S}_n^{++}\) without loss of generality. The matrix \(\omega \in \mathbb{S}_n^{++}\) satisfies certain constraints, clarified below, involving \(\sigma^2\) to ensure the positive definiteness of the matrix process \(x_t\). The quantity \(\sqrt{x_t}\) is well-defined and unique since \(x_t \in \mathbb{S}_n^{++}\). We denote by \(e_{ij}\) the basis of \(\mathsf{M}(n)\); it is the \(n\times n\) matrix with \(1\) in the \((i,j)\) place and \(0\) elsewhere. The identity matrix of \(\mathsf{M}(n)\) is denoted \(I_{n}\). We also introduce the symbols \(\textup{tr}[\,\cdot\,]\) for the trace of a matrix, \(\textup{vec}\left(\cdot\right)\) for the operator which transforms an \(n \times n\) matrix into an \(n^2\) vector by stacking the columns, \(\otimes\) for the Kronecker product, \(\det(\cdot)\) for the determinant of a matrix, and \(\|\cdot\|\) for the matrix norm associated with \(\textup{tr}[\,\cdot\,]\).\\

The Wishart process was initially defined and analyzed in \citet{Bru1991} under the assumption that $\omega=\beta \sigma^2$ with $\beta \in \mathbb{R}_+$ such that $\beta \geq n+1$  to ensure that for all $t>0$ we have $x_t \in \mathbb{S}_n^{++}$. Hereafter, this specification will be referred to as the Bru case. It was later extended in \citet{MayerhoferPfaffelStelzer2011} (see also \citet{CuchieroFilipovicMayerhoferTeichmann2011}) to the case $\omega \in \mathbb{S}_n^{++}$ and proved that if 
\begin{align}
	\omega \succeq \beta \sigma^2\,, \label{eq:PositiveConstraint}
\end{align}
with $\beta\geq n+1$, where \eqref{eq:PositiveConstraint} means that $\omega -\beta \sigma^2 \in \mathbb{S}_n^{++}$,  then for all $t>0$ we have $x_t\in \mathbb{S}_n^{++}$.\\

The infinitesimal generator of the Wishart process is given by \citet{Bru1991}:
\begin{align}
\mathcal{G} = \textup{tr}[(\omega + m x + xm^\top)D + 2 x D \sigma^2 D]\,, \label{eq:InfinitesimalGenerator}
\end{align}
where $D$ is the $(n\times n)$ matrix operator $D_{ij}:=\partial_{x_{ij}}$.\\

\citet{Bru1991} showed that as the Wishart process is affine, its MGF is exponentially affine. More precisely, the MGF of $x_t$ is given by the following proposition, see \citet{GrasselliTebaldi2008} and \citet{Alfonsi2015} for details.

\begin{proposition}\label{prop:MGFWishart}
Let $(x_t)_{t\geq 0}$ be a Wishart process given by ~\eqref{eq:Wishart}, and denote the MGF of $x_t$ by
\begin{equation}
\Phi(t,\theta_0, x_0):=\mathbb{E}\left[ \exp\left( \textup{tr}[\theta_0 x_t] \right)\right]\,, \label{eq:mgf}
\end{equation}
where $\theta_0 \in \mathbb{S}_n$. Then 
\begin{equation}
\Phi(t,\theta_0,x_0)= \exp\left(  \textup{tr} [a(t,\theta_0)  x_0] +  b(t,\theta_0) \right)\,, \label{eq:solPhi}
\end{equation}
with the deterministic functions $(a(t,\theta_0),b(t,\theta_0))$, where $a(t,\theta_0)\in \mathsf{M}(n)$ and $b(t,\theta_0) \in \mathbb{R}$, satisfying the system 
\begin{align}
a' &=a m + m^\top a + 2a \sigma^2 a\,, \label{eq:Riccati}\\
b' &= \textup{tr} [\omega a]\,, \label{eq:odeb}
\end{align}
with initial conditions $a(0,\theta_0)=\theta_0$ and $b(0,\theta_0)=0$. As usual $\cdot^\prime$ denotes the time derivative. The matrix Riccati ordinary differential equation (ODE) \eqref{eq:Riccati} admits the solution 
	\begin{align}
		a(t,\theta_0)=e^{t m^\top }(I_n -2 \theta_0\varsigma_t)^{-1}\theta_0 e^{t m}\,, \label{eq:Asimple}
	\end{align}
	with $ \varsigma_t:=\int_0^t e^{(t-s)m}\sigma^2e^{(t-s)m^\top}ds\in \mathbb{S}_n^{++}$, which is given by $\textup{vec}(\varsigma_t)=\mathsf{A}^{-1}(e^{t\mathsf{A}} - I_{n^2})\textup{vec}(\sigma^2)$ with  $\mathsf{A} = \left(I_{n}\otimes m+m\otimes I_{n}\right)$,  while $ b(t,\theta_0)$ is obtained by a numerical integration of \eqref{eq:odeb}.\\
	
	If we further assume that $\omega=\beta \sigma^2$ with $\beta \in \mathbb{R}$ and $\beta \geq n+1$ then $b(t,\theta_0)$ is given by
\begin{align}
e^{b(t)}&=\frac{1}{\det(I_n -2\varsigma_t\theta_0)^{\beta/2}}\,. \label{eq:ExpB}
\end{align}
\end{proposition}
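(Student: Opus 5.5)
We plan to use the affine structure of the Wishart process, via a Feynman--Kac/martingale argument for the exponential-affine ansatz. Fix $T>0$ and consider $f(t,x):=\exp(\textup{tr}[a(T-t,\theta_0)x]+b(T-t,\theta_0))$. If $f$ is smooth and $(\partial_t+\mathcal{G})f=0$, with $\mathcal{G}$ the generator \eqref{eq:InfinitesimalGenerator}, then $f(t,x_t)$ is a local martingale. Once we show it is a true martingale, $\mathbb{E}[f(T,x_T)]=f(0,x_0)$ gives \eqref{eq:solPhi}, because $f(T,x)=\exp(\textup{tr}[\theta_0 x])$ when $a(0)=\theta_0$ and $b(0)=0$.

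\textbf{Step 1 (the ODEs).} For $g=\exp(\textup{tr}[Ax]+B)$ with $A$ symmetric we have $Dg=A g$ and $D\sigma^2 D g=A\sigma^2 A g$, up to the standard symmetric-differentiation convention that makes \eqref{eq:InfinitesimalGenerator} consistent. Hence
\[
\mathcal{G}g=\bigl(\textup{tr}[\omega A]+\textup{tr}[(mx+xm^\top)A]+2\,\textup{tr}[xA\sigma^2A]\bigr)g .
\]
Using cyclicity, $\textup{tr}[(mx+xm^\top)A]=\textup{tr}[x(Am+m^\top A)]$. With $\partial_t$ producing $-a'$ and $-b'$, collecting the $x$-linear terms gives \eqref{eq:Riccati} and the constant terms give \eqref{eq:odeb}.

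\textbf{Step 2 (solving the Riccati equation).} We verify \eqref{eq:Asimple} directly. Set $\varsigma_t$ as defined, so that $\varsigma_t'=m\varsigma_t+\varsigma_t m^\top+\sigma^2$ and $\varsigma_0=0$. Write $a=e^{tm^\top}\theta_0(I_n-2\varsigma_t\theta_0)^{-1}e^{tm}$; this equals $e^{tm^\top}(I_n-2\theta_0\varsigma_t)^{-1}\theta_0e^{tm}$ by the push-through identity. Differentiating, the exponential factors give $m^\top a+am$. Differentiating the inverse gives
\[
e^{tm^\top}\theta_0(I-2\varsigma\theta_0)^{-1}\,2(m\varsigma+\varsigma m^\top+\sigma^2)\theta_0\,(I-2\varsigma\theta_0)^{-1}e^{tm}.
\]
The $\sigma^2$ part is exactly $2a\sigma^2a$, after inserting $e^{tm}e^{-tm}$; this is where the choice of $\varsigma$ with conjugation by $e^{(t-s)m}$ matters. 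The $m\varsigma+\varsigma m^\top$ part must cancel against the extra terms from the product rule, and a cleaner route avoids this bookkeeping. Substitute $a=e^{tm^\top}\tilde a\,e^{tm}$ to remove the linear terms, which gives $\tilde a'=2\tilde a\tilde\sigma_t\tilde a$ with $\tilde\sigma_t=e^{tm}\sigma^2e^{tm^\top}$. Then $(\tilde a^{-1})'=-2\tilde\sigma_t$ integrates, when $\theta_0$ is invertible (the general case follows by continuity), to $\tilde a=(\theta_0^{-1}-2\int_0^t\tilde\sigma_s ds)^{-1}$. Conjugating back and rewriting $\int_0^t e^{sm}\sigma^2e^{sm^\top}ds$ in terms of $\varsigma_t$, with the time convention checked, yields \eqref{eq:Asimple}. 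The vec formula follows because $\textup{vec}(\varsigma_t)$ solves $v'=\mathsf{A}v+\textup{vec}(\sigma^2)$ with $v(0)=0$.

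\textbf{Step 3 (the Bru case).} With $\omega=\beta\sigma^2$, we have $b'=\beta\,\textup{tr}[\sigma^2a]$. Jacobi's formula gives
\[
\tfrac{d}{dt}\log\det(I-2\varsigma_t\theta_0)=-2\,\textup{tr}[(I-2\varsigma\theta_0)^{-1}\varsigma_t'\theta_0].
\]
Inserting $\varsigma'=m\varsigma+\varsigma m^\top+\sigma^2$, the $m$-terms combine with cyclicity into $\textup{tr}[\sigma^2a]$-type expressions. More simply, in the $\tilde a$ variables, $\textup{tr}[\sigma^2 a]=\textup{tr}[\tilde\sigma_t\tilde a]=-\tfrac12\frac{d}{dt}\log\det(\theta_0^{-1}-2\int\tilde\sigma)$, up to constants. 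This gives $b=-\tfrac{\beta}{2}\log\det(I-2\varsigma_t\theta_0)$, using $\det(e^{tm})$ cancellation and $\det(I-2\theta_0\varsigma)=\det(I-2\varsigma\theta_0)$.

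\textbf{Main obstacle.} The main difficulty is justifying the true martingale property and the finiteness of the MGF. We restrict to $t$ such that $I_n-2\theta_0\varsigma_t$ stays invertible, for example $\theta_0$ with $\lambda_{\max}(\theta_0\varsigma_t)<1/2$, so that $a$ is defined on $[0,T]$. Then $f(t,x_t)$ is a positive local martingale, hence a supermartingale. We would obtain equality either by citing the affine-process theory of \citet{CuchieroFilipovicMayerhoferTeichmann2011} and \citet{MayerhoferPfaffelStelzer2011}, or by taking $\theta_0\preceq0$ first, where $f$ is bounded and the martingale property is immediate, and then extending to the admissible domain by analytic continuation in $\theta_0$.
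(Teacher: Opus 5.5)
The paper gives no proof of this proposition. It states the Riccati system and the closed forms and defers to \citet{GrasselliTebaldi2008} and \citet{Alfonsi2015}. Your proposal is a correct, self-contained derivation along the standard affine lines, so it supplies what the paper omits.

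The exponential-affine ansatz with It\^o's formula gives exactly \eqref{eq:Riccati}--\eqref{eq:odeb}. The substitution $a=e^{tm^\top}\tilde a\,e^{tm}$ and the inversion $(\tilde a^{-1})'=-2e^{tm}\sigma^2e^{tm^\top}$ solve the Riccati equation elementarily, rather than through the $2n\times 2n$ linearization often used in the Wishart literature. Your integral $\int_0^t e^{sm}\sigma^2e^{sm^\top}ds$ is the paper's $\varsigma_t$ after $s\mapsto t-s$. Jacobi's formula then yields \eqref{eq:ExpB}, and the algebra never uses $\beta\geq n+1$.

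Beyond the citation, your route makes the domain of validity explicit and addresses the martingale step. It establishes the formula wherever the Riccati solution exists on $[0,t]$. Since $\varsigma_s$ is Loewner-increasing, this is exactly the set $\theta_0\prec\tfrac12\varsigma_t^{-1}$, which is sharper than the paper's remark that $\|\theta_0\|$ be small.

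To close the martingale step, use your supermartingale inequality to get finiteness of the MGF on this convex set. Both sides are then holomorphic on the tube $\{\Re\theta\prec\tfrac12\varsigma_t^{-1}\}$, because $I_n-2\theta\varsigma_s$ stays invertible when its Hermitian part is positive definite. The identity theorem then transfers equality from $\{\theta_0\prec 0\}$, where $a\preceq 0$ makes $f$ bounded.

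One correction: the abandoned direct-verification remark in your Step~2 is wrong as written. The $\sigma^2$ part alone is not $2a\sigma^2a$, and nothing needs to cancel. Since $m\varsigma_t+\varsigma_tm^\top+\sigma^2=e^{tm}\sigma^2e^{tm^\top}$, the whole derivative of the inverse equals $2a\sigma^2a$, so the direct check is actually immediate.
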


Note that for $\Phi(t,\theta_0, x_0)$ to be well-defined, the matrix $(I_n -2 \theta_0\varsigma_t)$ in \eqref{eq:Asimple} should be invertible and it is possible as long as \(\|\theta_0\|\) is sufficiently small since \(\varsigma_t\in \mathbb{S}_n^{++}\).\\
 
From the equations \eqref{eq:Asimple} and \eqref{eq:ExpB}, it is straightforward to prove the following result. 
\begin{corollary}\label{coro:MGFInfinity} 
If we assume that $\omega=\beta \sigma^2$ with $\beta \in \mathbb{R}$ and $\beta \geq n+1$, then the MGF of $x_{\infty}=\lim_{t \to +\infty} x_t$, denoted $\Phi^{\infty}(\theta):=\lim_{t\to +\infty}\Phi(t,\theta, x_0)=\mathbb{E}\left[e^{\textup{tr}[\theta x_{\infty}]}\right]$, is given for $\theta \in \mathbb{S}_n$ by 
\begin{align} 
\Phi^{\infty}(\theta)=\frac{1}{\det(I_n -2\varsigma_\infty\theta)^{\beta/2}}, \label{eq:MGFInfinity} 
\end{align} 
with $\varsigma_\infty = \lim_{t\to +\infty}\varsigma_t$ that is well-defined. 
\end{corollary}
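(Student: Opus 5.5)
The plan is to take the limit $t\to+\infty$ separately in the two factors of $\Phi(t,\theta,x_0)=\exp\left(\textup{tr}[a(t,\theta)x_0]\right)e^{b(t,\theta)}$ from Proposition~\ref{prop:MGFWishart}, using the closed forms \eqref{eq:Asimple} and \eqref{eq:ExpB}.

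\textbf{Step 1: the limit $\varsigma_\infty$ exists.} All eigenvalues of $m$ have negative real part, so there are constants $C,\kappa>0$ with $\|e^{um}\|\le Ce^{-\kappa u}$. After the change of variable $u=t-s$ we have $\varsigma_t=\int_0^t e^{um}\sigma^2e^{um^\top}du$. The integrand is bounded in norm by $C^2\|\sigma^2\|e^{-2\kappa u}$, so the integral converges absolutely as $t\to\infty$. Its limit $\varsigma_\infty=\int_0^\infty e^{um}\sigma^2e^{um^\top}du$ lies in $\mathbb{S}_n^{++}$, since the integrand is positive definite.

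The same conclusion follows from the vec formula. The eigenvalues of $\mathsf{A}=I_n\otimes m+m\otimes I_n$ are $\lambda_i^m+\lambda_j^m$, which all have negative real part. Hence $e^{t\mathsf{A}}\to0$, and $\textup{vec}(\varsigma_\infty)=-\mathsf{A}^{-1}\textup{vec}(\sigma^2)$.

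\textbf{Step 2: the $x_0$-dependence disappears.} Fix $\theta\in\mathbb{S}_n$ with $I_n-2\theta\varsigma_\infty$ invertible; this is the natural domain, and it holds for instance when $\|\theta\|$ is small. By continuity of the determinant, $I_n-2\theta\varsigma_t$ is invertible for all large $t$, and $(I_n-2\theta\varsigma_t)^{-1}\to(I_n-2\theta\varsigma_\infty)^{-1}$. In \eqref{eq:Asimple} this bounded factor is multiplied by $e^{tm^\top}$ on the left and $e^{tm}$ on the right, and both decay exponentially. Therefore $a(t,\theta)\to0$, so $\textup{tr}[a(t,\theta)x_0]\to0$ and the first factor tends to $1$.

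\textbf{Step 3: the $b$ term.} By \eqref{eq:ExpB} and continuity of $\det$, we get $e^{b(t,\theta)}=\det(I_n-2\varsigma_t\theta)^{-\beta/2}\to\det(I_n-2\varsigma_\infty\theta)^{-\beta/2}$. This uses the identity $\det(I_n-2\theta\varsigma)=\det(I_n-2\varsigma\theta)$, so the two invertibility conditions coincide. The power is well defined by continuity from $\theta=0$, where the determinant equals $1$, on the connected component of the domain containing $0$. Multiplying the limits of Steps 2 and 3 gives \eqref{eq:MGFInfinity}.

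\textbf{Step 4: identification with $x_\infty$.} It remains to justify writing the limit as $\mathbb{E}[e^{\textup{tr}[\theta x_\infty]}]$, which I expect to be the main obstacle. The MGFs $\Phi(t,\cdot,x_0)$ converge pointwise on a neighbourhood of $\theta=0$ to a function that is itself an MGF, namely that of a Wishart distribution $W_n(\beta,\varsigma_\infty)$. By the continuity theorem for moment generating functions, $x_t$ converges in law to this distribution. Here $x_\infty$ is understood as the limit in distribution, i.e.\ a random matrix with the stationary law. Convergence of the MGFs on a neighbourhood of the origin also gives the uniform integrability needed to identify the limit with $\mathbb{E}[e^{\textup{tr}[\theta x_\infty]}]$.
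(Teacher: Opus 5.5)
Your proposal is correct and is essentially the argument the paper has in mind. The paper simply says the result follows from \eqref{eq:Asimple} and \eqref{eq:ExpB}: $a(t,\theta)\to 0$ because $e^{tm}$ decays, while $\det(I_n-2\varsigma_t\theta)^{-\beta/2}\to\det(I_n-2\varsigma_\infty\theta)^{-\beta/2}$. Your Steps 1 and 4 add rigor the paper leaves implicit: the existence of $\varsigma_\infty$, the restriction of $\theta$ to the component of $\{\det(I_n-2\varsigma_\infty\theta)\neq 0\}$ containing $0$, and reading $x_\infty$ as a limit in law rather than a pathwise limit.
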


Note that \eqref{eq:MGFInfinity} is the MGF of a matrix gamma distribution, and for diagonal parameters, the vector $(x_{11,\infty},\ldots, x_{nn,\infty})$ is an $n$-dimensional vector of independent scalar gamma variables such that $x_{ii,\infty}$ has a scale $2(\varsigma_{\infty}){ii}$ and shape $\beta/2$. This type of distribution is used in \citet{FurmanLandsman2005} to analyse the TCE risk measure. {\color{black} Since the components of the vector are independent, the authors express the losses as linear functions of this vector, with the dependence between losses captured by a common component of the vector gamma distribution. This common-factor perspective imposes a sign constraint on the dependencies between losses.} Note that in the non-diagonal case, the components $x_{ii,\infty}$ are dependent, and therefore it provides an alternative modeling strategy to create dependence compared to the one used in \citet{FurmanLandsman2005}, which consists of specifying a common factor.\\

To compute the conditional risk measures, we will need the derivative of the MGF with respect to a scalar parameter that multiplies its matrix argument. This relies on \citet[Theorem 4, p. 127]{Lax2007} and \citet[Theorem 2, p. 124]{Lax2007}.

\begin{proposition}\label{prop:MGFWishartDerivative} 
Given a Wishart process $(x_t)_{t\geq 0}$ and for $\epsilon>0$ let $\nu \in \mathbb{R}$ with $\nu \in ]-\epsilon,\; \epsilon[$. Assume that $ \theta_0, \theta_1\in \mathbb{S}_n$ (that do not depend on $\nu$), and define the MGF
	\begin{align}
		\bar{\Phi}(t,\nu,\theta_0,\theta_1,x_0)	&:= \mathbb{E}\left[\exp\left(\textup{tr}[(\theta_0+\nu \theta_1)x_t]\right) \right]\nonumber \\
		&= \Phi(t,\theta_0+\nu\theta_1,x_0)\,, \label{eq:PhiNu}
	\end{align}
	with $\Phi(.,.,.)$ given by \eqref{eq:solPhi}. Then we have
	\begin{align}
		\bar{\Phi}_{\nu}(t,\theta_0,\theta_1 ,x_0) 	&:=\partial_{\nu} \bar{\Phi}(t,\nu,\theta_0,\theta_1,x_0)|_{\nu = 0}\nonumber \\
																								&=\left(\textup{tr}[a_{\nu}(t,\theta_0,\theta_1) x_0] + b_{\nu}(t,\theta_0,\theta_1) \right)\Phi(t,\theta_0,x_0)\,,\label{eq:MGFWishartDerivative}
	\end{align}
	with 
	\begin{align}
		a_{\nu}(t,\theta_0,\theta_1) 	&= e^{ t m^\top }(I_n -2 \theta_0\varsigma_t)^{-1}2\theta_1\varsigma_t(I_n -2 \theta_0\varsigma_t)^{-1}\theta_0e^{tm} + e^{t m^\top}(I_n -2 \theta_0\varsigma_t)^{-1}\theta_1e^{t m}\,,\label{eq:Anu}
	\end{align}
	while $ b_{\nu}(t,\theta_0,\theta_1)=\int_0^t \mathsf{tr}[\omega a_{\nu}(u,\theta_0,\theta_1) ] du$.\\
	
	If we further assume that $\omega=\beta \sigma^2$ with $\beta \in \mathbb{R}$ and $\beta \geq n+1$ then  $b_\nu(t,\theta_0,\theta_1)= \beta\textup{tr}[(I_n-2\varsigma_t\theta_0)^{-1}\varsigma_t \theta_1]$.
	
\end{proposition}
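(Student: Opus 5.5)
The plan is to differentiate the closed form of Proposition~\ref{prop:MGFWishart} directly. By \eqref{eq:PhiNu}, $\bar{\Phi}(t,\nu,\theta_0,\theta_1,x_0)=\exp\left(\textup{tr}[a(t,\theta_0+\nu\theta_1)x_0]+b(t,\theta_0+\nu\theta_1)\right)$ for all $\nu$ with $|\nu|<\epsilon$. Choose $\epsilon$ small enough that $I_n-2(\theta_0+\nu\theta_1)\varsigma_s$ stays invertible for all $s\in[0,t]$; this is possible by continuity, given that $I_n-2\theta_0\varsigma_s$ is invertible on the compact interval $[0,t]$, which is implicit in $\Phi(t,\theta_0,x_0)$ being well defined. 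The chain rule then gives
\begin{equation*}
\bar{\Phi}_{\nu}=\left(\textup{tr}[a_\nu x_0]+b_\nu\right)\Phi(t,\theta_0,x_0),
\end{equation*}
where $a_\nu:=\partial_\nu a(t,\theta_0+\nu\theta_1)|_{\nu=0}$ and $b_\nu:=\partial_\nu b(t,\theta_0+\nu\theta_1)|_{\nu=0}$. So everything reduces to computing these two derivatives.

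For $a_\nu$, start from \eqref{eq:Asimple}:
\begin{equation*}
a(t,\theta_0+\nu\theta_1)=e^{tm^\top}\left(I_n-2(\theta_0+\nu\theta_1)\varsigma_t\right)^{-1}(\theta_0+\nu\theta_1)e^{tm}.
\end{equation*}
Apply the product rule together with the derivative of a matrix inverse, $\frac{d}{d\nu}M(\nu)^{-1}=-M^{-1}M'M^{-1}$ (this is \citet[Theorem 2 and 4]{Lax2007}), where $M(\nu)=I_n-2(\theta_0+\nu\theta_1)\varsigma_t$ and $M'=-2\theta_1\varsigma_t$. Evaluating at $\nu=0$ produces exactly the two terms in \eqref{eq:Anu}:
\begin{itemize}
\item the term $(I_n-2\theta_0\varsigma_t)^{-1}2\theta_1\varsigma_t(I_n-2\theta_0\varsigma_t)^{-1}\theta_0$, coming from differentiating the inverse;
\item the term $(I_n-2\theta_0\varsigma_t)^{-1}\theta_1$, coming from differentiating the linear factor.
\end{itemize}
Both are sandwiched between $e^{tm^\top}$ and $e^{tm}$.

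For $b_\nu$ in the general case, write $b(t,\theta)=\int_0^t\textup{tr}[\omega a(u,\theta)]du$ using \eqref{eq:odeb} and $b(0)=0$. Then differentiate under the integral sign. This is the only step requiring care: one must justify the interchange by joint continuity of $(u,\nu)\mapsto\partial_\nu a(u,\theta_0+\nu\theta_1)$ on $[0,t]\times[-\epsilon,\epsilon]$. That continuity follows from the explicit rational formula and the uniform invertibility secured in the first step. The result is $b_\nu=\int_0^t\textup{tr}[\omega a_\nu(u,\theta_0,\theta_1)]du$.

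In the Bru case, differentiate \eqref{eq:ExpB} instead:
\begin{equation*}
b(t,\theta)=-\tfrac{\beta}{2}\log\det(I_n-2\varsigma_t\theta).
\end{equation*}
Use Jacobi's formula, $\partial_\nu\log\det M(\nu)=\textup{tr}[M^{-1}M']$, with $M'=-2\varsigma_t\theta_1$. This gives $b_\nu=\beta\,\textup{tr}[(I_n-2\varsigma_t\theta_0)^{-1}\varsigma_t\theta_1]$ as claimed. Taking the logarithm is legitimate because the determinant is positive near $\nu=0$: it equals $1$ at $t=0$ and cannot vanish along the path. Alternatively, one can simply differentiate $\det(\cdot)^{-\beta/2}$ directly, which avoids the logarithm.

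The main obstacle, modest as it is, is this uniform invertibility and regularity argument, which justifies both the differentiation of the inverse and the exchange of derivative and integral.
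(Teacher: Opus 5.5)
Your proposal is correct and follows essentially the route the paper indicates: differentiate the closed forms \eqref{eq:Asimple} and \eqref{eq:ExpB} in $\nu$ using the matrix-inverse derivative and the log-determinant (Jacobi) formula from \citet{Lax2007}, with the chain rule producing the factor $\Phi(t,\theta_0,x_0)$. Your added justifications fill in details the paper leaves implicit: invertibility of $I_n-2(\theta_0+\nu\theta_1)\varsigma_s$ on $[0,t]$ for small $|\nu|$, differentiation under the integral for general $\omega$, and positivity of the determinant.
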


Note that for $\bar{\Phi}(t,\nu,\theta_0,\theta_1,x_0)$ to be well-defined, the norm of $\theta_0 + \nu \theta_1$ needs to be sufficiently small. If $\epsilon$ is sufficiently small and the norm of $\theta_0$ is also sufficiently small, then the function is well-defined according to Proposition~\ref{prop:MGFWishart}. Note also that if the norm of $\theta_0$ is sufficiently small, then $(I_n - 2 \theta_0 \varsigma_t)$ is invertible since $\varsigma_t \in \mathbb{S}_n^{++}$.\\

The above proposition gives the first order derivative of the moment generating function with respect to \(\nu\). Below, we will need the \textcolor{blue}{\(q\text{-th}\) order derivative, where \(q \in  \mathbb{N}\),} of this function with respect to this parameter.

\begin{corollary}\label{coro:MGFWishartQDerivative} 
Assume that $\omega=\beta \sigma^2$ with $\beta \in \mathbb{R}$ and $\beta \geq n+1$ and let $\bar{\Phi}(.,.,.,.)$ be given by \eqref{eq:PhiNu}. Then the $q\text{-th}$ order derivative of $\bar{\Phi}(t,\nu,\theta_0,\theta_1,x_0)$ with respect to $\nu$ (evaluated for $\nu=0$) is given by
	\begin{align}
		\bar{\Phi}_{\nu^q}(t,\theta_0,\theta_1 ,x_0) &:=\left.\frac{\partial^q \bar{\Phi}(t,\nu,\theta_0,\theta_1,x_0)}{\partial \nu^q}\right|_{\nu = 0}.\label{eq:MGFWishartQDerivative}
	\end{align}
\end{corollary}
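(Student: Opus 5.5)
The statement as given only defines the symbol $\bar{\Phi}_{\nu^q}$. The substance to establish is that this $q$-th derivative exists and admits an explicit expression in the Bru case, computable recursively from closed-form ingredients. The plan is to write $\bar{\Phi}(t,\nu,\theta_0,\theta_1,x_0)=\exp(g(\nu))$ with
\begin{align*}
g(\nu)=\textup{tr}[a(t,\theta_0+\nu\theta_1)x_0]-\tfrac{\beta}{2}\log\det(I_n-2\varsigma_t(\theta_0+\nu\theta_1)),
\end{align*}
using Proposition~\ref{prop:MGFWishart} and \eqref{eq:ExpB}. Then I would apply Fa\`a di Bruno's formula, or equivalently the recursion
\begin{align*}
\bar{\Phi}_{\nu^{q}}=\sum_{k=0}^{q-1}\binom{q-1}{k}g^{(q-k)}(0)\,\bar{\Phi}_{\nu^{k}},
\end{align*}
obtained by differentiating $\partial_\nu\bar{\Phi}=g'(\nu)\bar{\Phi}$ $q-1$ times. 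This reduces everything to computing the scalar derivatives $g^{(j)}(0)$.

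\textbf{Well-definedness.} For $\|\theta_0\|$ small and $|\nu|<\epsilon$ small, the matrix $M(\nu):=I_n-2(\theta_0+\nu\theta_1)\varsigma_t$ stays invertible, as noted after Proposition~\ref{prop:MGFWishartDerivative}. Its determinant also stays positive by continuity from $\nu=0$, where it is close to $1$. Hence $g$ is real analytic in $\nu$, being a composition of matrix inversion and $\log\det$ with an affine map, and derivatives of every order exist.

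\textbf{Derivatives of $g$.} Set $R:=(I_n-2\theta_0\varsigma_t)^{-1}$. The Neumann expansion
\begin{align*}
M(\nu)^{-1}=\sum_{k\ge0}\nu^k R(2\theta_1\varsigma_t R)^k
\end{align*}
gives directly
\begin{align*}
\partial_\nu^k M(\nu)^{-1}|_{\nu=0}=k!\,R(2\theta_1\varsigma_tR)^k .
\end{align*}
I would then differentiate $a(t,\theta_0+\nu\theta_1)=e^{tm^\top}M(\nu)^{-1}(\theta_0+\nu\theta_1)e^{tm}$ by Leibniz. Only two terms survive because the last factor is affine in $\nu$:
\begin{align*}
a_{\nu^j}=j!\,e^{tm^\top}\big[R(2\theta_1\varsigma_tR)^j\theta_0+R(2\theta_1\varsigma_tR)^{j-1}\theta_1\big]e^{tm}.
\end{align*}
For $j=1$ this reproduces \eqref{eq:Anu}, which serves as a consistency check.

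For the log-determinant term, I would use $\log\det(I-2\varsigma_t\theta_0-2\nu\varsigma_t\theta_1)=\log\det(I-2\varsigma_t\theta_0)+\log\det(I-2\nu\tilde R\varsigma_t\theta_1)$ with $\tilde R=(I_n-2\varsigma_t\theta_0)^{-1}$. Expanding $\log\det(I-X)=-\sum_k\textup{tr}[X^k]/k$ then gives
\begin{align*}
b_{\nu^j}=\tfrac{\beta}{2}(j-1)!\,\textup{tr}[(2\tilde R\varsigma_t\theta_1)^j].
\end{align*}
For $j=1$ this matches $b_\nu$ in Proposition~\ref{prop:MGFWishartDerivative}. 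Finally $g^{(j)}(0)=\textup{tr}[a_{\nu^j}x_0]+b_{\nu^j}$, and the recursion above yields $\bar{\Phi}_{\nu^q}$ as a polynomial in these quantities times $\Phi(t,\theta_0,x_0)$.

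\textbf{Main obstacle.} The hardest step is the justification of the series manipulations: the Neumann and $\log\det$ expansions must converge, which needs $|\nu|\,\|2\theta_1\varsigma_tR\|<1$. I would handle this by choosing $\epsilon$ small enough, since only derivatives at $\nu=0$ are needed. A secondary point is the branch of $\det(\cdot)^{\beta/2}$ for non-integer $\beta$. This is resolved by the positivity of the determinant near $\nu=0$ established in the well-definedness step.
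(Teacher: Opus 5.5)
Your proposal is correct, and it does more than the paper. The paper gives no proof of this corollary: it only introduces the symbol $\bar{\Phi}_{\nu^q}$. The one derivative the paper actually computes is the first-order case in Proposition~\ref{prop:MGFWishartDerivative}, obtained from Lax's formulas for differentiating a matrix inverse and a determinant; higher orders are implicitly left to iterating that computation.

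Your route is to write $\bar{\Phi}=e^{g}$, reduce $\bar{\Phi}_{\nu^q}$ to the scalars $g^{(j)}(0)$ through the Leibniz recursion (equivalently, complete Bell polynomials), and read off all Taylor coefficients at once from the Neumann series of $M(\nu)^{-1}$ and the series of $\log\det(I_n-X)$. This is sound, and your closed forms check out. For $j=1$ they reduce to \eqref{eq:Anu} and to the paper's $b_\nu$. The same log-det expansion, polarized in two directions at base point $0$, also reproduces the mixed term $2\beta\,\textup{tr}[\varsigma_{t_1-t_0}\theta_1\varsigma_{t_1-t_0}\theta_0]$ stated in Proposition~\ref{prop:MGF2WishartDerivative2}.

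What your approach buys is an explicit, implementable formula for every $q$: a polynomial in the quantities $\textup{tr}[a_{\nu^j}x_0]+b_{\nu^j}$, multiplied by $\Phi(t,\theta_0,x_0)$. The convergence issue you flag as the main obstacle can be sidestepped by iterating $\partial_\nu M^{-1}=M^{-1}(2\theta_1\varsigma_t)M^{-1}$, which gives $\partial_\nu^k M^{-1}=k!\,M^{-1}(2\theta_1\varsigma_tM^{-1})^k$ directly.

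The paper later evaluates $\bar{\Phi}_{\nu^q}$ at the complex point $(\alpha_0-\mathrm{i}u_0)\theta_0$ (Proposition~\ref{prop:MomentsOfXConditionalOnS1Dim}), where your small-real-$\theta_0$, positive-determinant argument does not apply. Your formulas still carry over, for two reasons:
\begin{itemize}
\item For $j\geq 1$ the $g^{(j)}(0)$ involve only $R$, $\tilde R$ and traces, so the branch of $\det(\cdot)^{-\beta/2}$ affects only the prefactor $\Phi(t,\theta_0,x_0)$.
\item $I_n-2(\alpha_0-\mathrm{i}u_0)\theta_0\varsigma_t$ stays invertible, since $\theta_0\varsigma_t$ has real eigenvalues $\lambda$ and hence $1-2(\alpha_0-\mathrm{i}u_0)\lambda\neq 0$ for small $\alpha_0$.
\end{itemize}
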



The previous results can be extended to the MGF of the process at two different times thanks to the affine property of the Wishart process. Indeed, in addition to the MGF of the Wishart process at a given point in time, \textit{i.e.}, the marginal point of the process, it is of practical interest to consider two points in time $0<t_0<t_1$ and the MGF of $(x_{t_0},x_{t_1})$. This will allow us to consider losses at two different dates, thus introducing an intertemporal {\color{black} or time-lagged} perspective to the risk measurement problem. It is easy to obtain the following result from Proposition~\ref{prop:MGFWishart} and the use of the law of iterated expectation.

\begin{proposition}\label{prop:MGFWishart2Dates}
Let $(x_t)_{t\geq 0}$ a Wishart process given by ~\eqref{eq:Wishart}, and assume two dates $0<t_0<t_1$. Then the MGF of $(x_{t_0},x_{t_1})$ is 
\begin{align}
\Phi(t_0,t_1,\theta_0,\theta_1, x_0)&:=\mathbb{E}\left[ \exp\left( \textup{tr}[\theta_0 x_{t_0}] + \textup{tr}[\theta_1 x_{t_1}] \right)\right] \nonumber\\
&= \exp\left( \textup{tr} [a(t_0,\theta_0 + a(t_1-t_0,\theta_1))  x_0] +  b(t_0,\theta_0) +  b(t_1-t_0,\theta_1)\right) \label{eq:MGF2}\,, 
\end{align}
where $\theta_0$ and $\theta_1$ belong to $\mathbb{S}_n$, $a(.,.)$ and $b(.,.)$ are given by Proposition~\ref{prop:MGFWishart}.
\end{proposition}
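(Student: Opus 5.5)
The plan is to use the law of iterated expectations, conditioning on $\mathcal{F}_{t_0}$, and to apply Proposition~\ref{prop:MGFWishart} twice: once on $[t_0,t_1]$ started from $x_{t_0}$, and once on $[0,t_0]$ started from $x_0$.

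\textbf{Step 1: inner expectation.} I write
\begin{align*}
\mathbb{E}\left[e^{\textup{tr}[\theta_0 x_{t_0}] + \textup{tr}[\theta_1 x_{t_1}]}\right]=\mathbb{E}\left[e^{\textup{tr}[\theta_0 x_{t_0}]}\,\mathbb{E}_{t_0}\left[e^{\textup{tr}[\theta_1 x_{t_1}]}\right]\right],
\end{align*}
which is legitimate since $x_{t_0}$ is $\mathcal{F}_{t_0}$-measurable. The Wishart SDE \eqref{eq:Wishart} has time-independent coefficients and a (weakly) unique solution, so $x$ is a time-homogeneous Markov process. Hence the conditional MGF equals $\Phi(t_1-t_0,\theta_1,x_{t_0})$. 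By \eqref{eq:solPhi} this is $\exp(\textup{tr}[a(t_1-t_0,\theta_1)x_{t_0}]+b(t_1-t_0,\theta_1))$, and the factor $e^{b(t_1-t_0,\theta_1)}$ is deterministic, so it comes out of the outer expectation.

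\textbf{Step 2: outer expectation.} What remains is $\mathbb{E}[\exp(\textup{tr}[(\theta_0+a(t_1-t_0,\theta_1))x_{t_0}])]$. This is $\Phi(t_0,\tilde\theta,x_0)$ with $\tilde\theta:=\theta_0+a(t_1-t_0,\theta_1)$. Applying Proposition~\ref{prop:MGFWishart} once more gives $\exp(\textup{tr}[a(t_0,\tilde\theta)x_0]+b(t_0,\tilde\theta))$.

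Note that the constant term produced by this argument is $b(t_0,\theta_0+a(t_1-t_0,\theta_1))$. I would therefore read the term $b(t_0,\theta_0)$ in \eqref{eq:MGF2} as shorthand for this quantity; it coincides with $b(t_0,\theta_0)$ only when $\theta_1=0$. In the proof I would state the formula with the full argument.

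\textbf{Step 3: admissibility of $\tilde\theta$ (main obstacle).} I must check that $\tilde\theta$ is an admissible argument for Proposition~\ref{prop:MGFWishart}.
\begin{itemize}
\item \emph{Symmetry.} From \eqref{eq:Asimple}, $(I_n-2\theta_1\varsigma)^{-1}\theta_1=\theta_1(I_n-2\varsigma\theta_1)^{-1}$. Transposing shows $a(t,\theta_1)\in\mathbb{S}_n$, hence $\tilde\theta\in\mathbb{S}_n$.
\item \emph{Smallness.} $a(t_1-t_0,\theta_1)=O(\|\theta_1\|)$ as $\theta_1\to 0$. So for $\|\theta_0\|,\|\theta_1\|$ sufficiently small, $\|\tilde\theta\|$ is small enough that $I_n-2\tilde\theta\varsigma_{t_0}$ is invertible and the MGF is finite.
\item \emph{Integrability.} This justifies the conditioning step, since the conditional expectation is then a finite exponential-affine function of $x_{t_0}$.
\end{itemize}
If a cleaner argument is needed, I would instead justify the formula on a neighbourhood of $0$ and extend it by analyticity wherever both sides are finite.
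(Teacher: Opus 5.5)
Your argument is correct and is the paper's own proof: condition on $\mathcal{F}_{t_0}$, apply Proposition~\ref{prop:MGFWishart} to the inner conditional MGF started at $x_{t_0}$, then apply it again to the outer expectation with $\tilde\theta=\theta_0+a(t_1-t_0,\theta_1)$; your symmetry and smallness checks add rigor the paper omits. You are also right about the constant term, and it is an error in the paper rather than shorthand: the paper's last line replaces $b(t_0,\tilde\theta)$ by $b(t_0,\theta_0)$, which fails in general (at $\theta_0=0$ it drops the term $b(t_0,a(t_1-t_0,\theta_1))$ in the flow identity $b(t_1,\theta_1)=b(t_1-t_0,\theta_1)+b(t_0,a(t_1-t_0,\theta_1))$, so it does not recover the marginal MGF of $x_{t_1}$), so you should state the formula with $b(t_0,\theta_0+a(t_1-t_0,\theta_1))$ outright; the same omission carries into the derivative formula of Proposition~\ref{prop:MGF2WishartDerivative2}.
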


\begin{proof}
	Using the law of iterated expectation, we have
	\begin{align*}
		\Phi(t_0,t_1,\theta_0,\theta_1, x_0) &= \mathbb{E}\left[ \exp\left( \textup{tr}[\theta_0 x_{t_0}] + \textup{tr}[\theta_1 x_{t_1}] \right)\right]\\
		&= \mathbb{E}\big[ \mathbb{E}[ \exp\left( \textup{tr}[\theta_0 x_{t_0}] + \textup{tr}[\theta_1 x_{t_1}] \right)  |\mathcal{F}_{t_0}]\big]\\
		&= \mathbb{E}\left[\exp \left(\textup{tr}(\theta_0x_{t_0})\right) \cdot \exp(\textup{tr}[a(t_1 - t_0, \theta_1)x_{t_0}] + b(t_{1} - t_0, \theta_1)  )
		\right]\\
		&= \exp\left( \textup{tr} [a(t_0,\theta_0 + a(t_1-t_0,\theta_1))  x_0] +  b(t_0,\theta_0) +  b(t_1-t_0,\theta_1)\right).
		\end{align*}
\end{proof}


A natural extension of Proposition~\ref{prop:MGFWishart2Dates} that we will need is the following MGF.

\begin{proposition}\label{prop:MGF2WishartDerivative2} 
	Given a Wishart process $(x_t)_{t\geq 0}$, $\nu_0, \nu_1 \in \mathbb{R}$ and $\bar{\theta}_0, \theta_0, \theta_1\in \mathbb{S}_n$ (that do not depend on $\nu_0$, $\nu_1$). Define the MGF
	\begin{align}
		\tilde{\Phi}(t_0,t_1,\bar{\theta}_0,\nu_0,\nu_1,\theta_0,\theta_1,x_0)	&:= \mathbb{E}\left[\exp\left(\textup{tr}[\bar{\theta}_0x_{t_0}] + \textup{tr}[(\nu_0\theta_0+\nu_1 \theta_1)x_{t_1}]\right) \right] \nonumber\\
		&= \Phi(t_0,t_1,\bar{\theta}_0,(\nu_0\theta_0+\nu_1 \theta_1),x_0)\,, \label{eq:barPhiTime2}
	\end{align}
	with $\Phi(.,.,.,.,.)$ as in \eqref{eq:MGF2}. Then its derivative with respect to \(\nu_0\) and \(\nu_1\) is 
	
	\begin{align}
		\tilde{\Phi}_{\nu_0\nu_1}&(t_0,t_1,\bar{\theta}_0,\theta_0,\theta_1,x_0) :=\partial_{\nu_0\nu_1}^2 \tilde{\Phi}(t_0,t_1,\bar{\theta}_0,\theta_0,\nu_0,\nu_1,\theta_1,x_0)|_{\nu_0 = 0,\nu_1 = 0}\label{eq:MGF2WishartDerivativeTime2} \\
	&=\left(\textup{tr}[\partial_{\nu_0\nu_1}^2a(t_0,\bar{\theta}_0 +a(t_1-t_0,\nu_0\theta_0 + \nu_1\theta_1)) x_0] + \partial_{\nu_0\nu_1}^2b(t_1-t_0,\nu_0\theta_0 + \nu_1\theta_1) \right)|_{\nu_0 = 0,\nu_1 = 0} \Phi(t_0,\bar{\theta}_0,x_0)\nonumber \\
	&+\left(\textup{tr}[\partial_{\nu_0}a(t_0, \bar{\theta}_0 + a(t_1-t_0,\nu_0\theta_0 + \nu_1\theta_1)) x_0] + \partial_{\nu_0}b(t_1-t_0,\nu_0\theta_0 + \nu_1\theta_1) \right) \nonumber \\
	&\times \left(\textup{tr}[\partial_{\nu_1}a(t_0, \bar{\theta}_0 +a(t_1-t_0,\nu_0\theta_0 + \nu_1\theta_1)) x_0] + \partial_{\nu_1}b(t_1-t_0,\nu_0\theta_0 + \nu_1\theta_1) \right)|_{\nu_0 = 0,\nu_1 = 0}\Phi(t_0,\bar{\theta}_0,x_0) \nonumber
	\end{align}
	
that involves $\partial_{\nu_0}a(t_0,\bar{\theta}_0 + a(t_1-t_0,\nu_0\theta_0+  \nu_1\theta_1))$, $\partial_{\nu_1}a(t_0,\bar{\theta}_0 + a(t_1-t_0,\nu_0\theta_0+  \nu_1\theta_1))$ and $\partial_{\nu_0\nu_1}^2a(t_0,\bar{\theta}_0 + a(t_1-t_0,\nu_0\theta_0+  \nu_1\theta_1))$, which have to be evaluated for $\nu_0=0$ and $\nu_1=0$, with $a(.,.)$ given by \eqref{eq:Asimple}. These functions can be explicitly computed.\\

This derivative also involves the functions $\partial_{\nu_0}b(t_1-t_0,\nu_0\theta_0 + \nu_1\theta_1)$, $ \partial_{\nu_1}b(t_1-t_0,\nu_0\theta_0 + \nu_1\theta_1)$ and $\partial_{\nu_0\nu_1}^2b(t_1-t_0,\nu_0\theta_0 + \nu_1\theta_1)$. If we further assume that $\omega=\beta \sigma^2$ with $\beta \in \mathbb{R}$ and $\beta \geq n+1$ then they are equal for $\nu_0=0$ and $\nu_1=0$ to $\frac{\beta}{2}\textup{tr}[2\varsigma_{t_1-t_0} \theta_0]$, $\frac{\beta}{2}\textup{tr}[2\varsigma_{t_1-t_0} \theta_1]$ and $2\beta\textup{tr}[\varsigma_{t_1-t_0}\theta_1\varsigma_{t_1-t_0} \theta_0]$, respectively.
\end{proposition}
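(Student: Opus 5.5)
The plan is to write $\tilde{\Phi}$ as the exponential of a smooth scalar function of $(\nu_0,\nu_1)$ and then apply the elementary identity $\partial^2_{\nu_0\nu_1}e^{g}=(g_{\nu_0\nu_1}+g_{\nu_0}g_{\nu_1})e^{g}$. By Proposition~\ref{prop:MGFWishart2Dates} with $\theta_1$ replaced by $\nu_0\theta_0+\nu_1\theta_1$, we have $\tilde{\Phi}=\exp(g(\nu_0,\nu_1))$, where
\begin{align*}
g(\nu_0,\nu_1)=\textup{tr}\big[a\big(t_0,\bar{\theta}_0+a(t_1-t_0,\nu_0\theta_0+\nu_1\theta_1)\big)x_0\big]+b(t_0,\bar{\theta}_0)+b(t_1-t_0,\nu_0\theta_0+\nu_1\theta_1),
\end{align*}
following the form of \eqref{eq:MGF2}. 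Well-definedness for small $|\nu_0|,|\nu_1|$ and small $\|\bar{\theta}_0\|$ follows from the invertibility remark after Proposition~\ref{prop:MGFWishart}. Smoothness in $(\nu_0,\nu_1)$ follows from \eqref{eq:Asimple}, since matrix inversion is analytic near an invertible matrix (\citet{Lax2007}).

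The first step evaluates everything at $\nu_0=\nu_1=0$. By \eqref{eq:Asimple}, $a(\tau,0)=0$, and $b(\tau,0)=0$ by \eqref{eq:odeb}. Hence $e^{g(0,0)}=\exp(\textup{tr}[a(t_0,\bar{\theta}_0)x_0]+b(t_0,\bar{\theta}_0))=\Phi(t_0,\bar{\theta}_0,x_0)$, which is the common factor in \eqref{eq:MGF2WishartDerivativeTime2}. The term $b(t_0,\bar{\theta}_0)$ is constant in $\nu$, so it disappears from all derivatives. Plugging $g_{\nu_0}$, $g_{\nu_1}$ and $g_{\nu_0\nu_1}$ into the identity gives exactly the two lines of the claimed formula.

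(If one instead keeps the more precise $b(t_0,\bar{\theta}_0+a(t_1-t_0,\cdot))$ coming from the iterated-expectation proof, the same argument applies. Its $\nu$-derivatives simply add to the bracketed terms by the chain rule, and I would flag this.)

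The second step establishes the explicit computability of the $a$-derivatives by the chain rule. For the inner map $\psi(\nu_0,\nu_1):=a(\tau,\nu_0\theta_0+\nu_1\theta_1)$ with $\tau=t_1-t_0$, I use $\partial(M^{-1})=-M^{-1}(\partial M)M^{-1}$. This gives $\partial_{\nu_i}\psi|_0=e^{\tau m^\top}\theta_ie^{\tau m}$ and
\begin{align*}
\partial^2_{\nu_0\nu_1}\psi|_0=e^{\tau m^\top}\big(2\theta_0\varsigma_\tau\theta_1+2\theta_1\varsigma_\tau\theta_0\big)e^{\tau m}.
\end{align*}
For the outer map, Proposition~\ref{prop:MGFWishartDerivative} (formula \eqref{eq:Anu}) gives the first directional derivative of $a(t_0,\bar{\theta}_0+\cdot)$ in a direction $h$. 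The mixed derivative is then
\begin{align*}
Da(t_0,\bar{\theta}_0)[\partial^2_{\nu_0\nu_1}\psi]+D^2a(t_0,\bar{\theta}_0)[\partial_{\nu_0}\psi,\partial_{\nu_1}\psi].
\end{align*}
The second Fréchet derivative $D^2a$ is obtained by differentiating \eqref{eq:Anu} once more with the same inverse rule. All these terms are finite products of $e^{t m}$, $(I_n-2\bar{\theta}_0\varsigma_{t_0})^{-1}$, $\varsigma$ and the $\theta$'s, hence explicit.

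The last step handles the Bru case $\omega=\beta\sigma^2$. By \eqref{eq:ExpB}, $b(\tau,\theta)=-\frac{\beta}{2}\log\det(I_n-2\varsigma_\tau\theta)$. Jacobi's formula gives
\begin{align*}
\partial_{\nu_1}b=\beta\,\textup{tr}[(I_n-2\varsigma_\tau\theta)^{-1}\varsigma_\tau\theta_1],
\end{align*}
which at $\nu=0$ equals $\frac{\beta}{2}\textup{tr}[2\varsigma_\tau\theta_1]$, and symmetrically for $\nu_0$. Differentiating once more in $\nu_0$ with the inverse rule gives $\beta\,\textup{tr}[(I_n-2\varsigma_\tau\theta)^{-1}2\varsigma_\tau\theta_0(I_n-2\varsigma_\tau\theta)^{-1}\varsigma_\tau\theta_1]$, which at $0$ equals $2\beta\,\textup{tr}[\varsigma_\tau\theta_1\varsigma_\tau\theta_0]$ by cyclicity of the trace.

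The main obstacle is the bookkeeping of the second derivative of the composed $a$. In particular, the second-order term of the outer inverse must be handled carefully; I would organise it through the bilinear form $D^2a$ rather than expanding directly.
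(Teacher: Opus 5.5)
Your argument is correct and is the computation the paper leaves implicit; the paper states this proposition without proof. Write $\tilde{\Phi}=e^{g}$ from \eqref{eq:MGF2}, use $\partial^2_{\nu_0\nu_1}e^{g}=(g_{\nu_0\nu_1}+g_{\nu_0}g_{\nu_1})e^{g}$, apply the chain rule to the composed $a$, and use Jacobi's formula for $b$ in the Bru case. Your explicit inner derivatives and the three $b$-values all check out.

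Your parenthetical flag is also right, and it should be stated as a correction to the paper, not as a variant. The iterated expectation in the proof of Proposition~\ref{prop:MGFWishart2Dates} actually produces $b(t_0,\bar{\theta}_0+a(t_1-t_0,\nu_0\theta_0+\nu_1\theta_1))$, not $b(t_0,\bar{\theta}_0)$. Setting $\bar{\theta}_0=0$ shows the paper's version contradicts the semigroup identity $b(t_1,\theta)=b(t_1-t_0,\theta)+b(t_0,a(t_1-t_0,\theta))$. So the displayed formula, and hence the proposition as stated, omits the $\nu$-derivatives of this extra term. In the Bru case these are explicit; at first order they equal $\beta\,\textup{tr}[(I_n-2\varsigma_{t_0}\bar{\theta}_0)^{-1}\varsigma_{t_0}e^{(t_1-t_0)m^\top}\theta_i e^{(t_1-t_0)m}]$, which is nonzero in general.
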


The previous derivative function can be extended to a higher order by performing additional derivations. The following corollary is more for notational purposes.

\begin{corollary}\label{coro:MGF2WishartQDerivativeTime2} 
Assume that $\omega=\beta \sigma^2$ with $\beta \in \mathbb{R}$ and $\beta \geq n+1$ and let $\tilde{\Phi}(.,.,.,.)$ be given by \eqref{eq:barPhiTime2}. Then the $(p\text{-th},q\text{-th})$ order derivative, with $(p,q)\in \mathbb{N}^2$, of $\tilde{\Phi}(t_0,t_1,\bar{\theta}_0,\nu_0,\nu_1,\theta_0,\theta_1,x_0)$ with respect to $(\nu_0,\nu_1)$ (evaluated for $\nu_0=0$ and $\nu_1=0$) is 
	\begin{align}
		\tilde{\Phi}_{\nu_0^p\nu_1^q}(t_0,t_1,\bar{\theta}_0,\theta_0,\theta_1 ,x_0) &:=\left.\frac{\partial^{p+q}\tilde{\Phi}(t_0,t_1,\bar{\theta}_0,\nu_0,\nu_1,\theta_0,\theta_1,x_0)}{\partial \nu_0^p\partial \nu_1^q}\right|_{\nu_0 = 0,\nu_1=0}.\label{eq:MGF2WishartQDerivativeTime2} 
	\end{align}
\end{corollary}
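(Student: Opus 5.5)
The corollary is essentially notational: it asserts that the mixed derivative $\tilde{\Phi}_{\nu_0^p\nu_1^q}$ exists, is well defined at $\nu_0=\nu_1=0$, and can be computed in closed form. The plan is to show that $(\nu_0,\nu_1)\mapsto\tilde{\Phi}$ is real-analytic, hence $C^\infty$, in a neighbourhood of the origin, and to give a recursion for its derivatives.

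\textbf{Step 1: reduce to a composition of matrix functions.} By Proposition~\ref{prop:MGF2WishartDerivative2} and \eqref{eq:MGF2},
\[
\tilde{\Phi}=\exp\bigl(g(\nu_0,\nu_1)\bigr),
\]
with
\[
g=\textup{tr}\bigl[a\bigl(t_0,\bar{\theta}_0+a(t_1-t_0,\nu_0\theta_0+\nu_1\theta_1)\bigr)x_0\bigr]+b(t_0,\bar{\theta}_0)+b(t_1-t_0,\nu_0\theta_0+\nu_1\theta_1).
\]

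\textbf{Step 2: smoothness of the inner pieces.} Under $\omega=\beta\sigma^2$, Proposition~\ref{prop:MGFWishart} gives explicit forms. The function $a(t,\theta)=e^{tm^\top}(I_n-2\theta\varsigma_t)^{-1}\theta e^{tm}$ is rational in the entries of $\theta$. The function $b(t,\theta)=-\frac{\beta}{2}\log\det(I_n-2\varsigma_t\theta)$ is analytic wherever the determinant is positive. Matrix inversion is analytic on invertible matrices, with derivative
\[
\partial(I-M)^{-1}=(I-M)^{-1}(\partial M)(I-M)^{-1}.
\]
As in the remark after Proposition~\ref{prop:MGFWishartDerivative}, we take $\|\bar\theta_0\|$ small enough that $I_n-2\bar\theta_0\varsigma_{t_0}$ is invertible. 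Then there is $\epsilon>0$ such that for $|\nu_0|,|\nu_1|<\epsilon$ both $I_n-2(\nu_0\theta_0+\nu_1\theta_1)\varsigma_{t_1-t_0}$ and $I_n-2(\bar\theta_0+a(t_1-t_0,\cdot))\varsigma_{t_0}$ remain invertible, with positive determinant by continuity from the origin. This works because $a(t_1-t_0,0)=0$ and the set of invertible matrices is open. Compositions of analytic maps are analytic, so $g$, and therefore $\tilde{\Phi}=e^g$, is $C^\infty$ on $]-\epsilon,\epsilon[^2$, and the derivative in \eqref{eq:MGF2WishartQDerivativeTime2} exists.

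\textbf{Step 3: explicit computation.} Derivatives of $e^g$ follow from the bivariate Fa\`a di Bruno formula. More practically, use the recursion obtained from $\partial_{\nu_0}\tilde\Phi=(\partial_{\nu_0}g)\tilde\Phi$ together with the Leibniz rule:
\[
\tilde\Phi_{\nu_0^{p}\nu_1^q}=\sum_{i\le p-1,\,j\le q}\binom{p-1}{i}\binom{q}{j}\,g_{\nu_0^{p-i}\nu_1^{q-j}}\,\tilde\Phi_{\nu_0^{i}\nu_1^{j}}.
\]
The derivatives of $g$ come from repeatedly differentiating $(I-M(\nu))^{-1}$ with the formula above. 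Since $M$ is affine in $\nu$ for the inner $a$, and a matrix power series for the outer $a$, each derivative is a finite sum of products of resolvents and constant matrices. For the $b$ term, $\log\det(I-2\varsigma\theta)$ expands as $-\sum_k \frac{2^k}{k}\textup{tr}[(\varsigma\theta)^k]$, which gives closed-form trace expressions consistent with those stated in Proposition~\ref{prop:MGF2WishartDerivative2}.

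\textbf{Main obstacle.} The difficulty is the bookkeeping for the nested composition $a(t_0,\bar\theta_0+a(t_1-t_0,\cdot))$, where the higher mixed derivatives of the outer $a$ require a chain rule in matrix arguments. I would handle it by Taylor-expanding the inner $a$ in $\nu$ to order $p+q$ and substituting into the Neumann series of the outer resolvent around $\bar\theta_0$. The coefficients of $\nu_0^p\nu_1^q$ then give the required derivatives.
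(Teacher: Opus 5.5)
The paper gives no argument here. The corollary only names the mixed derivative at the origin, and the authors call it ``more for notational purposes''. Your analyticity argument plus Leibniz recursion is therefore a more substantive route: it establishes that the derivative exists and gives an algorithm for computing it, both of which the paper leaves implicit. The recursion is correct for $p\geq 1$; for $p=0$ use the analogous recursion in $\nu_1$. The smoothness argument, via openness of the invertible matrices and $a(t_1-t_0,0)=0$, is sound.

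However, Step~1 inherits an error from \eqref{eq:MGF2}. Conditioning on $\mathcal{F}_{t_0}$ actually gives
\[
\tilde{\Phi}=\exp\Bigl(\textup{tr}\bigl[a(t_0,\Theta_\nu)x_0\bigr]+b(t_0,\Theta_\nu)+b(t_1-t_0,\nu_0\theta_0+\nu_1\theta_1)\Bigr),\qquad \Theta_\nu:=\bar{\theta}_0+a(t_1-t_0,\nu_0\theta_0+\nu_1\theta_1).
\]
So the middle term of $g$ is $b(t_0,\Theta_\nu)$, which depends on $\nu$, not the constant $b(t_0,\bar{\theta}_0)$.

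You can check this at $\bar{\theta}_0=0$. Your $g$ yields $\partial_{\nu_0}\tilde{\Phi}|_{0}=\textup{tr}[e^{t_1m^\top}\theta_0e^{t_1m}x_0]+\beta\,\textup{tr}[\varsigma_{t_1-t_0}\theta_0]$. But $\mathbb{E}[\textup{tr}[\theta_0x_{t_1}]]$ contains $\beta\,\textup{tr}[\varsigma_{t_1}\theta_0]$, and $\varsigma_{t_1}=\varsigma_{t_1-t_0}+e^{(t_1-t_0)m}\varsigma_{t_0}e^{(t_1-t_0)m^\top}$.

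Existence is unaffected. The term $-\frac{\beta}{2}\log\det(I_n-2\varsigma_{t_0}\Theta_\nu)$ is analytic near $\nu=0$ by the same invertibility and positivity argument you already give. What breaks is the explicit part: the derivatives in Step~3, and your claimed agreement with the $b$-terms of Proposition~\ref{prop:MGF2WishartDerivative2}, omit every contribution of $\partial_\nu b(t_0,\Theta_\nu)$. For example, $g_{\nu_0}(0)$ is missing the term $\beta\,\textup{tr}[(I_n-2\varsigma_{t_0}\bar{\theta}_0)^{-1}\varsigma_{t_0}e^{(t_1-t_0)m^\top}\theta_0e^{(t_1-t_0)m}]$.

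A route that avoids the closed form altogether is to differentiate under the expectation. This is justified by dominated convergence, since the two-date MGF is finite on a neighbourhood of $\nu=0$. It gives $\tilde{\Phi}_{\nu_0^p\nu_1^q}=\mathbb{E}\bigl[\textup{tr}[\theta_0x_{t_1}]^p\,\textup{tr}[\theta_1x_{t_1}]^q\,e^{\textup{tr}[\bar{\theta}_0x_{t_0}]}\bigr]$, which is exactly the identity Proposition~\ref{prop:ConditionalMomentsOfXTwoDates2} uses. It also covers complex $\bar{\theta}_0=(\alpha_0-\mathrm{i}u_0)\theta_0$ of large modulus, which your small-$\|\bar{\theta}_0\|$ assumption excludes.
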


\section{Conditional tail risk measures}\label{sec:ConditionaTail}
In this section, we begin by reformulating the key quantities required to compute conditional higher moments in terms of their Fourier transform representations, as outlined in Section~\ref{sec:MultiFourierTransform}. While these Fourier transform representations of the  tail conditional expectations may seem standard, to the best of our knowledge, they have not been previously reported in the literature {\color{black}in the multivariate and/or the higher order moments cases.}\myfootnote{{\color{black} A Fourier transform representation of the \(\TCE\) in the univariate case appears in \citet{DufresneGarridoMorales2009}.}} Even if they are of interest, these Fourier transform representations involve multidimensional integrations that are numerically demanding. As such, they do not provide an interesting alternative to conditional higher moments computed using the density of the state variable, which is the standard approach used in the literature. To overcome this difficulty, in Section~\ref{sec:OneFourierTransform} we introduce one-dimensional Fourier transform representations for these risk measures, thereby significantly reducing the computational cost. However, this comes at the cost of computing the (higher) derivative of the MGF of the Wishart process, but this can be done as shown in the previous section. Note that these results apply not only to the Wishart process, but to any distribution with a known MGF,\myfootnote{{\color{black}Throughout, we treat the MGF as a function of a complex parameter and work on the domain where it is finite. On the imaginary axis $z=\mathrm{i}v$ this coincides with the characteristic function $\varphi_Y(v)=\mathbb{E}\!\left[e^{\mathrm{i}vY}\right]$, which always exists. Unless stated otherwise, ``MGF'' refers to this complex-argument version.}} so these results have broad applicability that goes well beyond the particular case of the Wishart process that we use here. We begin by presenting the Fourier transform representations of the risk measures in a very general way in Section~\ref{sec:MultiFourierTransform} and Section~\ref{sec:OneFourierTransform}, and then illustrate them in the particular case of the Wishart process in Section~\ref{sec:WishartRiskMeasures}. We end this section by providing a comparison, in terms of computational cost, of our approach with the standard strategy used in the literature.

\subsection{Multidimensional Fourier transform representations for risk measures}\label{sec:MultiFourierTransform}

When considering only one loss or a scalar loss, then the standard risk measures are the higher moments conditional on the loss being above a certain threshold. By Bayes' theorem we have the following expression for the moment of a variable $Y$ conditional on the fact that this variable is above a certain threshold:
\begin{align}
\mathbb{E}[Y^p | Y>y_*]=\frac{\mathbb{E}[Y^p \mathbf{1}_{\lbrace Y>y_* \rbrace}]}{\mathbb{E}[\mathbf{1}_{\lbrace Y>y_*\rbrace}]}, \label{eq:ConditionalMomentYp}
\end{align}
which for $p=1$ is commonly denoted in the literature as $\TCE_Y(y_*)$ and is the tail conditional expectation. Thus, the evaluation of \eqref{eq:ConditionalMomentYp} amounts to the calculation of $\mathbb{E}[Y^p \mathbf{1}_{\lbrace Y>y_* \rbrace}]$, which is commonly done using the density of the variable \(Y\). We reformulate this expectation in terms of the MGF of this variable, thus avoiding the use of its density. The proposition below shows how to calculate this.

\begin{proposition}\label{prop:ExpectedShortfallFourierTransform}
Denote $Y$ a positive (scalar) random variable and denote by $\Phi_Y(v)=\mathbb{E}\left[e^{vY}\right]$ for $v\in \mathbb{R}$ its MGF. Let $y_*\in \mathbb{R}_+$,  $\alpha_1\in \mathbb{R}_+$ and assume $u_1 \in \mathbb{R} \to \frac{\Phi_Y(\alpha_1-\mathrm{i}u_1)}{(\alpha_1-\mathrm{i}u_1)^{j+1}}$ (with $\mathrm{i}=\sqrt{-1}$) is integrable for $j=0,\ldots, p$ with $p \in \mathbb{N}$. Then 
\begin{align}
\mathbb{E}[Y^p \mathbf{1}_{\lbrace Y>y_* \rbrace}]=\int_{0}^{+\infty}\sum_{j=0}^p \binom{p}{j}  \frac{y_*^{p-j} c_j}{\pi }\Re\left( \frac{e^{-(\alpha_1 -\mathrm{i}u_1)y_* }\Phi_Y(\alpha_1-\mathrm{i} u_1)}{(\alpha_1-\mathrm{i}u_1)^{j+1}}\right)du_1,\label{eq:MomentYp}
\end{align}
with \(c_j=\Gamma(j+1)\) and $\Gamma(.)$ the gamma function.
\end{proposition}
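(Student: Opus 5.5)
Write $Y^p=((Y-y_*)+y_*)^p$ and expand with the binomial formula:
\[
\mathbb{E}[Y^p\mathbf{1}_{\{Y>y_*\}}]=\sum_{j=0}^p\binom{p}{j}y_*^{p-j}\,\mathbb{E}\big[(Y-y_*)^j\mathbf{1}_{\{Y>y_*\}}\big].
\]
It then suffices to show, for each $j=0,\ldots,p$,
\[
\mathbb{E}\big[(Y-y_*)^j\mathbf{1}_{\{Y>y_*\}}\big]=\frac{c_j}{\pi}\int_0^{+\infty}\Re\left(\frac{e^{-(\alpha_1-\mathrm{i}u_1)y_*}\Phi_Y(\alpha_1-\mathrm{i}u_1)}{(\alpha_1-\mathrm{i}u_1)^{j+1}}\right)du_1 .
\]
Summing these identities and exchanging the finite sum with the integral gives \eqref{eq:MomentYp}.

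For fixed $j$, set $g_j(y)=(y-y_*)^j\mathbf{1}_{\{y>y_*\}}$, so that $g_j(y)e^{-\alpha_1 y}$ is integrable on $\mathbb{R}$ when $\alpha_1>0$. Its Fourier-type transform is obtained from the Gamma integral:
\[
\int_{\mathbb{R}}e^{(\alpha_1-\mathrm{i}u_1)(-y)}g_j(y)\,dy=e^{-(\alpha_1-\mathrm{i}u_1)y_*}\int_0^\infty s^je^{-(\alpha_1-\mathrm{i}u_1)s}ds=\frac{c_j\,e^{-(\alpha_1-\mathrm{i}u_1)y_*}}{(\alpha_1-\mathrm{i}u_1)^{j+1}} .
\]
Use the principal branch; the identity holds because $\Re(\alpha_1-\mathrm{i}u_1)=\alpha_1>0$, by analytic continuation from real arguments.

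Fourier inversion, i.e.\ Bromwich inversion along the line $\Re z=\alpha_1$, then gives, at continuity points $y\neq y_*$ (and for $j\ge1$ everywhere),
\[
g_j(y)=\frac{1}{2\pi}\int_{\mathbb{R}}\frac{c_j\,e^{(\alpha_1-\mathrm{i}u_1)(y-y_*)}}{(\alpha_1-\mathrm{i}u_1)^{j+1}}\,du_1 .
\]
Substituting $y=Y$, taking expectations and applying Fubini produces $\frac{c_j}{2\pi}\int_{\mathbb{R}}\frac{e^{-(\alpha_1-\mathrm{i}u_1)y_*}\Phi_Y(\alpha_1-\mathrm{i}u_1)}{(\alpha_1-\mathrm{i}u_1)^{j+1}}du_1$. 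The integrand at $-u_1$ is the complex conjugate of the integrand at $u_1$, because $\Phi_Y(\overline{z})=\overline{\Phi_Y(z)}$ for a real random variable. So the integral over $\mathbb{R}$ equals twice the real part of the integral over $\mathbb{R}_+$, which gives the factor $\frac{1}{\pi}$.

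The main obstacle is justifying the Fubini exchange, especially for $j=0$. There $(\alpha_1-\mathrm{i}u_1)^{-1}$ is not absolutely integrable, so the kernel is only conditionally convergent, and $g_0$ jumps at $y_*$.

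\emph{Case $j\ge1$.} Here $|e^{(\alpha_1-\mathrm{i}u_1)(Y-y_*)}(\alpha_1-\mathrm{i}u_1)^{-j-1}|=e^{\alpha_1(Y-y_*)}|\alpha_1-\mathrm{i}u_1|^{-j-1}$. Its expectation is $e^{-\alpha_1y_*}\Phi_Y(\alpha_1)|\alpha_1-\mathrm{i}u_1|^{-j-1}$, which is integrable in $u_1$ whenever $\Phi_Y(\alpha_1)<\infty$. The latter is implicit in the integrability assumption, since $\Phi_Y(\alpha_1-\mathrm{i}u_1)$ must be defined.

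\emph{Case $j=0$.} Do not apply Fubini pointwise. Instead, use the Parseval/Plancherel form: with $h(y)=e^{-\alpha_1 y}g_0(y)\in L^1\cap L^2$ and the measure $e^{\alpha_1 y}\,\mathbb{P}_Y(dy)$, the stated assumption that $\Phi_Y(\alpha_1-\mathrm{i}u_1)/(\alpha_1-\mathrm{i}u_1)$ is integrable is exactly what makes the product of the two transforms integrable. Alternatively, truncate the $u_1$-integral to $[-U,U]$, where Fubini is trivial, and let $U\to\infty$. On the left the Dirichlet-kernel limit converges boundedly to $g_0$ off $\{Y=y_*\}$. On the right the limit exists by the integrability hypothesis. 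Assuming $\mathbb{P}(Y=y_*)=0$, which holds for the continuous distributions considered, dominated convergence then closes the argument.
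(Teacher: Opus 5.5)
Your proof is correct and follows the paper's route: binomial expansion in $Y-y_*$, the Bateman--Erd\'elyi (Laplace-inversion) representation of $(v)_+^j e^{-\alpha_1 v}$ via $\Gamma(j+1)(\alpha_1-\mathrm{i}u_1)^{-j-1}$, exchange with the expectation, and conjugate symmetry to reach $\frac{1}{\pi}\int_0^{+\infty}\Re(\cdot)\,du_1$. You are more careful than the paper, which applies Fubini even for $j=0$ (where $(\alpha_1-\mathrm{i}u_1)^{-1}\notin L^1$) and tacitly assumes a density; two small refinements are that your truncated integrals are dominated by $Ce^{\alpha_1(Y-y_*)}$ uniformly in $U$ rather than by a constant (still integrable since $\Phi_Y(\alpha_1)<\infty$), and that your extra assumption $\mathbb{P}(Y=y_*)=0$ is automatic, because integrability of $u_1\mapsto\Phi_Y(\alpha_1-\mathrm{i}u_1)/(\alpha_1-\mathrm{i}u_1)$ excludes atoms of the tilted law $e^{\alpha_1 y}\mathbb{P}_Y(dy)$ by Wiener's theorem.
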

\begin{proof}
We have 
\begin{align*}
\mathbb{E}[Y^p \mathbf{1}_{\lbrace Y>y_* \rbrace}]&=\sum_{j=0}^p \binom{p}{j} \mathbb{E}[(Y-y_*)_+^j]y_*^{p-j},
\end{align*}
and define $V=Y-y_*$ so that $\mathbb{E}[(Y-y_*)_+^j]=\mathbb{E}[(V)_+^j]$. \citet[3.2(3)]{BatemanEderlyi1954} {\color{black} is
\begin{align}
\frac{2\pi (v)_+^{j+1-1}e^{-\alpha_1 v}}{\Gamma(j+1)} = \int_{-\infty}^{+\infty}\frac{e^{-\mathrm{i} vu_1}}{(\alpha_1-\mathrm{i}u_1)^{j+1}}du_1,
\end{align}
with \(\alpha_1>0\). After reorganizing the terms, multiplying by the density of \(V\) that we denote \(g(v)\), integrating with respect to the variable \(v\) we reach
\begin{align}
\int_{-\infty}^{+\infty}(v)_+^{j} g(v) dv = \frac{\Gamma(j+1)}{2\pi} \int_{-\infty}^{+\infty}\int_{-\infty}^{+\infty}\frac{e^{\alpha_1 v}e^{-\mathrm{i} vu_1}}{(\alpha_1-\mathrm{i}u_1)^{j+1}}g(v)du_1dv,
\end{align}
and the above equality rewrites as
}
\begin{align*}
\mathbb{E}[(V)_+^j]&=\int_{-\infty}^{+\infty}\frac{\Gamma(j+1)}{2\pi }\int_{-\infty}^{+\infty}\frac{e^{\alpha_1v}e^{-\mathrm{i} u_1v}}{(\alpha_1-\mathrm{i} u_1)^{j+1}}du_1g(v) dv\\
&=\int_{-\infty}^{+\infty}\frac{\Gamma(j+1)}{2\pi }\frac{1}{(\alpha_1-\mathrm{i} u_1)^{j+1}} \int_{-\infty}^{+\infty}e^{(\alpha_1 -\mathrm{i} u_1)v}g(v)dvdu_1\\
&=\int_{-\infty}^{+\infty}\frac{\Gamma(j+1)}{2\pi }\frac{\mathbb{E}[e^{(\alpha_1 -\mathrm{i} u_1)V}] }{(\alpha_1-\mathrm{i} u_1)^{j+1}} du_1,
\end{align*}
{\color{black} after applying Fubini's theorem}. Since $\mathbb{E}[e^{(\alpha_1 -\mathrm{i} u_1)V}]=\mathbb{E}[e^{(\alpha_1 -\mathrm{i} u_1)Y}]e^{-(\alpha_1 -\mathrm{i} u_1)y_*}$ we obtain
\begin{align*}
\mathbb{E}[(Y-y_*)_+^j]=\frac{\Gamma(j+1)}{2\pi }\int_{-\infty}^{+\infty}\frac{e^{-(\alpha_1 -\mathrm{i} u_1)y_*} \Phi_Y(\alpha_1-\mathrm{i}u_1)}{(\alpha_1-\mathrm{i} u_1)^{j+1}}du_1,
\end{align*}
that leads to the result after considering the real part of the right hand side of the above equation since the left hand side is real. Note that to apply Fubini's theorem, it requires the integrability of $ u_1 \in \mathbb{R} \to \frac{\Phi_Y(\alpha_1-\mathrm{i} u_1)}{(\alpha_1-\mathrm{i} u_1)^{j+1}}$, {\color{black} which depends on the MGF of the variable \(Y\).}
\end{proof}
{\color{black} Note that the result applies to distributions defined not only on \(\mathbb{R}_+\), but also on \(\mathbb{R}\), provided the integrability condition is satisfied, and if we opted to use the characteristic function $\varphi_Y(v) = \mathbb{E}\left[e^{\mathrm{i}vY}\right]$ (which always exists for real $v$) instead of the MGF $\Phi_Y(v) = \mathbb{E}\left[e^{vY}\right]$, it would amount to changing $\Phi_Y(\alpha_1-\mathrm{i} u_1)$ in \eqref{eq:MomentYp} to $\varphi_Y(-\mathrm{i}\alpha_1-u_1)$. The proof remains unchanged. Since we use the MGF of the Wishart process (see, for example, \citet{GrasselliTebaldi2008} and \citet{Alfonsi2015}), we formulate our results in terms of the MGF instead of the characteristic function. However, we stress that all results of this work can easily accommodate the characteristic function instead.}\\

The above result already illustrates the advantage of using a Fourier representation of risk measures, since the above result applies for any \(p\text{-th}\) moment and any distribution with known MGF. But the result is far more interesting when considering the multivariate case. Indeed, if \((Y_1,\cdots,Y_n)\) is a vector of losses and \(S =\sum_{i=1}^nY_i\) is the sum of losses, only under very special loss distributions will the density of \(S\) be known in closed form, whereas the MGF \(\Phi_{(Y_1,\cdots,Y_n)}(z_1,\cdots,z_n)\) of \((Y_1,\cdots,Y_n)\) may be known, so that the MGF of \(S\) is then known since \(\Phi_S(z)=\Phi_{(Y_1,\cdots,Y_n)}(z,\cdots,z)\), and the aforementioned risk measures can be computed up to a one-dimensional integration.\footnote{If \((Y_1,\cdots,Y_n)\) is a random vector with a known density, only for certain specific distributions the density of the sum \(S=\sum_{1}^n Y_i\) is known in closed form (it is the convolution).} Further to this, in the multivariate case it is relevant to compute the moment \textcolor{blue}{of a product of given losses} conditional on one of the loss, often the sum of the losses but it does not have to be the sum, being above a certain threshold. Proceeding as for \eqref{eq:ConditionalMomentYp}, the problem amounts to calculate
\begin{align}
\mathbb{E}[Z^qY^p | Y>y_*]=\frac{\mathbb{E}[Z^qY^p \mathbf{1}_{\lbrace Y>y_* \rbrace}]}{\mathbb{E}[\mathbf{1}_{\lbrace Y>y_*\rbrace}]}, \label{eq:ConditionalMomentZqYp}
\end{align}
\textcolor{blue}{where $Z$ is one element of $(Y_1,\ldots,Y_n)$ or a linear combination of the elements of $(Y_1,\ldots,Y_n)$} and therefore to determine the numerator of the above equation. Note that in \eqref{eq:ConditionalMomentZqYp}, \(Y\) can be either an individual loss, \textit{i.e.}, \(Y_i\) for \(i=1,\cdots, n\), or a sum of losses \(S =\sum_{i=1}^nY_i\). It is presented in the following proposition.
 
\begin{proposition}\label{prop:JointExpectedShortfallFourierTransform}
Let $(Y,Z)\in \mathbb{R}_+^2$ a 2-dimensional random vector and denote by $\Phi_{(Y,Z)}(v_1,v_2)=\mathbb{E}[e^{v_1Y+v_2Z}]$ for $(v_1,v_2)\in \mathbb{R}^2$ its MGF. Let $y_*\in \mathbb{R}_+$,  $(\alpha_1,\alpha_2)\in \mathbb{R}_+^2$. Assume suitable integrability conditions of the functions  $(u_1,u_2) \in \mathbb{R}^2 \to  \frac{\Phi_{(Y,Z)}(\alpha_1-\mathrm{i} u_1,\alpha_2-\mathrm{i} u_2)}{(\alpha_1- \mathrm{i} u_1)^{j+1}(\alpha_2-\mathrm{i} u_2)^{p+1}} $ for $j=0,\ldots, p$. Then 
\begin{align}
\mathbb{E}[Z^q Y^p\mathbf{1}_{\lbrace Y>y_* \rbrace }]= \int_{0}^{+\infty}\int_{0}^{+\infty}\sum_{j=0}^p \binom{p}{j}\frac{y_*^{p-j}c_jc_q}{\pi^2 } \Re\left(\frac{e^{-(\alpha_1 -\mathrm{i} u_1)y_*} \Phi_{(Y,Z)}(\alpha_1-\mathrm{i} u_1,\alpha_2-\mathrm{i} u_2)}{(\alpha_1-\mathrm{i} u_1)^{j+1}(\alpha_2-\mathrm{i} u_2)^{q+1}}\right)du_1du_2, \label{eq:ExpectationZqYp}
\end{align}
with \(c_j=\Gamma(j+1)\) and \(c_q=\Gamma(q+1)\).
\end{proposition}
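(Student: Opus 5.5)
The plan mirrors the proof of Proposition~\ref{prop:ExpectedShortfallFourierTransform}, now applied in two variables. First, expand $Y^p$ around the threshold on the event $\lbrace Y>y_*\rbrace$:
\begin{align*}
\mathbb{E}[Z^qY^p\mathbf{1}_{\lbrace Y>y_*\rbrace}]=\sum_{j=0}^p\binom{p}{j}y_*^{p-j}\,\mathbb{E}[Z^q(Y-y_*)_+^j].
\end{align*}
Note that $\mathbf{1}_{\lbrace Y>y_*\rbrace}(Y-y_*)^j=(Y-y_*)_+^j$ for $j\geq 1$. For $j=0$ we interpret $(V)_+^0$ as $\mathbf{1}_{\lbrace V>0\rbrace}$, exactly as in the scalar case. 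Since $Z\geq 0$, we can write $Z^q=(Z)_+^q$. Each summand is therefore the expectation of a product of two ``power-positive-part'' functions.

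Second, represent each factor with the Bateman--Erd\'elyi identity \citet[3.2(3)]{BatemanEderlyi1954} used before. With $V=Y-y_*$,
\begin{align*}
(V)_+^j=\frac{\Gamma(j+1)}{2\pi}\int_{-\infty}^{+\infty}\frac{e^{(\alpha_1-\mathrm{i}u_1)V}}{(\alpha_1-\mathrm{i}u_1)^{j+1}}du_1,\qquad
(Z)_+^q=\frac{\Gamma(q+1)}{2\pi}\int_{-\infty}^{+\infty}\frac{e^{(\alpha_2-\mathrm{i}u_2)Z}}{(\alpha_2-\mathrm{i}u_2)^{q+1}}du_2,
\end{align*}
with $\alpha_1,\alpha_2>0$. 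Multiply the two representations and take expectations with respect to the joint law of $(Y,Z)$; no density is needed, since only the law is used. Then apply Fubini to exchange the expectation with the double integral in $(u_1,u_2)$. This gives
\begin{align*}
\mathbb{E}[Z^q(Y-y_*)_+^j]=\frac{c_jc_q}{4\pi^2}\int_{\mathbb{R}^2}\frac{e^{-(\alpha_1-\mathrm{i}u_1)y_*}\Phi_{(Y,Z)}(\alpha_1-\mathrm{i}u_1,\alpha_2-\mathrm{i}u_2)}{(\alpha_1-\mathrm{i}u_1)^{j+1}(\alpha_2-\mathrm{i}u_2)^{q+1}}du_1du_2 ,
\end{align*}
where the factor $e^{-(\alpha_1-\mathrm{i}u_1)y_*}$ comes out of the shift $V=Y-y_*$.

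Third, reduce the integral over $\mathbb{R}^2$ to $[0,\infty)^2$ and take the real part. Write $F(u_1,u_2)$ for the integrand. Because $(Y,Z)$ is real, $F(-u_1,-u_2)=\overline{F(u_1,u_2)}$. Pairing the quadrants $(+,+)$ with $(-,-)$ therefore gives $2\Re\int_{[0,\infty)^2}F$. Pairing $(+,-)$ with $(-,+)$ gives $2\Re\int_{[0,\infty)^2}F(u_1,-u_2)$, so these terms are not automatically the same as the $(+,+)$ contribution. To reach the stated form with the factor $1/\pi^2$ on $[0,\infty)^2$, I would reduce in two stages. First fold $u_2$: for fixed $u_1$, take the inner $u_2$-integral as the one-dimensional real-valued representation of $\mathbb{E}[Z^q e^{(\alpha_1-\mathrm{i}u_1)V}]$. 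Then fold $u_1$. I would check carefully whether the real part of the product can be taken termwise, or whether the formula should be read as folding the two half-lines iteratively. I expect this bookkeeping with the real part to be the delicate algebraic point.

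The main analytic obstacle is justifying Fubini. This requires
\begin{align*}
\int_{\mathbb{R}^2}\frac{\mathbb{E}[e^{\alpha_1 V+\alpha_2 Z}]}{|\alpha_1-\mathrm{i}u_1|^{j+1}|\alpha_2-\mathrm{i}u_2|^{q+1}}du_1du_2<\infty ,
\end{align*}
or, at least, integrability of the function of $(u_1,u_2)$ named in the statement. For $j=0$ or $q=0$ the kernel decays only like $1/|u|$, so the integral is merely conditionally convergent, as in the scalar case. I would handle this by appealing to the integrability assumption on $\Phi_{(Y,Z)}$ imposed in the statement, together with finiteness of $\Phi_{(Y,Z)}(\alpha_1,\alpha_2)$.
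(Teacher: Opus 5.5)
Your Steps 1 and 2 follow the paper's argument: binomial expansion on $\{Y>y_*\}$, the Bateman--Erd\'elyi 3.2(3) kernel for both $(V)_+^j$ and $Z^q$, and Fubini. They end at $\mathbb{E}[Z^q(Y-y_*)_+^j]=\frac{c_jc_q}{4\pi^2}\int_{\mathbb{R}^2}F\,du_1\,du_2$. When $j=0$ or $q=0$, that Fubini step really needs truncation in $u$ plus a limit. Integrability of $u\mapsto \Phi_{(Y,Z)}(\cdot)/(\cdot)$ does not give joint integrability in $(u,\omega)$, and the paper skips this as well.

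The genuine gap is Step 3, and your suspicion there is right: this is exactly where the paper's proof breaks. The paper finishes with one sentence (take the real part, since the left side is real). But reality of the total, together with $F(-u_1,-u_2)=\overline{F(u_1,u_2)}$, only yields
\[
\mathbb{E}[Z^q(Y-y_*)_+^j]=\frac{c_jc_q}{2\pi^2}\int_0^{+\infty}\!\!\int_0^{+\infty}\Re\bigl[F(u_1,u_2)+F(u_1,-u_2)\bigr]\,du_1\,du_2 .
\]
This agrees with $\frac{c_jc_q}{\pi^2}\int_{[0,\infty)^2}\Re F$ only if the mixed quadrants happen to contribute like the $(+,+)$ one. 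Your two-stage fold cannot supply this. For fixed $u_1$, $\int_{\mathbb{R}}F(u_1,u_2)\,du_2=\frac{2\pi}{c_q}\,\mathbb{E}[Z^qe^{(\alpha_1-\mathrm{i}u_1)V}]/(\alpha_1-\mathrm{i}u_1)^{j+1}$ is complex, and $F(u_1,-u_2)\neq\overline{F(u_1,u_2)}$. So there is no half-line reduction in $u_2$ alone.

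In fact the displayed identity is false, so no bookkeeping will rescue it. Suppose $Y$ and $Z$ are independent. Then $F(u_1,u_2)=f_1(u_1)f_2(u_2)$ with $f_1(u)=e^{-(\alpha_1-\mathrm{i}u)y_*}\Phi_Y(\alpha_1-\mathrm{i}u)/(\alpha_1-\mathrm{i}u)^{j+1}$ and $f_2(u)=\Phi_Z(\alpha_2-\mathrm{i}u)/(\alpha_2-\mathrm{i}u)^{q+1}$. Set $A=\int_0^\infty f_1$ and $B=\int_0^\infty f_2$. The true value is $\mathbb{E}[(V)_+^j]\,\mathbb{E}[Z^q]=\frac{c_jc_q}{\pi^2}\Re A\,\Re B$. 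The stated right-hand side gives $\frac{c_jc_q}{\pi^2}\Re(AB)$, which is off by $\frac{c_jc_q}{\pi^2}\Im A\,\Im B$.

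Concretely, take $Y,Z$ i.i.d.\ exponential with mean $1$, $y_*=0$, $p=q=0$, $\alpha_1=\alpha_2=1/4$. Every hypothesis holds, since $|f_k(u)|=O(u^{-2})$. A direct computation gives $A=B=\pi+\mathrm{i}\log 3$. The left side is $1$, while the right side is $1-(\log 3)^2/\pi^2\approx 0.878$. More generally, $\partial_{\alpha_2}\Im B=-\Phi_Z(\alpha_2)/\alpha_2^{q+1}<0$. So whenever $\Im A\neq 0$, the stated formula changes with the free parameter $\alpha_2$, although the left side does not.

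What your Steps 1--2 actually establish is the display above, equivalently $\frac{c_jc_q}{2\pi^2}\int_0^{\infty}\int_{-\infty}^{\infty}\Re F\,du_2\,du_1$, summed over $j$ with weights $\binom{p}{j}y_*^{p-j}$. The one-dimensional Proposition~\ref{prop:JointExpectedShortfallDerivativeFourierTransform} is unaffected, because there $u_1\mapsto-u_1$ does conjugate the integrand.
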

\begin{proof}
We have 
\begin{align*}
\mathbb{E}[Z^qY^p \mathbf{1}_{\lbrace Y>y_* \rbrace}]=\sum_{j=0}^p \binom{p}{j} \mathbb{E}[Z^q(Y-y_*)_+^j]y_*^{p-j},
\end{align*}
and if we define $V=Y-y_*$  we obtain $\mathbb{E}[Z^q(Y-y_*)_+^j]=\mathbb{E}[Z^q(V)_+^j]$. Using \citet[3.2(3)]{BatemanEderlyi1954}, we reach
\begin{align*}
\mathbb{E}[Z^q(V)_+^j]=\int_0^{+\infty}\int_{-\infty}^{+\infty}\frac{\Gamma(j+1)\Gamma(q+1)}{(2\pi)^2 }\int_{-\infty}^{+\infty}\int_{-\infty}^{+\infty}\frac{e^{(\alpha_1-\mathrm{i} u_1)v  + (\alpha_2-\mathrm{i}u_2)z}}{(\alpha_1-\mathrm{i} u_1)^{j+1}(\alpha_2-\mathrm{i} u_2)^{q+1}}g(v,z)du_1du_2 dvdz,
\end{align*}
with $g(v,z)$ the density of $(V,Z)$ and $\alpha_1>0$, $\alpha_2>0$. Applying Fubini's theorem, we get
\begin{align*}
\mathbb{E}[Z^q(V)_+^j]=\frac{\Gamma(j+1)\Gamma(q+1)}{(2\pi)^2 }\int_{-\infty}^{+\infty}\int_{-\infty}^{+\infty}\frac{ \Phi_{(V,Z)}(\alpha_1-\mathrm{i} u_1,\alpha_2-\mathrm{i} u_2)}{(\alpha_1-\mathrm{i} u_1)^{j+1}(\alpha_2-\mathrm{i} u_2)^{q+1}}du_1du_2,
\end{align*}
with $\Phi_{(V,Z)}(\alpha_1-\mathrm{i} u_1,\alpha_2-\mathrm{i} u_2) =\mathbb{E}[e^{(\alpha_1-\mathrm{i} u_1)V + (\alpha_2-\mathrm{i} u_2)Z}]$ and since we have
\begin{align*}
\Phi_{(V,Z)}(\alpha_1-\mathrm{i} u_1,\alpha_2-\mathrm{i} u_2)=e^{-(\alpha_1 -\mathrm{i} u_1)y_*}\Phi_{(Y,Z)}(\alpha_1-\mathrm{i} u_1,\alpha_2-\mathrm{i} u_2),
\end{align*}
it leads to
\begin{align*}
\mathbb{E}[Z^q(Y-y_*)_+^j]=\frac{\Gamma(j+1)\Gamma(q+1)}{(2\pi)^2 }\int_{-\infty}^{+\infty}\int_{-\infty}^{+\infty}\frac{ e^{-(\alpha_1 -\mathrm{i} u_1)y_*}\Phi_{(Y,Z)}(\alpha_1-\mathrm{i} u_1,\alpha_2-\mathrm{i} u_2)}{(\alpha_1-\mathrm{i} u_1)^{j+1}(\alpha_2-\mathrm{i} u_2)^{q+1}}du_1du_2,
\end{align*}
that provides the result after considering the real part of the right hand side of the above equation since the left hand side is real. Note that Fubini's theorem requires the integrability of $ (u_1,u_2) \in \mathbb{R}^2 \to \frac{\Phi_{(Y,Z)}(\alpha_1-\mathrm{i} u_1,\alpha_2-\mathrm{i} u_2)}{(\alpha_1-\mathrm{i} u_1)^{j+1}(\alpha_2-\mathrm{i} u_2)^{q+1}}$.
\end{proof}
{\color{black} Note that the previous two results apply to distributions defined not only on \(\mathbb{R}_+^2\), but also on \(\mathbb{R}^2\), provided the integrability conditions are satisfied and the following a minor adjustment to the proof is added. If \(Z\) has a distribution defined on \(\mathbb{R}\), then writing \(z = (z)_+ - (-z)_+\) so that \(z^q = (z)_+^q - (-z)_+^q\), the first term is handled using \citet[3.2(3)]{BatemanEderlyi1954} while the second term is handled using \citet[3.2(4)]{BatemanEderlyi1954}.}\\

{\color{black}
\begin{remark}
Underlying \eqref{eq:MomentYp} and \eqref{eq:ExpectationZqYp} is the Plancherel-Parseval equality which states that the integral of the product of two functions is equal, up to a scaling parameter, to the integral of the product of the Fourier (or Laplace, Mellin) transform of these functions, see \citet[Section 1.14]{DLMF}. This equality is of practical importance since if the integral represents an expectation of a given function of a random variable, then this expectation can be expressed as the product of the function of the random variable and the density of this random variable. The Plancherel-Parseval equality implies that this expectation is given by the integral of the Fourier transform of the density, which is the characteristic function of the random variable, and the Fourier transform of the given function. If the density of the random variable is known, then the expectation can be computed using the density, and there is no need to rely on this Fourier representation. However, it can become quickly tedious. For example, \citet{IgnatievaLandsman2021} compute the TCE for a class of generalized hyper-elliptical distributions, while in \citet{IgnatievaLandsman2025} they extend the computations to the second moment, that is, the tail variance. Higher order (conditional) moments have yet to be determined. If the density of the random variable is unknown (or challenging to compute numerically) but its MGF is known, there are two alternatives. The first consists in computing the density using the inverse
Fourier transform of the characteristic function and then to integrate (numerically) the density. This is done in \citet{NguyenNguyen2017} or \citet{Blier-WongCossetteMarceau2025}. Although it is a possible route, \eqref{eq:MomentYp} and \eqref{eq:ExpectationZqYp} suggest a better alternative solution, which consists in using the Plancherel-Parseval and integrate the product. To this end, one needs to compute the Fourier transform of function involved in the risk measure. For the TCE, this computation is well-known since it is standard in the option pricing literature, and not surprisingly, in \citet{DufresneGarridoMorales2009}, \citet{Bormetti2010} or \citet{KelaniQuittardPinon2014}, this link to this literature is explicitly underlined. As a result, our work extends the literature in two ways. First, we consider higher (conditional) moments as \eqref{eq:MomentYp} clearly shows. Second, we consider the multivariate case in \eqref{eq:ExpectationZqYp}. Note that this result applies to variables with a general distribution, particularly when these variables are the sum of other random variables, as is typical when considering a portfolio. In such situations, a computational approach based on the density can be problematic. Also, multidimensional problem often come with computational difficulties that we partially solve. We elaborate further on this point in the next section.
\end{remark}
}

The above result requires a 2-dimensional integration, and if we consider more variables, but still conditioning on one of them, it will lead to a multi-dimensional Fourier transform (of dimension equal to the number of variables considered), and therefore to a numerical difficulty similar to that encountered when working with the density of a state variable used for the losses. The next section shows how this numerical difficulty can be replaced by a one-dimensional integration and derivation of a function that is related to the derivative of the MGF of the state variable. The difference is important since we have seen in the analytical section that the MGF of the Wishart process is easy to derive.

\subsection{One-dimensional Fourier transform representations for risk measures}\label{sec:OneFourierTransform}

In this section, we reformulate Proposition \ref{prop:JointExpectedShortfallFourierTransform} in a more efficient way from a numerical point of view. Essentially, it requires to be able to compute the derivative of the MGF of the state variable {\color{black}which can typically be done analytically}. 

\begin{proposition}\label{prop:JointExpectedShortfallDerivativeFourierTransform}
Let $(Y,Z)\in \mathbb{R}_+^2$ the 2-dimensional vector of proposition \ref{prop:JointExpectedShortfallFourierTransform} and $\Phi_{(Y,Z)}(v_1,v_2)$ its MGF of $(Y,Z)$. Denote by $\Phi_{(Y,Z)}^{(0,q)}(v_1,v_2)$ for $(v_1,v_2)\in \mathbb{R}^2$ its $q\text{-th}$ order derivative, where $q\in \mathbb{R}$, with respect to $v_2$ (and \(0\text{-th}\) order derivative with respect to $v_1$). Let $y_*\in \mathbb{R}_+$,  $\alpha_1\in \mathbb{R}_+$ and assume suitable integrability conditions of the functions $u_1 \in \mathbb{R} \to \frac{\Phi_{(Y,Z)}^{(0,q)}(\alpha_1-\mathrm{i} u_1,0)}{(\alpha_1-\mathrm{i} u_1)^{j+1}}$ for $j=0,\ldots,p$. Then the following relation holds

\begin{align}
\mathbb{E}[Z^q Y^p\mathbf{1}_{\lbrace Y>y_* \rbrace }]= \int_{0}^{+\infty}\sum_{j=0}^p \binom{p}{j}  \frac{y_*^{p-j}\Gamma(j+1)}{\pi }\Re\left( \frac{e^{-(\alpha_1 -\mathrm{i} u_1)y_* }\Phi_{(Y,Z)}^{(0,q)}(\alpha_1-\mathrm{i} u_1,0)}{(\alpha_1-\mathrm{i} u_1)^{j+1}}\right)du_1. \label{eq:ExpectationZqYp-1dim}
\end{align}

\end{proposition}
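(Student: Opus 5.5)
The plan is to follow the proof of Proposition~\ref{prop:ExpectedShortfallFourierTransform}, but to handle the factor $Z^q$ by differentiating the joint MGF rather than by a second Fourier inversion. This is what makes the representation one-dimensional.

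First, as in the earlier proofs, I will expand $Y^p\mathbf{1}_{\lbrace Y>y_*\rbrace}=\sum_{j=0}^p\binom{p}{j}(Y-y_*)_+^j y_*^{p-j}$. This reduces the problem to computing $\mathbb{E}[Z^q(Y-y_*)_+^j]$ for $j=0,\ldots,p$. For each $j$ I will use the same identity from \citet[3.2(3)]{BatemanEderlyi1954}, with $\alpha_1>0$ and $V=Y-y_*$:
\begin{align*}
(v)_+^j=\frac{\Gamma(j+1)}{2\pi}\int_{-\infty}^{+\infty}\frac{e^{(\alpha_1-\mathrm{i}u_1)v}}{(\alpha_1-\mathrm{i}u_1)^{j+1}}du_1 .
\end{align*}
I will multiply this by $z^q$, integrate against the joint law of $(V,Z)$, and swap the order of integration by Fubini. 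That gives
\begin{align*}
\mathbb{E}[Z^q(V)_+^j]=\frac{\Gamma(j+1)}{2\pi}\int_{-\infty}^{+\infty}\frac{\mathbb{E}\big[Z^q e^{(\alpha_1-\mathrm{i}u_1)V}\big]}{(\alpha_1-\mathrm{i}u_1)^{j+1}}du_1 .
\end{align*}

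The key step is to identify the inner expectation. I will write $\mathbb{E}[Z^q e^{(\alpha_1-\mathrm{i}u_1)V}]=e^{-(\alpha_1-\mathrm{i}u_1)y_*}\,\mathbb{E}[Z^q e^{(\alpha_1-\mathrm{i}u_1)Y}]$. I then want to show that $\mathbb{E}[Z^q e^{w Y}]=\Phi^{(0,q)}_{(Y,Z)}(w,0)$ for $w=\alpha_1-\mathrm{i}u_1$. To do this I differentiate $v_2\mapsto\mathbb{E}[e^{wY+v_2Z}]$ $q$ times under the expectation at $v_2=0$. Justifying this needs dominated convergence. It is enough that $\mathbb{E}[e^{\alpha_1Y+\delta Z}]<\infty$ for some $\delta>0$, because then $|Z^k e^{wY+v_2Z}|\le C_k e^{\alpha_1 Y+\delta Z}$ uniformly for $|v_2|<\delta/2$, using $Y,Z\ge 0$. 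I will take this as part of the ``suitable integrability conditions'', which amounts to assuming $(\alpha_1,0)$ is an interior point of the MGF's domain in the $v_2$ direction. I read $q$ as a nonnegative integer; the ``$q\in\mathbb{R}$'' in the statement appears to be a typo for $\mathbb{N}$.

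Finally, I will check Fubini. The joint integrand is bounded by $|z|^q e^{\alpha_1 v}|\alpha_1-\mathrm{i}u_1|^{-(j+1)}$. For $j\ge1$ this is integrable in $u_1$ and has finite expectation under the condition above. For $j=0$ the factor $|\alpha_1-\mathrm{i}u_1|^{-1}$ is not absolutely integrable. There I will rely on the assumed integrability of $u_1\mapsto\Phi^{(0,q)}_{(Y,Z)}(\alpha_1-\mathrm{i}u_1,0)/(\alpha_1-\mathrm{i}u_1)$ and interpret the inversion as a principal-value or truncated limit, exactly as the paper does in Propositions~\ref{prop:ExpectedShortfallFourierTransform} and \ref{prop:JointExpectedShortfallFourierTransform}.

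The left-hand side is real. The integrand at $-u_1$ is the complex conjugate of the integrand at $u_1$, since $\Phi^{(0,q)}$ has real coefficients. So $\int_{-\infty}^{+\infty}$ equals $2\int_0^{+\infty}\Re(\cdot)$, which turns the factor $\frac{1}{2\pi}$ into $\frac{1}{\pi}$. Summing over $j$ then gives \eqref{eq:ExpectationZqYp-1dim}.

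I expect the main obstacle to be the rigorous justification of both interchanges: differentiating under the expectation, and Fubini in the $j=0$ case. The algebra itself is routine.
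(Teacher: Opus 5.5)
Your proposal is correct and follows the same route as the paper. Both expand $Y^p\mathbf{1}_{\lbrace Y>y_*\rbrace}$ binomially, apply the Bateman--Erd\'elyi inversion and Fubini while leaving $Z^q$ untransformed, identify $\mathbb{E}[Z^q e^{(\alpha_1-\mathrm{i}u_1)Y}]$ as $\Phi^{(0,q)}_{(Y,Z)}(\alpha_1-\mathrm{i}u_1,0)$, and take real parts.

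Your additional justifications go beyond the paper, which does not address these points; in particular, the paper does not treat the $j=0$ case as a principal value, contrary to your remark. The extra points are:
\begin{itemize}
\item dominated convergence for differentiating under the expectation;
\item noticing that the $j=0$ term is only conditionally convergent, so it needs a truncation or principal-value argument rather than a direct appeal to Fubini;
\item the conjugate-symmetry step that turns $\frac{1}{2\pi}\int_{-\infty}^{+\infty}$ into $\frac{1}{\pi}\int_0^{+\infty}\Re$.
\end{itemize}
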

\begin{proof}
Proceeding as in the previous proof but not using a Fourier transform for the variable $Z$, we can reach
\begin{align*}
\mathbb{E}[Z^q(Y-y_*)_+^j]=\frac{\Gamma(j+1)}{2\pi }\int_{-\infty}^{+\infty}e^{-(\alpha_1 -\mathrm{i} u_1)y_*} \frac{\mathbb{E}[ Z^qe^{(\alpha_1-\mathrm{i} u_1)Y}]}{(\alpha_1-\mathrm{i} u_1)^{j+1}}du_1,
\end{align*}
with $\alpha_1>0$. Rewrite $\mathbb{E}[ Z^qe^{(\alpha_1-\mathrm{i} u_1)Y}]=\frac{d^q \mathbb{E}\left[ e^{ (\alpha_1-\mathrm{i} u_1)Y + \nu Z}\right] }{d\nu^q}|_{\nu=0} =\Phi_{(Y,Z)}^{(0,q)}(\alpha_1-\mathrm{i} u_1,0)$ and substituting this expression in the equation above we reach the result after taking the real part of the integral since the left hand side is real. Applying Fubini's theorem requires the integrability of $ u_1 \in \mathbb{R} \to \frac{\mathbb{E}[ Z^qe^{(\alpha_1-\mathrm{i} u_1)Y}]}{(\alpha_1-\mathrm{i} u_1)^{j+1}}$.
\end{proof}
{\color{black} Note that the above result applies to a variable defined on \(\mathbb{R}^2\).}\\

The computation of the conditional higher moments \eqref{eq:ConditionalMomentZqYp} can be performed using a one-dimensional integration \eqref{eq:ExpectationZqYp-1dim} instead of a two-dimensional integration \eqref{eq:ExpectationZqYp}. This result can be further extended to more variables. The proposition below provides a method to compute the cross moments of two variables, conditional on a third (dependent) variable. It is presented without proof, as it follows the same structure as the proof of Proposition~\ref{prop:JointExpectedShortfallDerivativeFourierTransform}.

\begin{proposition}\label{prop:TripleExpectedShortfallDerivativeFourierTransform}
	Let $(Y,X,Z)\in \mathbb{R}_+^3$ be a 3-dimensional random vector and denote for $(v_1,v_2,v_3)\in \mathbb{R}^3$ its MGF by $\Phi_{(Y,X,Z)}(v_1,v_2,v_3)$. Denote $\Phi_{(Y,X,Z)}^{(0,r,q)}(v_1,v_2,v_3)$ for $(v_1,v_2,v_3)\in \mathbb{R}^3$ the $r\text{-th}$ order derivative with respect to $v_2$ and the $q$ order derivative with respect to $v_3$ of $\Phi_{(Y,X,Z)}(v_1,v_2,v_3)$ (and $0$ order derivative with respect to $v_1$). Let $y_*\in \mathbb{R}_+$,  $\alpha_1\in \mathbb{R}_+$ and assume suitable integrability conditions of the functions $u_1 \in \mathbb{R} \to \frac{\Phi_{(Y,X,Z)}^{(0,r,q)}(\alpha_1-\mathrm{i} u_1,0,0)}{(\alpha_1-\mathrm{i} u_1)^{j+1}}$ for $j=0,\ldots,p$. Then we have 
	
	\begin{align}
	\mathbb{E}[Y^p  X^r Z^q \mathbf{1}_{\lbrace Y>y_* \rbrace }]= \int_{0}^{+\infty} \sum_{j=0}^p \binom{p}{j}  \frac{y_*^{p-j}\Gamma(j+1)}{\pi }\Re\left( \frac{e^{-(\alpha_1 -\mathrm{i} u_1)y_* }\Phi_{(Y,X,Z)}^{(0,r,q)}(\alpha_1-\mathrm{i} u_1,0,0)}{(\alpha_1-\mathrm{i} u_1)^{j+1}}\right)du_1.\label{eq:ExpectationYXZ}
	\end{align}
	
	\end{proposition}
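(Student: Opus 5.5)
The plan is to follow the proof of Proposition~\ref{prop:JointExpectedShortfallDerivativeFourierTransform}: apply the Fourier inversion only to the conditioning variable $Y$, and treat the factor $X^rZ^q$ as a weight that is later read off as a derivative of the MGF. First expand $Y^p$ around the threshold on the event $\lbrace Y>y_*\rbrace$, exactly as in the proof of Proposition~\ref{prop:ExpectedShortfallFourierTransform}:
\begin{align*}
\mathbb{E}[Y^pX^rZ^q\mathbf{1}_{\lbrace Y>y_*\rbrace}]=\sum_{j=0}^p\binom{p}{j}y_*^{p-j}\,\mathbb{E}[X^rZ^q(Y-y_*)_+^j].
\end{align*}
This holds because $Y^p\mathbf{1}_{\lbrace Y>y_*\rbrace}=((Y-y_*)_++y_*)^p\mathbf{1}_{\lbrace Y>y_*\rbrace}$. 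For $j\geq 1$ the indicator is absorbed into $(Y-y_*)_+^j$. For $j=0$ we interpret $(v)_+^0$ as $\mathbf{1}_{\lbrace v>0\rbrace}$, which is consistent with the Bateman--Erd\'elyi formula at $j=0$ up to a null set.

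Next, set $V=Y-y_*$ and use \citet[3.2(3)]{BatemanEderlyi1954} with $\alpha_1>0$:
\begin{align*}
(V)_+^j=\frac{\Gamma(j+1)}{2\pi}\int_{-\infty}^{+\infty}\frac{e^{(\alpha_1-\mathrm{i}u_1)V}}{(\alpha_1-\mathrm{i}u_1)^{j+1}}du_1.
\end{align*}
Multiply by $X^rZ^q$, take expectations, and interchange expectation and $du_1$ by Fubini's theorem. This gives
\begin{align*}
\mathbb{E}[X^rZ^q(Y-y_*)_+^j]=\frac{\Gamma(j+1)}{2\pi}\int_{-\infty}^{+\infty}\frac{e^{-(\alpha_1-\mathrm{i}u_1)y_*}\,\mathbb{E}[X^rZ^qe^{(\alpha_1-\mathrm{i}u_1)Y}]}{(\alpha_1-\mathrm{i}u_1)^{j+1}}du_1.
\end{align*}

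Then identify the inner expectation as
\begin{align*}
\mathbb{E}[X^rZ^qe^{(\alpha_1-\mathrm{i}u_1)Y}]=\partial^r_{v_2}\partial^q_{v_3}\,\mathbb{E}[e^{(\alpha_1-\mathrm{i}u_1)Y+v_2X+v_3Z}]\big|_{v_2=v_3=0}=\Phi^{(0,r,q)}_{(Y,X,Z)}(\alpha_1-\mathrm{i}u_1,0,0).
\end{align*}
Differentiating under the expectation is legitimate when the MGF is finite on a neighbourhood of $(\alpha_1,0,0)$. Indeed, $|X^rZ^qe^{(\alpha_1-\mathrm{i}u_1)Y}|\leq X^rZ^qe^{\alpha_1Y}$, and this is dominated by $e^{\alpha_1Y+\epsilon(X+Z)}$ times a constant, so the MGF is analytic there and the derivatives commute with $\mathbb{E}$.

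To finish, note that the left-hand side is real, so replace the integrand by its real part. Since $u_1\mapsto\overline{f(-u_1)}=f(u_1)$ for this integrand, the real part is even in $u_1$. Hence the integral over $\mathbb{R}$ equals twice the integral over $\mathbb{R}_+$, which turns $\frac{1}{2\pi}$ into $\frac{1}{\pi}$. Summing over $j$ gives \eqref{eq:ExpectationYXZ}.

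The main obstacle is justifying Fubini's theorem, i.e.\ the joint integrability of
\begin{align*}
(u_1,\omega)\mapsto\frac{X^rZ^qe^{(\alpha_1-\mathrm{i}u_1)V}}{(\alpha_1-\mathrm{i}u_1)^{j+1}}.
\end{align*}
Its modulus is $X^rZ^qe^{\alpha_1V}|\alpha_1-\mathrm{i}u_1|^{-(j+1)}$. For $j\geq 1$ this is integrable in $u_1$, with finite expectation as soon as the MGF is finite near $(\alpha_1,0,0)$. For $j=0$ the kernel is not absolutely integrable, and I would handle this case as the paper does: by invoking the stated integrability assumption on $u_1\mapsto\Phi^{(0,r,q)}_{(Y,X,Z)}(\alpha_1-\mathrm{i}u_1,0,0)/(\alpha_1-\mathrm{i}u_1)$. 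Alternatively, one can use a truncation $\int_{-R}^{R}$ followed by dominated convergence together with the standard Dirichlet-integral bound.
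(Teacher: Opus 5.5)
Your proposal is correct and follows the route the paper indicates; the paper omits the proof, saying it mirrors that of Proposition~\ref{prop:JointExpectedShortfallDerivativeFourierTransform}: binomial expansion around $y_*$, the Bateman--Erd\'elyi kernel applied to $Y$ only, Fubini, identification of $\mathbb{E}[X^rZ^qe^{(\alpha_1-\mathrm{i}u_1)Y}]$ with $\Phi^{(0,r,q)}_{(Y,X,Z)}(\alpha_1-\mathrm{i}u_1,0,0)$, and the real-part/evenness reduction to $\int_0^{+\infty}$ with factor $1/\pi$. For $j=0$, your fallback (truncation to $\int_{-R}^{R}$, then dominated convergence with the Dirichlet-type bound) is the argument that actually works: because $\int_{\mathbb{R}}|\alpha_1-\mathrm{i}u_1|^{-1}du_1=\infty$, Fubini fails outright there, and the stated integrability of $u_1\mapsto\Phi^{(0,r,q)}_{(Y,X,Z)}(\alpha_1-\mathrm{i}u_1,0,0)/(\alpha_1-\mathrm{i}u_1)$ controls $|\mathbb{E}[\cdot]|$ rather than $\mathbb{E}|\cdot|$, so it only guarantees that the right-hand side exists and does not justify the interchange.
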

{\color{black} Note that the above result applies to a variable defined on \(\mathbb{R}^3\).}\\

Note that \eqref{eq:ExpectationYXZ} enables the computation of \(\mathbb{E}[X^rZ^q | Y>y_*]\) from which we can obtain the tail joint central (higher) moment used in \citet[][Eq.~(1.5)]{YangWangYao2025} that extends the tail covariance risk measure of \citet{FurmanLandsman2006}. The purpose of the following remark is mainly to introduce notations that will be needed later in the work.

	\begin{remark}\label{rem:nExpectedShortfallDerivativeFourierTransform}

		Let $(Y_1,...,Y_{n}) \in \mathbb{R}_+^{n}$ be an $n$-dimensional random vector and denote $\Phi_{(Y_1,...,Y_{n})}(v_1,...,v_{n}) = \mathbb{E}[e^{v_1Y_1 + ... + v_{n} Y_{n}}]$ for $(v_1,...,v_{n}) \in \mathbb{R}^{n}$ its MGF. Denote $\Phi_{(Y_1,...,Y_{n})}^{(0,p_2,...,p_{n})}(v_1,...,v_{n})$ the $\left(\sum_{i=2}^{n} p_i\right)\text{-th}$ order derivative of $\Phi_{(Y_1,...,Y_{n})}(v_1,...,v_{n})$, where $(0,p_2,...,p_{n}) \in \mathbb{N}^{n}$ and we differentiate $p_j$ times with respect to $v_j$ for $j=2,\ldots, n$ (note that we do not derive with respect to $v_1$). Let $y_*\in \mathbb{R}_+$,  $\alpha_1\in \mathbb{R}_+$ and assume suitable integrability conditions of the functions $u_1 \in \mathbb{R} \to \frac{\Phi_{(Y_1,...,Y_{\ell})}^{(0,p_2,...,p_{n})}(\alpha_1-\mathrm{i} u_1,0,...,0)}{(\alpha_1-\mathrm{i} u_1)^{j+1}}$ for $j=0,\ldots,p_1$. The following relation holds		

		\begin{align}
			\mathbb{E}\left[\prod_{i=1}^{n} Y_i^{p_i} \mathbf{1}_{\lbrace Y_1>y_* \rbrace }  \right]= \int_{0}^{+\infty}\sum_{j=0}^{p_1} \binom{p_1}{j}  \frac{y_*^{p_1-j}c_j}{\pi }\Re\left( \frac{e^{-(\alpha_1 -\mathrm{i} u_1)y_* }\Phi_{(Y_1,...,Y_{n})}^{(0,p_2,...,p_{n})}(\alpha_1-\mathrm{i} u_1,0,...,0)}{(\alpha_1-\mathrm{i} u_1)^{j+1}}\right)du_1,
			\end{align}
with \(c_j=	\Gamma(j+1)\). {\color{black} Note also that result can be extended without any change to a variable defined on \(\mathbb{R}^n\).}
		\end{remark}

From \eqref{eq:ConditionalMomentYp}, it should be noted that any form of higher order moment or central moment can be constructed. These higher order moments are beneficial for analyzing tail risk properties. For example, the well known tail variance is defined as 
\[ 
\mathbb{E}[(Y - \mathbb{E}[Y|Y > y_*])^2 | Y > y_*] = \mathbb{E}[Y^2|Y > y_*] - \mathbb{E}[Y|Y > y_*]^2, 
\] 
and both terms can be calculated using Proposition~\ref{prop:ExpectedShortfallFourierTransform}. Note that this statement applies whether \(Y\) is a sum of losses or not.\\

The previous propositions apply to any random vector with a known MGF {\color{black} and are particularly relevant when one of the variables is a sum of random variables, as is the case with the sum of losses in a portfolio. Note that only under very specific distributions for the losses will the sum have a tractable density. This makes any methodology based on the density difficult, if not impossible, to implement.} In the next section, we show how the previous expressions can be specified when working with the Wishart process. Interestingly, its process property (\textit{i.e.}, its time dependence) and its positive definite matrix property offer new perspectives that can be effectively addressed using the Fourier transform representation of risk measures.

\subsection{Wishart conditional tail risk measures}\label{sec:WishartRiskMeasures}
In the previous section, the results were presented in a very general form. Essentially, they require the MGF, the derivative of the MGF of the state variable, as well as some integrability conditions for these functions. The purpose of this section is to show that the Wishart process satisfies these requirements.\\

Let $(x_{11,t},\ldots, x_{nn,t})$ be an $n\times 1$ vector of losses, and assume that this vector is the diagonal of a Wishart process $x_t$ of size $n$. Note that $x_{ii,t}=\textup{tr}[\theta_1 x_t]$ with $\theta_1 = e_{ii}$. A standard risk measure associated with $x_{ii,t}$ is the \(q\text{-th}\) moment of $x_{ii,t}$ conditional on \( x_{ii,t}>x_* \) for a given $x_*\in \mathbb{R}_+$. It is given by the following proposition.
\begin{proposition}\label{prop:ConditionalMomentsOfX}
Assume that $\omega=\beta \sigma^2$ and let $\theta_1= e_{ii}$ for $i\in \lbrace 1,\ldots,n \rbrace $ so that  $x_{ii,t}=\textup{tr}[\theta_1 x_t]$, then the $q\text{-th}$ moment of $\textup{tr}[\theta_1 x_t]$ conditional on $\textup{tr}[\theta_1 x_t]>x_*$, with $x_*\in \mathbb{R}_+$,  is given by
\begin{align}
\mathbb{E}\left[ \textup{tr}[\theta_1 x_t]^q |  \textup{tr}[\theta_1 x_t]>x_*\right]=\frac{\int_0^{+\infty}\sum_{j=0}^q\binom{q}{j} x_*^{q-j}c_j\Re\left( \frac{e^{-(\alpha_1-\mathrm{i}u_1)x_*}\Phi_{\textup{tr}[\theta_1x_t]}(\alpha_1-\mathrm{i}u_1)}{(\alpha_1-\mathrm{i}u_1)^{j+1}} \right)du_1}{\int_0^{+\infty}\Re\left( \frac{e^{-(\alpha_1-\mathrm{i}u_1)x_*}\Phi_{\textup{tr}[\theta_1x_t]}(\alpha_1-\mathrm{i}u_1)}{(\alpha_1-\mathrm{i}u_1)} \right)du_1},\label{eq:ConditionalMomentXq}
\end{align}
with $c_j=\Gamma(j+1)$, $\Phi_{\textup{tr}[\theta_1x_t]}(\alpha_1-\mathrm{i}u_1)=\Phi(t, (\alpha_1-\mathrm{i}u_1)\theta_1,x_0)$, $\Phi(.)$ given by \eqref{eq:mgf} or \eqref{eq:solPhi} (with $a(t,\theta_0)$ as in \eqref{eq:Asimple} and $b(t)$ as in \eqref{eq:ExpB}) and $\alpha_1\in \mathbb{R}_+$.
\end{proposition}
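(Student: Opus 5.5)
The plan is to obtain \eqref{eq:ConditionalMomentXq} by applying Proposition~\ref{prop:ExpectedShortfallFourierTransform} twice to the scalar positive variable $Y=\textup{tr}[\theta_1 x_t]=x_{ii,t}$: once with $p=q$ for the numerator and once with $p=0$ for the denominator. Then we take the ratio as in \eqref{eq:ConditionalMomentYp}. With $p=0$ the sum reduces to the single term $j=0$ with $c_0=1$, which gives exactly the denominator. The common factor $1/\pi$ cancels in the ratio. So the algebra is immediate once two things are established: the MGF of $Y$, and the integrability hypothesis of Proposition~\ref{prop:ExpectedShortfallFourierTransform}.

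For the MGF, $\mathbb{E}[e^{zY}]=\mathbb{E}[\exp(\textup{tr}[(z\theta_1)x_t])]=\Phi(t,z\theta_1,x_0)$. I would first justify this for real $z$ small, using Proposition~\ref{prop:MGFWishart} with $a$ from \eqref{eq:Asimple} and $b$ from \eqref{eq:ExpB}. Then I would extend it to complex $z=\alpha_1-\mathrm{i}u_1$ by analytic continuation, valid on the strip where $\mathbb{E}[e^{\alpha_1 Y}]<\infty$.

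With $\theta_1=e_{ii}$ the closed form becomes explicit:
\begin{align*}
I_n-2z e_{ii}\varsigma_t,\qquad \det(I_n-2z\varsigma_te_{ii})=1-2z(\varsigma_t)_{ii}.
\end{align*}
The inverse is given by Sherman--Morrison:
\begin{align*}
(I_n-2ze_{ii}\varsigma_t)^{-1}=I_n+\frac{2z\,e_{ii}\varsigma_t}{1-2z(\varsigma_t)_{ii}}.
\end{align*}
Hence
\begin{align*}
\Phi(t,z e_{ii},x_0)=\bigl(1-2z(\varsigma_t)_{ii}\bigr)^{-\beta/2}\exp\Bigl(\frac{z\,(e^{tm}x_0e^{tm^\top})_{ii}}{1-2z(\varsigma_t)_{ii}}\Bigr).
\end{align*}
This is a noncentral gamma (chi-square type) MGF. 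It is analytic for $\Re z<1/(2(\varsigma_t)_{ii})$, so I take $0<\alpha_1<1/(2(\varsigma_t)_{ii})$, which justifies the continuation.

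The main step is the integrability of $u_1\mapsto \Phi(t,(\alpha_1-\mathrm{i}u_1)e_{ii},x_0)/(\alpha_1-\mathrm{i}u_1)^{j+1}$ for $j=0,\ldots,q$. Write $w=1-2(\alpha_1-\mathrm{i}u_1)(\varsigma_t)_{ii}$. Its modulus grows like $2|u_1|(\varsigma_t)_{ii}$ and is bounded below by $1-2\alpha_1(\varsigma_t)_{ii}>0$, so $|w|^{-\beta/2}$ is bounded and decays like $|u_1|^{-\beta/2}$. For the exponential factor, note that $z/w\to -1/(2(\varsigma_t)_{ii})$ as $|u_1|\to\infty$, and it is continuous, so its real part is bounded and the exponential is bounded. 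Together with $|\alpha_1-\mathrm{i}u_1|^{-(j+1)}\le \max(\alpha_1^{-1},|u_1|^{-1})^{j+1}$, the integrand is continuous and is $O(|u_1|^{-\beta/2-j-1})$. Since $\beta\ge n+1\ge 2$, this is integrable even for $j=0$.

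Finally, I would verify that the Fubini step in Proposition~\ref{prop:ExpectedShortfallFourierTransform} goes through, using $\mathbb{E}[e^{\alpha_1 Y}]<\infty$, and conclude. I expect the only delicate point to be this bound on the exponential factor uniformly in $u_1$, together with the choice of the admissible range of $\alpha_1$.
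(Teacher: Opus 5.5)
Your proposal is correct and is essentially the paper's proof: you apply Proposition~\ref{prop:ExpectedShortfallFourierTransform} with $p=q$ and $p=0$, take the ratio, and check integrability on the line $\Re z=\alpha_1$. Your explicit rank-one computation is sharper than the paper's: because $\det(I_n-2z\varsigma_t e_{ii})=1-2z(\varsigma_t)_{ii}$ is affine in $z$, the determinant factor decays like $|u_1|^{-\beta/2}$, as you found, rather than the $u_1^{-n\beta/2}$ the paper asserts (integrability holds for any $\beta>0$ either way), and your noncentral chi-square identification makes $0<\alpha_1<1/(2(\varsigma_t)_{ii})$ exactly the range where $\mathbb{E}[e^{\alpha_1 Y}]<\infty$, which makes the paper's ``$\alpha_1$ small enough'' precise; one caveat on your closing remark is that for $j=0$ no absolute Fubini argument is available, since $\int_{\mathbb{R}}|\alpha_1-\mathrm{i}u_1|^{-1}du_1=\infty$, so that term should instead be read as Fourier inversion of the integrable function $w\mapsto e^{\alpha_1 w}\mathbb{P}(Y>w)$, whose transform $\Phi_Y(\alpha_1-\mathrm{i}u_1)/(\alpha_1-\mathrm{i}u_1)$ is exactly what you proved integrable.
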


\begin{proof}
The result is an immediate application of \eqref{eq:ConditionalMomentYp} and \eqref{eq:MomentYp}. It remains to check the integrability conditions of Proposition~\ref{prop:ExpectedShortfallFourierTransform}. For $j=1,\ldots, q$, we have $|e^{-(\alpha_1-\mathrm{i}u_1)x_*}\Phi_{\textup{tr}[\theta_1x_t]}(\alpha_1-\mathrm{i}u_1)/(\alpha_1-\mathrm{i}u_1)^{j+1}|\leq e^{-\alpha_1x_*}|\mathbb{E}[e^{(\alpha_1-\mathrm{i}u_1)\textup{tr}[\theta_1x_t]}]|/|(\alpha_1-\mathrm{i}u_1)^{j+1}|\leq c \mathbb{E}[e^{\alpha_1\textup{tr}[\theta_1x_t]}]/|(\alpha_1-\mathrm{i}u_1)^{j+1}|$ for a certain constant $c>0$ after using Jensen's inequality. If we can prove that $\mathbb{E}[e^{\alpha_1\textup{tr}[\theta_1x_t]}]<\infty$ then the integrals of the numerator \eqref{eq:ConditionalMomentXq} are well-defined (for $j=1,\ldots,q$) since the integrands are bounded by an integrable function. Replacing $\theta_1$ with $\alpha_1\theta_1$ in \eqref{eq:solPhi}, we get for \eqref{eq:Asimple} that $|a(t,\alpha_1\theta_1)|<\infty$ for $\alpha_1$ small enough since $(I_n-2\alpha_1\theta_1\varsigma_t)^{-1}$ is well-defined because $\varsigma_t$ is bounded. The same argument can be used to prove that $|\det(I_n-2\alpha_1\theta_1\varsigma_t)|>0$ in \eqref{eq:ExpB} so that $|b(t,\alpha_1\theta_1)|<\infty$. Combining these two arguments we obtain the announced integrability condition. For the $j=0$, and in particular the denominator of \eqref{eq:ConditionalMomentXq}, it requires a sharper upper bound. Replace $\theta_1$ in \eqref{eq:Asimple} with $(\alpha_1 -\mathrm{i} u_1)\theta_1$ then the function $u_1 \to |e^{tm^\top}(I_n-2(\alpha_1 -\mathrm{i}u_1)\theta_1\varsigma_t)^{-1}(\alpha_1 -\mathrm{i}u_1)\theta_1e^{tm})|$ is bounded by a constant when $u_1\to +\infty$ as is $u_1 \to |a(t,(\alpha_1-\mathrm{i}u_1)\theta_1)|$. Further to this, $u_1\to 1/|\det(I_n - 2(\alpha_1 -\mathrm{i}u_1)\theta_1\varsigma_t)|^{\beta/2}$ behaves like $1/u_1^{n\beta/2}$ when $u_1\to +\infty$ so that the integrand in the denominator of \eqref{eq:ConditionalMomentXq}, or the numerator when $j=0$, behaves like $1/u_1^{n\beta/2 + 1}$. But since $\beta\geq n+1$ we conclude that these integrals are well-defined.  

\end{proof}

Note that when \(q=1\)  in \eqref{eq:ConditionalMomentXq}, then the above conditional expectation corresponds to $\TCE_{x_{ii,t}}(x_*)=\mathbb{E}\left[ x_{ii,t} |  x_{ii,t}>x_* \right]$. If we denote by $s_t = \sum_{i=1}^n x_{ii,t}$ the sum of those losses on a portfolio, it can also be written as $\textup{tr}[\theta_1 x_t]$ with $\theta_1 = I_n$. The previous proposition thus gives 
\begin{align}
\mathbb{E}\left[ s_t^q |  s_t>s_* \right], \label{eq:ConditionalMomentSq}
\end{align}
with $s_*>0$, it is the $q\text{-th}$ moment of the losses on a portfolio conditional on the sum of these being above a certain threshold. For $q=1$, it is \(\TCE_{s_t}(s_*)\), the TCE of the portfolio.\\

When considering several losses, the quantity of interest is the expected loss of the variable \(x_{ii,t}\) conditional on the sum of losses being above a certain threshold, it is given by $\mathbb{E}[x_{ii,t}| s_t>s_*]$. An extension of this measure is the higher moments of \(x_{ii,t}\) conditional on the sum of losses being above a certain threshold, \textit{i.e.}, $\mathbb{E}[x_{ii,t}^q| s_t>s_*]$. But we can further extend from higher moments of \(x_{ii,t}\) to higher cross-moments of \((x_{ii,t},s_t)\) conditional on the sum of losses being above a certain threshold, \textit{i.e.}, \(\mathbb{E}[x_{ii,t}^qs_t^p | s_t>s_*]\). Thanks to the previous analytical results, these quantities are easy to obtain in our framework. In particular, the following proposition derives from Proposition \ref{prop:JointExpectedShortfallFourierTransform}.
\begin{proposition}\label{prop:MomentsOfXConditionalOnS}
Assume that $\omega=\beta \sigma^2$ and let $\theta_0=\sum_{i=1}^{n} e_{ii}$  and $\theta_1= e_{ii}$ for $i\in \lbrace 1,\ldots,n \rbrace $ so that $s_t=\textup{tr}[\theta_0 x_t]$ and   $x_{ii,t}=\textup{tr}[\theta_1 x_t]$. Then the $(p\text{-th},q\text{-th})$ moment of $(\textup{tr}[\theta_0 x_t],\textup{tr}[\theta_1 x_t])$ conditional on the event \(s_t>s_*\), with \(s_*\in \mathbb{R}_+\),  is 
\begin{align}
I &= \mathbb{E}\left[ \textup{tr}[\theta_1 x_t]^q\textup{tr}[\theta_0 x_t]^p | \textup{tr}[\theta_0 x_t]>s_* \right] \nonumber\\
  &=\frac{\int_{0}^{+\infty}\int_{0}^{+\infty} \sum_{j=0}^p \frac{1}{\pi}\binom{p}{j} s_*^{p-j}c_{j}c_{q} \Re\left(\frac{e^{-(\alpha_0 -\mathrm{i} u_0)s_*} \Phi_{(\textup{tr}[\theta_0 x_t],\textup{tr}[\theta_1 x_t])}(\alpha_0-\mathrm{i} u_0,\alpha_1-\mathrm{i} u_1)}{(\alpha_0-\mathrm{i} u_0)^{j+1}(\alpha_1-\mathrm{i} u_1)^{q+1}}\right)du_0du_1}{\int_0^{+\infty}\Re\left( \frac{e^{-(\alpha_0 -\mathrm{i} u_0)s_*}\Phi_{\textup{tr}[\theta_0x_t]}(\alpha_0-\mathrm{i}u_0)}{(\alpha_0-\mathrm{i}u_0)} \right)du_0},\label{eq:MomentsOfXConditionalOnS}
\end{align}
\normalsize
with $c_{j}=\Gamma(j+1)$, $c_q=\Gamma(q+1)$, $\Phi_{(\textup{tr}[\theta_0x_t],\textup{tr}[\theta_1x_t])}(\alpha_0-\mathrm{i}u_0,\alpha_1-\mathrm{i}u_1)=\Phi(t, (\alpha_0-\mathrm{i}u_0)\theta_0 + (\alpha_1-\mathrm{i}u_1)\theta_1,x_0)$, $\Phi(.)$ given by \eqref{eq:mgf}, $(\alpha_0,\alpha_1)\in \mathbb{R}_+^2$.
\end{proposition}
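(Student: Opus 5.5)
The plan is to obtain \eqref{eq:MomentsOfXConditionalOnS} as the ratio of two quantities already represented in Fourier form. I will first write the conditional moment via Bayes' formula as in \eqref{eq:ConditionalMomentZqYp}, with $Y=s_t=\textup{tr}[\theta_0x_t]$, $Z=x_{ii,t}=\textup{tr}[\theta_1x_t]$ and threshold $y_*=s_*$. The numerator $\mathbb{E}[Z^qY^p\mathbf{1}_{\{Y>s_*\}}]$ is then given by Proposition~\ref{prop:JointExpectedShortfallFourierTransform}. The denominator $\mathbb{P}(Y>s_*)$ is the case $p=0$ of Proposition~\ref{prop:ExpectedShortfallFourierTransform}.

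The joint MGF of $(Y,Z)$ follows from linearity of the trace: $\mathbb{E}[e^{z_0\textup{tr}[\theta_0x_t]+z_1\textup{tr}[\theta_1x_t]}]=\Phi(t,z_0\theta_0+z_1\theta_1,x_0)$. This is explicit by Proposition~\ref{prop:MGFWishart}, with $a$ from \eqref{eq:Asimple} and $b$ from \eqref{eq:ExpB}. The marginal MGF of $Y$ is the same expression with $z_1=0$. Substituting these and collecting constants gives the displayed formula. I will track the $1/\pi$ versus $1/\pi^2$ normalisation consistently between numerator and denominator, since the common factor $1/\pi$ partly cancels in the ratio.

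The real work is verifying the integrability hypotheses. I will proceed as in the proof of Proposition~\ref{prop:ConditionalMomentsOfX}.

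\textbf{Domination step.} The modulus of the integrand is bounded by
\[
e^{-\alpha_0 s_*}\,\mathbb{E}\bigl[e^{\alpha_0\textup{tr}[\theta_0x_t]+\alpha_1\textup{tr}[\theta_1x_t]}\bigr]\,|\alpha_0-\mathrm{i}u_0|^{-(j+1)}|\alpha_1-\mathrm{i}u_1|^{-(q+1)}.
\]
The expectation is finite for $\alpha_0,\alpha_1$ small, because $I_n-2(\alpha_0\theta_0+\alpha_1\theta_1)\varsigma_t$ stays invertible with positive determinant when $\|\alpha_0\theta_0+\alpha_1\theta_1\|$ is small.

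This crude bound handles the power factors $|\alpha-\mathrm{i}u|^{-(k+1)}$, which are integrable on $\mathbb{R}$ in a given variable only when $k\ge1$. The cases $j=0$ (in $u_0$) and $q=0$ (in $u_1$) need decay from the MGF itself.

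\textbf{Decay step.} Here I will use the determinant structure: $|\det(I_n-2(z_0\theta_0+z_1\theta_1)\varsigma_t)|^{-\beta/2}$ with $z_k=\alpha_k-\mathrm{i}u_k$. I will also show that the exponent term $\textup{tr}[a(t,\cdot)x_0]$ stays bounded in real part as $|u|\to\infty$; the same argument as for $a(t,(\alpha_1-\mathrm{i}u_1)\theta_1)$ in the proof of Proposition~\ref{prop:ConditionalMomentsOfX} should work.

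\textbf{Main obstacle.} The hardest step is the two-dimensional decay. Since $\theta_0=I_n$ is full rank, the determinant grows like $|u_0|^{n}$ in the $u_0$ direction, giving decay $|u_0|^{-n\beta/2}$, which is ample since $\beta\ge n+1$. But $\theta_1=e_{ii}$ has rank one, so in the $u_1$ direction alone the determinant grows only linearly. One then gets decay $|u_1|^{-\beta/2}\cdot|u_1|^{-(q+1)}$, which is still integrable since $\beta/2\ge1$.

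To make this joint, I would bound $|\det(I_n-2(z_0I_n+z_1e_{ii})\varsigma_t)|$ from below. The plan is to diagonalise, i.e.\ work with $\varsigma_t^{1/2}(z_0I_n+z_1e_{ii})\varsigma_t^{1/2}$, and use the fact that the imaginary part $-(u_0I_n+u_1e_{ii})$ is a real symmetric matrix whose eigenvalues dominate $\max(|u_0|,|u_1|)$ in at least one direction. This should yield a lower bound of the form $c(1+|u_0|)^{a}(1+|u_1|)^{b}$ with $a\beta/2>1$ and $b\beta/2>0$, enough combined with the $|\alpha_1-\mathrm{i}u_1|^{-(q+1)}$ factor. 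If a product bound proves awkward, I will fall back to a polar-coordinates estimate on $\mathbb{R}^2$, requiring total decay faster than $|u|^{-2}$. This holds since $n\beta/2+1>2$ for $n\ge1$, $\beta\ge2$, once the degenerate directions are checked separately. Fubini then applies, and the proposition follows.
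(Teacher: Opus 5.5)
Your route is the paper's: Bayes' formula, Proposition~\ref{prop:JointExpectedShortfallFourierTransform} for the numerator, the $p=0$ case of Proposition~\ref{prop:ExpectedShortfallFourierTransform} for the denominator, and then an integrability check. That check uses the bound $|\Phi(z)|\le\Phi(\Re z)$ for $j\ge1$ and determinant decay for $j=0$.

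Your check is more careful than the paper's on two counts. The paper applies the crude bound for $j\ge1$ although $|\alpha_1-\mathrm{i}u_1|^{-1}$ is not integrable when $q=0$. It then dismisses $j=0$ with ``proceed as in the previous proposition''; there, the asserted $u_1^{-n\beta/2}$ decay is really only $u_1^{-\beta/2}$ for the rank-one $e_{ii}$, as you correctly note.

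Your eigenvalue argument does close the two-dimensional estimate. It gives $|\det(I_n-2(z_0I_n+z_1e_{ii})\varsigma_t)|\ge c\,(1+\max(|u_0|,|u_1|))\ge c\,(1+|u_0|)^{1/2}(1+|u_1|)^{1/2}$. With $\Re\,\textup{tr}[a\,x_0]$ bounded above, the integrand is therefore $O\bigl((1+|u_0|)^{-1-\beta/4}(1+|u_1|)^{-1-\beta/4}\bigr)$ for all $j,q\ge0$.

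Two things you inherit from the paper, however.

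First, the constants. The integrand $F(u_0,u_1)$ only satisfies $F(-u_0,-u_1)=\overline{F(u_0,u_1)}$. So $\frac{1}{4\pi^2}\int_{\mathbb{R}^2}F$ folds to $\frac{1}{2\pi^2}\int_0^{\infty}\int_{-\infty}^{+\infty}\Re F\,du_1\,du_0$, not to the quadrant integral $\frac{1}{\pi^2}\int_0^{\infty}\int_0^{\infty}\Re F$ used in Proposition~\ref{prop:JointExpectedShortfallFourierTransform} and hence in the displayed formula. The discrepancy is $\frac{1}{2\pi^2}\int_0^{\infty}\int_0^{\infty}\Re[F(u_0,-u_1)-F(u_0,u_1)]\,du_0\,du_1$. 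For independent $Y,Z$, where $F=gh$, it equals $\frac{1}{\pi^2}\int_0^\infty\Im g\,du_0\int_0^\infty\Im h\,du_1$. This is nonzero in general; already $\int_0^\infty\Im\,(\alpha-\mathrm{i}u)^{-2}\,du=1/\alpha$. Careful bookkeeping therefore gives the formula with $u_1$ integrated over $\mathbb{R}$ and $\frac{1}{2\pi}$ in place of $\frac{1}{\pi}$. ``Substituting and collecting constants'' does not produce the statement as displayed.

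Second, for $j=0$ or $q=0$ the interchange is not literally Fubini. Since $|\alpha-\mathrm{i}u|^{-1}\notin L^1(\mathbb{R})$, the joint integrand in $(u,v)$ is never absolutely integrable, whatever the MGF does. The fix is to truncate the kernel integral to $[-R,R]$ and interchange there. Then let $R\to\infty$ by dominated convergence, since the truncated kernels are bounded by $Ce^{\alpha v}$ uniformly in $R$. Your decay bound identifies the limit with the Lebesgue integral.
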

\begin{proof}
The integrability condition when $j=1,\ldots,p$ is:
\begin{align*}
\left| \frac{ e^{-(\alpha_0 -\mathrm{i} u_0)s_* }\Phi_{(\textup{tr}[\theta_0 x_t],\textup{tr}[\theta_1 x_t])}(\alpha_0-\mathrm{i} u_0,\alpha_1-\mathrm{i} u_1)}{(\alpha_0-\mathrm{i} u_0)^{j+1}(\alpha_1-\mathrm{i} u_1)^{q+1}}\right| & \leq \frac{e^{-\alpha_0s_* }|\mathbb{E}[e^{(\alpha_0-\mathrm{i} u_0)\textup{tr}[\theta_0x_t] + (\alpha_1-\mathrm{i} u_1)\textup{tr}[\theta_1x_t] }]|}{| (\alpha_0-\mathrm{i} u_0)^{j+1}(\alpha_1-\mathrm{i} u_1)^{q+1}|}\\
&\leq e^{-\alpha_0 s_* } \frac{|\mathbb{E}[e^{\alpha_0\textup{tr}[\theta_0x_t] + \Re(\alpha_1)\textup{tr}[\theta_1x_t] }]|}{| (\alpha_0-\mathrm{i} u_0)^{j+1}(\alpha_1-\mathrm{i} u_1)^{q+1}|},
\end{align*}
after using Jensen's inequality. Taking $\alpha_0>0$ and $\alpha_1>0$ but sufficiently small, using similar arguments as in the previous proposition we reach that $\mathbb{E}[e^{\textup{tr}[(\alpha_0\theta_0 + \alpha_1\theta_1)x_t] }]<\infty$ and, therefore, that the integrand in the numerator of \eqref{eq:MomentsOfXConditionalOnS} is bounded by an integrable function. When $j=0$, we proceed as in the previous proposition to establish the integrability.
\end{proof}

As previously mentioned, the above result is of interest but it involves a multi-dimensional integration. However, Proposition~\ref{prop:JointExpectedShortfallDerivativeFourierTransform} shows that this can be reduced to a one-dimensional integration if the MGF of the underlying state variable can be derived. As Corollaries~\ref{coro:MGFWishartQDerivative} and \ref{coro:MGF2WishartQDerivativeTime2} clearly demonstrate, the Wishart process possesses this property. The proof of the following proposition is a straightforward consequence of Proposition~\ref{prop:JointExpectedShortfallDerivativeFourierTransform} and is therefore omitted.

\begin{proposition}\label{prop:MomentsOfXConditionalOnS1Dim}
Let $\theta_0=\sum_{i=1}^{n} e_{ii}$  and $\theta_1= e_{ii}$ for $i\in \lbrace 1,\ldots,n \rbrace $ so that $s_t=\textup{tr}[\theta_0 x_t]$ and   $x_{ii,t}=\textup{tr}[\theta_1 x_t]$. Then the $(p\text{-th},q\text{-th})$ moment of $(\textup{tr}[\theta_0 x_t],\textup{tr}[\theta_1 x_t])$ conditional on the event $ s_t>s_*$, with $s_*\in \mathbb{R}_+$, can be written as

\begin{align}
\mathbb{E}\left[ \textup{tr}[\theta_0 x_t]^p\textup{tr}[\theta_1 x_t]^q | \textup{tr}[\theta_0 x_t]>s_* \right]=\frac{\int_{0}^{+\infty} \sum_{j=0}^p \binom{p}{j}  s_*^{p-j}c_j\Re\left( \frac{e^{-(\alpha_0 -\mathrm{i} u_0)s_* }\Phi_{(\textup{tr}[\theta_0x_t],\textup{tr}[\theta_1x_t])}^{(0,q)}(\alpha_0-\mathrm{i} u_0,0)}{(\alpha_0-\mathrm{i} u_0)^{j+1}}\right)du_0}{\int_0^{+\infty}\Re\left( \frac{e^{-(\alpha_0 -\mathrm{i} u_0)s_*}\Phi_{\textup{tr}[\theta_0x_t]}(\alpha_0-\mathrm{i}u_0)}{(\alpha_0-\mathrm{i}u_0)} \right)du_0},\label{eq:ConditionalMomentXq}
\end{align}
\normalsize
with \(\alpha_0>0\), \(c_j=\Gamma(j+1)\), $\Phi^{(0,q)}_{(\textup{tr}[\theta_0x_t],\textup{tr}[\theta_1x_t])}(\alpha_0-\mathrm{i}u_0,0)=\bar{\Phi}_{\nu^q}(t, (\alpha_0-\mathrm{i}u_0)\theta_0, \theta_1,x_0)$ and $\bar{\Phi}_{\nu^q}(.,.,.,.)$ given by \eqref{eq:MGFWishartQDerivative}.
\end{proposition}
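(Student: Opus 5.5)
The plan is to write the conditional moment as a ratio via \eqref{eq:ConditionalMomentZqYp}, with $Y=\textup{tr}[\theta_0 x_t]=s_t$ and $Z=\textup{tr}[\theta_1 x_t]=x_{ii,t}$, and to evaluate the numerator with Proposition~\ref{prop:JointExpectedShortfallDerivativeFourierTransform} and the denominator with Proposition~\ref{prop:ExpectedShortfallFourierTransform} for $p=0$.

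\textbf{Step 1: the numerator.} By Proposition~\ref{prop:JointExpectedShortfallDerivativeFourierTransform},
\begin{align*}
\mathbb{E}[Z^qY^p\mathbf{1}_{\{Y>s_*\}}]=\int_0^{+\infty}\sum_{j=0}^p\binom{p}{j}\frac{s_*^{p-j}\Gamma(j+1)}{\pi}\Re\left(\frac{e^{-(\alpha_0-\mathrm{i}u_0)s_*}\Phi^{(0,q)}_{(Y,Z)}(\alpha_0-\mathrm{i}u_0,0)}{(\alpha_0-\mathrm{i}u_0)^{j+1}}\right)du_0 .
\end{align*}

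\textbf{Step 2: identify the derivative term.} I will show that
\begin{align*}
\Phi^{(0,q)}_{(Y,Z)}(\alpha_0-\mathrm{i}u_0,0)=\partial^q_\nu\,\mathbb{E}\left[e^{\textup{tr}[((\alpha_0-\mathrm{i}u_0)\theta_0+\nu\theta_1)x_t]}\right]\Big|_{\nu=0}.
\end{align*}
This holds because $(\alpha_0-\mathrm{i}u_0)Y+\nu Z=\textup{tr}[((\alpha_0-\mathrm{i}u_0)\theta_0+\nu\theta_1)x_t]$ by linearity of the trace. The right-hand side is exactly $\bar\Phi_{\nu^q}(t,(\alpha_0-\mathrm{i}u_0)\theta_0,\theta_1,x_0)$ from Corollary~\ref{coro:MGFWishartQDerivative}, with $\theta_0$ taken complex; the formulas \eqref{eq:Asimple} and \eqref{eq:Anu} extend analytically to this case.

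\textbf{Step 3: the denominator.} Proposition~\ref{prop:ExpectedShortfallFourierTransform} with $p=0$ gives $\mathbb{P}(s_t>s_*)$ as the stated one-dimensional integral. The common factor $1/\pi$ cancels between numerator and denominator, which yields the displayed ratio.

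\textbf{Step 4: integrability.} This is the main obstacle. Two facts are needed.

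First, interchanging $\partial_\nu^q$ and the expectation. For small $\alpha_0>0$ and $|\nu|<\epsilon$, we have
\begin{align*}
|Z^q e^{(\alpha_0-\mathrm{i}u_0)Y}|\le Z^q e^{\alpha_0 Y}\le C\,e^{(\alpha_0+\epsilon)\textup{tr}[I_n x_t]}.
\end{align*}
The right-hand side has finite expectation by Proposition~\ref{prop:MGFWishart}, because $I_n-2(\alpha_0+\epsilon)\varsigma_t$ is invertible for small parameters. Dominated convergence then justifies the interchange.

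Second, integrability in $u_0$ of $\bar\Phi_{\nu^q}/(\alpha_0-\mathrm{i}u_0)^{j+1}$.
\begin{itemize}
\item For $j\ge1$, the bound $|\mathbb{E}[Z^qe^{(\alpha_0-\mathrm{i}u_0)Y}]|\le\mathbb{E}[Z^qe^{\alpha_0Y}]<\infty$ together with the factor $|\alpha_0-\mathrm{i}u_0|^{-2}$ suffices.
\item For $j=0$, a decay estimate is needed, argued as in Proposition~\ref{prop:ConditionalMomentsOfX}. The derivative $\bar\Phi_{\nu^q}$ equals $\Phi(t,(\alpha_0-\mathrm{i}u_0)\theta_0,x_0)$ times a polynomial in $a_\nu$-type and $b_\nu$-type derivatives.
\begin{itemize}
\item The $a$-type terms, such as $(I_n-2z\theta_0\varsigma_t)^{-1}\theta_1\varsigma_t(I_n-2z\theta_0\varsigma_t)^{-1}z\theta_0$ with $z=\alpha_0-\mathrm{i}u_0$, stay bounded as $u_0\to\infty$.
\item The $b$-type terms, such as $\beta\,\textup{tr}[(I_n-2\varsigma_t z\theta_0)^{-1}\varsigma_t\theta_1]$, also stay bounded.
\item $|\det(I_n-2z\varsigma_t\theta_0)|^{-\beta/2}$ decays like $u_0^{-n\beta/2}$.
\end{itemize}
Hence the integrand is $O(u_0^{-n\beta/2-1})$, which is integrable since $\beta\ge n+1$. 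The denominator is handled identically.
\end{itemize}
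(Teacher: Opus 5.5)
Your proposal is correct and follows the paper's route: the paper omits the proof as a direct consequence of Proposition~\ref{prop:JointExpectedShortfallDerivativeFourierTransform} with $Y=s_t$, $Z=x_{ii,t}$, the identification $\Phi^{(0,q)}_{(Y,Z)}(\alpha_0-\mathrm{i}u_0,0)=\bar{\Phi}_{\nu^q}(t,(\alpha_0-\mathrm{i}u_0)\theta_0,\theta_1,x_0)$, and the $p=0$ case for the denominator, exactly as you do. Your integrability check spells out what the paper leaves implicit, following Proposition~\ref{prop:ConditionalMomentsOfX} and correctly using that $\theta_0=I_n$ has full rank to get $O(u_0^{-n\beta/2})$ decay from the determinant; the only small fix is that your domination bound should include the factor $e^{\nu Z}$, which your dominating function $C e^{(\alpha_0+\epsilon)\textup{tr}[x_t]}$ already covers since $0\le Z\le \textup{tr}[x_t]$.
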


Note that no order is specified on $\theta_1$ and $\theta_2$, and so we can interchange these matrices at will. This means we can easily compute the cross-moments by conditioning on the individual losses $x_{ii,t}$ or on the portfolio itself $\sum_{i=1}^n x_{ii,t}$.\\

Still in a portfolio perspective, in our framework we not only specify the distribution of losses but also their dependencies, which can be used to perform conditioning. Let $\theta_0= (e_{12} + e_{21})/2$, then $x_{12,t}=\textup{tr}[\theta_0 x_t]$. Similarly, let $\theta_1= e_{11}$ and $\theta_2= e_{22}$, so that $x_{11,t}=\textup{tr}[\theta_1 x_t]$ and $x_{22,t}=\textup{tr}[\theta_2 x_t]$. We can then calculate 
\begin{align} 
\mathbb{E}[x_{11,t}^p x_{22,t}^q| x_{12,t}>x_*], 
\end{align} which is the expected cross-moment of the first loss and the second loss conditional on the covariance between these losses being above a certain threshold. This example exploits the positive definite matrix nature of the Wishart process. The proposition below is an application of Proposition~\ref{prop:TripleExpectedShortfallDerivativeFourierTransform}, which allows us to compute such a conditional cross-moment of the Wishart process.

\begin{proposition}\label{prop:LMomentsOfXConditionalOnS1Dim}
Consider $\theta_0= (e_{12} + e_{21})/2$, $\theta_1= e_{11}$, and $\theta_2= e_{22}$ so that $x_{12,t}=\textup{tr}[\theta_0 x_t]$, $x_{11,t}=\textup{tr}[\theta_1 x_t]$ and $x_{22,t}=\textup{tr}[\theta_2 x_t]$. For a given pair $(p,q)\in \mathbb{N}^2$ and $x_* \in \mathbb{R}_+$, we have
	\begin{align}
	\mathbb{E}\left[ \textup{tr}[\theta_1 x_t]^{p}\textup{tr}[\theta_{2} x_t]^{q} | \textup{tr}[\theta_0 x_t]>x_* \right]=\frac{ \int_{0}^{+\infty}\Re\left( \frac{e^{-(\alpha_0 -\mathrm{i} u_0)x_* }\Phi_{(\textup{tr}[\theta_0x_t],\textup{tr}[\theta_1x_t],\textup{tr}[\theta_{2}x_t])}^{(0,p,q)}(\alpha_0-\mathrm{i} u_0,0,0)}{(\alpha_0-\mathrm{i} u_0)}\right)du_0}{\int_0^{+\infty}\Re\left( \frac{e^{-(\alpha_0 -\mathrm{i} u_0)x_*}\Phi_{\textup{tr}[\theta_0x_t]}(\alpha_0-\mathrm{i}u_0)}{(\alpha_0-\mathrm{i}u_0)} \right)du_0},\label{eq:nConditionalMomentXq}
	\end{align}
	with  $\Phi^{(0,p,q)}_{(\textup{tr}[\theta_0x_t],\textup{tr}[\theta_1x_t],\textup{tr}[\theta_{2}x_t])}(\alpha_0-\mathrm{i}u_0,0,0) = \left.\frac{\partial^{p+q}}{\partial \nu_1^p \partial \nu_{2}^q} \mathbb{E}\left[ e^{(\alpha_1 - \mathrm{i} u_1)\textup{tr}[\theta_0 x_t] + \nu_1 \textup{tr}[\theta_1 x_t] +\nu_{2}\textup{tr}[\theta_{2}x_t]   }\right]\right|_{\nu_{1} = \nu_{2} = 0}$ whose calculation can be done following those of Proposition~\ref{prop:MGFWishartDerivative} and Corollary~\ref{coro:MGFWishartQDerivative}.
	\end{proposition}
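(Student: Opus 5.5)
The plan is to read the statement as an instance of Proposition~\ref{prop:TripleExpectedShortfallDerivativeFourierTransform}, or equivalently Remark~\ref{rem:nExpectedShortfallDerivativeFourierTransform} with three variables. We take $Y=\textup{tr}[\theta_0x_t]=x_{12,t}$ as the conditioning variable, with exponent $p_1=0$ on $Y$, and set $X=\textup{tr}[\theta_1x_t]$, $Z=\textup{tr}[\theta_2x_t]$ with exponents $p$ and $q$. With $p_1=0$ the sum over $j$ collapses to the single term $j=0$, $c_0=\Gamma(1)=1$, and no power of $x_*$ appears. This produces the numerator of \eqref{eq:nConditionalMomentXq} with the factor $1/\pi$.

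For the denominator, write $\mathbb{P}(x_{12,t}>x_*)=\mathbb{E}[\mathbf{1}_{\lbrace Y>x_*\rbrace}]$ and apply Proposition~\ref{prop:ExpectedShortfallFourierTransform} with $p=0$. The factors $1/\pi$ cancel in the ratio \eqref{eq:ConditionalMomentZqYp}.

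Two points need care. First, $x_{12,t}$ is not positive; it lives on $\mathbb{R}$. We therefore invoke the remarks after Propositions~\ref{prop:ExpectedShortfallFourierTransform} and \ref{prop:TripleExpectedShortfallDerivativeFourierTransform}, which extend them to real-valued variables. With $j=0$ only $(v)_+^0=\mathbf{1}_{\lbrace v>0\rbrace}$ is transformed, and the Bateman--Erd\'elyi identity is used only in the variable $v$. Second, the joint MGF with derivatives must be identified. By linearity of the trace,
\[
\mathbb{E}[e^{z\,\textup{tr}[\theta_0x_t]+\nu_1\textup{tr}[\theta_1x_t]+\nu_2\textup{tr}[\theta_2x_t]}]=\Phi(t,z\theta_0+\nu_1\theta_1+\nu_2\theta_2,x_0).
\]
Here $\Phi$ is given by \eqref{eq:solPhi}, which is analytic in its matrix argument near $z\theta_0$ whenever $I_n-2z\theta_0\varsigma_t$ is invertible. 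Its mixed $\nu$-derivatives at $0$ therefore exist and equal $\mathbb{E}[X^pZ^qe^{zY}]$, by differentiating under the expectation, which is justified by dominated convergence using a slightly larger real exponent. They can be computed by iterating Proposition~\ref{prop:MGFWishartDerivative}, as in Corollary~\ref{coro:MGFWishartQDerivative}.

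The main obstacle is verifying the integrability hypothesis: we need $u_0\mapsto \Phi^{(0,p,q)}(\alpha_0-\mathrm{i}u_0,0,0)/(\alpha_0-\mathrm{i}u_0)$ to be integrable, and similarly for the denominator, with only one power of $u_0$ available in the denominator. I would follow the $j=0$ argument in the proof of Proposition~\ref{prop:ConditionalMomentsOfX}. The goal is to show that $\Phi(t,(\alpha_0-\mathrm{i}u_0)\theta_0,x_0)$ and its $\nu$-derivatives decay polynomially as $|u_0|\to\infty$. The $\nu$-derivatives are polynomials in $a_\nu$- and $b_\nu$-type terms, and each such term involves $(I_n-2z\theta_0\varsigma_t)^{-1}$ multiplied by bounded factors, so it stays bounded in $u_0$ (at worst grows like $|u_0|$) times $\Phi$ itself.

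The point that needs checking is that $(I_n-2z\theta_0\varsigma_t)^{-1}z\theta_0$ remains bounded. Here $\theta_0$ is indefinite but nonsingular on the $2\times2$ block; it has eigenvalues $\pm1/2$. Writing $I_n-2z\theta_0\varsigma_t=\varsigma_t^{-1/2}(I_n-2z\varsigma_t^{1/2}\theta_0\varsigma_t^{1/2})\varsigma_t^{1/2}$ shows that $\det(I_n-2z\theta_0\varsigma_t)=\prod_k(1-2z\mu_k)$, where the $\mu_k$ are the real eigenvalues of $\varsigma_t^{1/2}\theta_0\varsigma_t^{1/2}$. Exactly two of these are nonzero (one positive, one negative). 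Taking $\alpha_0$ small enough, below $1/(2\mu_{\max})$, no factor vanishes, and $|\det|^{-\beta/2}\sim C|u_0|^{-\beta}$. The exponential term $\exp(\textup{tr}[a\,x_0])$ stays bounded because $a$ converges as $|u_0|\to\infty$. Since $\beta\ge n+1\ge3$, this decay beats the polynomial growth coming from the derivatives and the $1/|u_0|$ factor, giving an integrable majorant and completing the proof.
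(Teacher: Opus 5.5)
Your proposal is correct and follows the paper's route. The paper gives no separate proof: it presents this statement as an application of Proposition~\ref{prop:TripleExpectedShortfallDerivativeFourierTransform} with exponent $0$ on the conditioning variable $x_{12,t}$, with Proposition~\ref{prop:ExpectedShortfallFourierTransform} supplying the denominator and the remark on real-valued variables covering the sign of $x_{12,t}$. Your integrability check fills in what the paper leaves implicit here: the rank-two, indefinite $\theta_0$ gives $|\det(I_n-2z\theta_0\varsigma_t)|^{-\beta/2}\sim C|u_0|^{-\beta}$. One small tightening: $(I_n-2z\theta_0\varsigma_t)^{-1}$ is uniformly bounded on the line $\Re z=\alpha_0$, so every $\nu$-derivative factor is $O(1)$ and the integrand is $O(|u_0|^{-\beta-1})$ for all $(p,q)$; your hedge ``at worst grows like $|u_0|$'' is unnecessary, and if that growth were per factor, $\beta\geq 3$ would not absorb it for large $p+q$.
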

{\color{black} Note that \(x_{12,t}\) can be either positive or negative. As per the remark that follows Proposition \ref{prop:JointExpectedShortfallFourierTransform}, one has to use either \citet[3.2(3)]{BatemanEderlyi1954}, leading to \(\alpha_0>0\), or \citet[3.2(4)]{BatemanEderlyi1954} leading to \(\alpha_0<0\).}\\

{\color{black}
\begin{remark}
As Equation \eqref{eq:ConditionalMomentSq} shows, Proposition \ref{prop:ConditionalMomentsOfX} applies to a portfolio, while Propositions \ref{prop:MomentsOfXConditionalOnS1Dim} and \ref{prop:LMomentsOfXConditionalOnS1Dim} demonstrate how to compute higher (conditional) cross-moments. The Wishart distribution is an example of a distribution for which computing these risk measures using the density leads to significant numerical difficulties, and in particular when the conditioning variable in the sum of dependent losses. Note also that the framework, both the methodology and the chosen distribution on the space of symmetric positive definite matrices, allows for the determination of these risk measures even when one (or several) of the variables controlling the dependency of the distribution is involved. This dependency can be fairly general; in particular, it can take any sign, which contrasts with existing results (see \citet[Definition 1, Condition 5]{FurmanLandsman2005} or \citet[Examples 6.1 and 6.2]{Blier-WongCossetteMarceau2025}), where there is a sign constraint because the state variable is a vector of independent random variables, and the dependence derives from one of the variable being a common factor to the losses.\\

Let us mention the two recent interesting works \citet{AriasSernaCaroLoperaLoubes2021} and \citet{AriasSernaCaroLoperaLoubes2025}, which compute the VaR, not a coherent risk measure, using the density of the state variable following a Wishart distribution. Compared to these works, the above formula give conditional higher moments, and by formulating the problem in terms of the MGF, the methodology remains numerically simple to implement since the density of the Wishart process, besides depending of many variables, involves a special function of matrix argument that is challenging to evaluate. In contrast, the MGF of the Wishart distribution only involves the ratio of matrices and the determinant of matrices, both functions are simple to derive. This aspect is crucial to obtain the above formulas.
\end{remark}
}

The above propositions, while daunting at first glance, are easily computable in most programming languages.

\subsubsection{The two dates case}

We can take advantage of the time-dependent nature of the Wishart process by considering losses at two different dates. In this particular case, one is interested in the intertemporal or time-lagged properties of the risk measures, which are most often analyzed using static distributions. {\color{black} This perspective relates to the time-series literature on risk measures, see for example, \citet{ChavezDemoulinGuillou2018} or \citet{GoegebeurGuillou2024}. Interestingly, it can also be related the literature on (multivariate) count processes (such as the Cox process), where one models two (or more) events (in our case, losses), and their dependence (see \citet{LuZhangZhu2024}).} The temporal nature of this problem often involves leveraging the autocorrelation of a single loss or the correlation between two types of losses occurring at different times. Within the context of the present work, this translates as follows.\\

Given two dates $t_0, t_1 \in \mathbb{R}_+$ such that $0<t_0 <t_1$, we are interested in the quantities $\mathbb{E}\left[ x_{ii,t_1}^q |  x_{ii,t_0}>x_* \right]$, $\mathbb{E}\left[ s_{t_1}^p | s_{t_0}>s_* \right]$, $\mathbb{E}\left[ s_{t_1}^p x_{ii, t_1}^q | s_{t_0}>s_* \right]$ with $x_*>0$, $s_*>0$ and $p, q \in \mathbb{N}$. For example, the first (conditional) expectation represents the expected value of the $q\text{-th}$ moment of the $i^{\text{th}}$ loss at time $t_1$ conditional on the event that at the previous time $t_0$  this loss was greater than $x_*$. The other expectations can be interpreted in a similar way. Thanks to Proposition~\ref{prop:MGFWishart2Dates}, Proposition~\ref{prop:MGF2WishartDerivative2}, Corollary~\ref{coro:MGF2WishartQDerivativeTime2} and Proposition \ref{prop:JointExpectedShortfallDerivativeFourierTransform}, they can be determined explicitly. Let us formulate the result in very general terms, with all these conditional expectations being special cases.\\

\begin{proposition}\label{prop:ConditionalMomentsOfXTwoDates2}
Let $t_0, t_1 \in \mathbb{R}_+$  such that $0<t_0<t_1$ and let $\theta_0,\theta_1\in \mathbb{S}_n$. Then the $(p\text{-th},q\text{-th})$ moment of $(\textup{tr}[\theta_0 x_{t_1}],\textup{tr}[\theta_1 x_{t_1}])$ conditional on $\textup{tr}[\theta_0 x_{t_0}]>x_*$, with $x_*\in \mathbb{R}_+$,  is given by
\begin{align}
\mathbb{E}\left[ \textup{tr}[\theta_0 x_{t_1}]^p \textup{tr}[\theta_1 x_{t_1}]^q | \textup{tr}[\theta_0 x_{t_0}]>x_* \right]=\frac{ \int_{0}^{+\infty}\Re\left( \frac{e^{-(\alpha_0 -\mathrm{i} u_0)x_* }\Phi_{(\textup{tr}[\theta_0x_{t_0}],\textup{tr}[\theta_0x_{t_1}],\textup{tr}[\theta_1x_{t_1}])}^{(0,p,q)}(\alpha_0-\mathrm{i} u_0,0,0)}{(\alpha_0-\mathrm{i} u_0)}\right)du_0}{\int_0^{+\infty}\Re\left( \frac{e^{-(\alpha_0 -\mathrm{i} u_0)x_*}\Phi_{\textup{tr}[\theta_0x_{t_0}]}(\alpha_0-\mathrm{i}u_0)}{(\alpha_0-\mathrm{i}u_0)} \right)du_0},\label{eq:ConditionalMomentXq2}
\end{align}
with $\Phi_{(\textup{tr}[\theta_0x_{t_0}],\textup{tr}[\theta_0x_{t_1}],\textup{tr}[\theta_1x_{t_1}])}^{(0,p,q)}(\alpha_0-\mathrm{i} u_0,0,0) =\tilde{\Phi}_{\nu_0^p\nu_1^q}(t_0,t_1,(\alpha_0-\mathrm{i} u_0)\theta_0,\theta_0,\theta_1 ,x_0) $ and  $\Phi_{\textup{tr}[\theta_0x_{t_0}]}(\alpha_0-\mathrm{i}u_0)=\Phi(t_0, (\alpha_0-\mathrm{i}u_0)\theta_0,x_0)$  given by ~\eqref{eq:MGF2WishartQDerivativeTime2} and ~\eqref{eq:mgf}, respectively, while $\alpha_0\in \mathbb{R}_+$.
\end{proposition}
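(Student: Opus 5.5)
The plan is to treat the statement as a direct instance of Proposition~\ref{prop:TripleExpectedShortfallDerivativeFourierTransform}, with power $0$ on the conditioning variable. Set $Y=\textup{tr}[\theta_0 x_{t_0}]$, $X=\textup{tr}[\theta_0 x_{t_1}]$ and $Z=\textup{tr}[\theta_1 x_{t_1}]$. Bayes' rule, as in \eqref{eq:ConditionalMomentZqYp}, writes the left-hand side of \eqref{eq:ConditionalMomentXq2} as $\mathbb{E}[X^pZ^q\mathbf{1}_{\lbrace Y>x_*\rbrace}]/\mathbb{E}[\mathbf{1}_{\lbrace Y>x_*\rbrace}]$. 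The two pieces are handled as follows.

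\begin{itemize}
\item \textbf{Numerator.} Apply \eqref{eq:ExpectationYXZ} with exponent $0$ on $Y$ and exponents $(p,q)$ on $(X,Z)$. Only the $j=0$ term survives, with $\binom{0}{0}x_*^0\Gamma(1)=1$. This gives the numerator up to the common factor $1/\pi$.
\item \textbf{Denominator.} Apply Proposition~\ref{prop:ExpectedShortfallFourierTransform} with exponent $0$ to $Y$. This gives the denominator with the same factor $1/\pi$, which cancels in the ratio.
\end{itemize}

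Next I identify the transforms. The joint MGF of $(Y,X,Z)$ at $(z,\nu_0,\nu_1)$ is
\[
\mathbb{E}\big[\exp\big(\textup{tr}[z\theta_0 x_{t_0}]+\textup{tr}[(\nu_0\theta_0+\nu_1\theta_1)x_{t_1}]\big)\big]=\tilde{\Phi}(t_0,t_1,z\theta_0,\nu_0,\nu_1,\theta_0,\theta_1,x_0).
\]
By Proposition~\ref{prop:MGFWishart2Dates} and \eqref{eq:barPhiTime2}, this expression holds first for real $z$ and then, by analytic continuation in $z$, for $z=\alpha_0-\mathrm{i}u_0$. Differentiating $p$ times in $\nu_0$ and $q$ times in $\nu_1$ at $0$ gives $\tilde{\Phi}_{\nu_0^p\nu_1^q}$ of \eqref{eq:MGF2WishartQDerivativeTime2}. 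The marginal MGF of $Y$ is $\Phi(t_0,z\theta_0,x_0)$.

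The differentiation can be moved inside the expectation. This is justified because $\mathbb{E}[|X|^p|Z|^q e^{\alpha_0 Y}]<\infty$, which follows from finiteness of $\mathbb{E}[e^{\textup{tr}[\alpha_0\theta_0x_{t_0}]+\epsilon\textup{tr}[\theta x_{t_1}]}]$ for small $\alpha_0,\epsilon>0$ and $\theta=\pm\theta_0,\pm\theta_1$. That finiteness in turn follows from the formula in Proposition~\ref{prop:MGFWishart2Dates}: $\theta_0+a(t_1-t_0,\epsilon\theta)$ stays small, so $I_n-2(\cdot)\varsigma_{t_0}$ is invertible.

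The main obstacle is the integrability hypothesis of Proposition~\ref{prop:TripleExpectedShortfallDerivativeFourierTransform}: one must show that
\[
u_0\mapsto \frac{\Phi^{(0,p,q)}(\alpha_0-\mathrm{i}u_0,0,0)}{\alpha_0-\mathrm{i}u_0}
\]
is integrable. The factor $1/|\alpha_0-\mathrm{i}u_0|$ alone is not integrable, so decay of the transform is needed, as in the $j=0$ case of Proposition~\ref{prop:ConditionalMomentsOfX}. I would condition on $\mathcal{F}_{t_0}$ and write the numerator transform as $\mathbb{E}[e^{zY}P(x_{t_0})]$. Here $P(x_{t_0})=\partial^{p}_{\nu_0}\partial^{q}_{\nu_1}\exp(\textup{tr}[a(t_1-t_0,\nu_0\theta_0+\nu_1\theta_1)x_{t_0}]+b)|_{0}$ is a polynomial in the entries of $x_{t_0}$. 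Equivalently, the transform is a finite sum of derivatives in a matrix direction $\eta$ of $\Phi(t_0,z\theta_0+\eta,x_0)$ at $\eta=0$.

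Using \eqref{eq:Asimple} and \eqref{eq:ExpB}, each such derivative has the form
\[
\Phi(t_0,z\theta_0,x_0)\times\text{(polynomial in entries of }(I_n-2z\theta_0\varsigma_{t_0})^{-1},\,z\theta_0,\,x_0).
\]
As $|u_0|\to\infty$, the resolvent $(I_n-2z\theta_0\varsigma_{t_0})^{-1}$ and $z(I_n-2z\theta_0\varsigma_{t_0})^{-1}$ stay bounded, at least in the generic case where $\theta_0\varsigma_{t_0}$ is nonsingular; otherwise one restricts to its range. So the polynomial factor is bounded. Meanwhile $|\det(I_n-2z\theta_0\varsigma_{t_0})|^{-\beta/2}$ decays like $|u_0|^{-r\beta/2}$, where $r=\textup{rank}(\theta_0)\ge1$, and $\beta\ge n+1\ge 2$. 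Hence the integrand is $O(|u_0|^{-1-r\beta/2})$, which is integrable. The denominator is handled identically, or directly by Proposition~\ref{prop:ConditionalMomentsOfX}'s argument.

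Taking real parts, which is legitimate since the left-hand sides are real, and folding $\int_{-\infty}^{\infty}$ onto $\int_0^{\infty}$ by conjugate symmetry completes the proof.
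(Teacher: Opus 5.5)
Your proposal is correct and follows the paper's proof. Both use Bayes' rule, the one-dimensional representation of Proposition~\ref{prop:TripleExpectedShortfallDerivativeFourierTransform} with zero power on $\textup{tr}[\theta_0 x_{t_0}]$, the identification of the $(0,p,q)$ derivative of the joint MGF with $\tilde{\Phi}_{\nu_0^p\nu_1^q}$ via \eqref{eq:barPhiTime2}, and the denominator as the $p=q=0$ case. The paper never checks the integrability hypothesis for this proposition, so your decay bound of order $|u_0|^{-1-r\beta/2}$ is a genuine addition. Your nonsingularity hedge is also unnecessary: the polynomial factor only involves $R=(I_n-2z\theta_0\varsigma_{t_0})^{-1}$ and $Rz\theta_0=\tfrac12(R-I_n)\varsigma_{t_0}^{-1}$, and both stay bounded in $u_0$ for small $\alpha_0$.
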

\begin{proof}
Define \(X=\textup{tr}[\theta_0x_{t_0}]\), \(Z_0=\textup{tr}[\theta_0x_{t_1}]\), and \(Z_1=\textup{tr}[\theta_1x_{t_1}]\). Then, proceeding as in Proposition \ref{prop:TripleExpectedShortfallDerivativeFourierTransform}, we reach
\begin{align*}
\mathbb{E}\left[ \textup{tr}[\theta_0 x_{t_1}]^p \textup{tr}[\theta_1 x_{t_1}]^q \mathbf{1}_{\lbrace \textup{tr}[\theta_0 x_{t_0}]>x_* \rbrace}\right]&=\mathbb{E}\left[Z_0^pZ_1^q \mathbf{1}_{\lbrace X>x_* \rbrace}\right]\\
&=\frac{1}{\pi}\int_0^{+\infty}\Re\left( \frac{e^{-(\alpha_0-\mathrm{i}u_0)x_*}\mathbb{E}[Z_0^pZ_1^qe^{(\alpha_0-\mathrm{i}u_0)X} ] }{(\alpha_0-\mathrm{i}u_0)} \right)du_0\\
&=\frac{1}{\pi}\int_0^{+\infty}\Re\left( \frac{e^{-(\alpha_0-\mathrm{i}u_0)x_*}\partial_{\nu_0^p\nu_1^q}^{p+q} \mathbb{E}[e^{(\alpha_0-\mathrm{i}u_0)X + \nu_0 Z_0+\nu_1Z_1} ] |_{\nu_0=0,\nu_1=0}}{(\alpha_0-\mathrm{i}u_0)} \right)du_0\,,
\end{align*}
with \(\alpha_0>0\). We conclude after defining, for \((v,v_0,v_1)\in \mathbb{R}^3\), the MGF \(\Phi_{(X,Z_0,Z_1)}(v,v_0,v_1) =\mathbb{E}[e^{v X + v_0 Z_0+v_1Z_1} ]= \tilde{\Phi}(t_0,t_1,v\theta_0,v_0,v_1,\theta_0,\theta_1,x_0)\), with this function given by \eqref{eq:barPhiTime2} and its \((0,p,q)\) derivative with respect to \((v,v_0,v_1)\), which we call \(\Phi_{(X,Z_0,Z_1)}^{(0,p,q)}(v,v_0,v_1)\). This is equal, for \(v=\alpha_0 -\mathrm{i}u_0\), \(v_0=0\), and \(v_1=0\), to \(\tilde{\Phi}_{\nu_0^p\nu_1^q}(t_0,t_1,(\alpha_0-\mathrm{i} u_0)\theta_0,\theta_0,\theta_1 ,x_0)\) given by \eqref{eq:MGF2WishartQDerivativeTime2}. The denominator in \eqref{eq:ConditionalMomentXq2} is obtained for \(p=0\) and \(q=0\).
\end{proof}

\subsection{A note on numerical precision}

In this section, we contextualize our results within the broader literature. We first show how our formulae align with common notational conventions, and then demonstrate how they can be applied to well-known distributions whose MGFs are already established in the literature.\\

A commonly encountered quantity is the TCE of a random variable \( Y \) at a threshold \( y_* \) that is defined as:
\[
\TCE_Y(y_*) = \mathbb{E}[Y \mid Y > y_*].
\]

Using \eqref{eq:ConditionalMomentYp} and Proposition~\ref{prop:ExpectedShortfallFourierTransform}, we can compute \(\TCE_Y(y_*)\) exactly from the MGF \(\Phi_Y(\cdot)\). This approach involves no approximation other than standard numerical integration. Notably, this method also extends to the sum of dependent random variables, \( S = \sum_{i=1}^n Y_i \), provided the MGF $\Phi_{(Y_1,...,Y_n)}(\cdot,...,\cdot)$ is known since \(\Phi_S(z)=\Phi_{(Y_1,...,Y_n)}(z,...,z)\) for \(z\in \mathbb{R}\)  is known. Thus, the approach remains entirely general.\\

One might argue that numerical integration is an approximation. However, it is important to emphasize that most `closed-form' results in the literature already involve integrals that must be evaluated numerically. For example, consider the generalized hyperbolic (GH) distribution of \cite{IgnatievaLandsman2019}, where a random vector is defined to be a GH distribution,  \((Y_1,...,Y_n) \sim GH_{n}(\lambda, \chi, \psi, \boldsymbol{\mu}, \Sigma, \boldsymbol{\gamma})\), if its MGF is given for \(\mathbf{v}\in \mathbb{R}^n\)  by
\begin{align}
    \Phi_{(Y_1,...,Y_n)}(\mathbf{v}) = \exp(\mathbf{v}^{\top} \boldsymbol{\mu}) \left(\frac{\psi}{\psi - 2 \mathbf{v}^\top \boldsymbol{\gamma} - \mathbf{v}^\top \Sigma \mathbf{v}}\right)^{\lambda / 2}  \frac{K_\lambda\left(\sqrt{\chi \left(\psi - 2 \mathbf{v}^\top \boldsymbol{\gamma} - \mathbf{v}^\top \Sigma \mathbf{v}\right)}\right)}{K_\lambda\left(\sqrt{\chi \psi}\right)}, \label{eq:GH_mgf}
\end{align}
where \( K_{\lambda}(\cdot) \) is the modified Bessel function of the third kind,  see \citet[Eq.~(2.6)]{IgnatievaLandsman2019}. Our integral-based approach can directly use this MGF and the known derivatives of \(K_{\lambda}(\cdot)\) (see \citet[][Eq.~(10.29.2)]{DLMF}) to produce all desired tail measures without further approximation. Alternatively, the key result from \cite{IgnatievaLandsman2019} computes the TCE of a (univariate) GH distribution through their Theorem 3.1, which is
\begin{align}
    \mathrm{TCE}_Y(y_*) = \mu + \frac{\gamma}{1-q} 
k_\lambda \bar{F}_{\mathrm{GH}_1}(y_*; \lambda + 1, \chi, \psi, \mu, \sigma^2, \gamma)
+ \frac{\sigma^2}{1-q} 
k_\lambda f_{\mathrm{GH}_1}(y_*; \lambda + 1, \chi, \psi, \mu, \sigma^2, \gamma), \label{eq:TCE_IL_1}
\end{align}
where $y_*$ is the $q$ level quantile and the density $y \to f_{\mathrm{GH}_1}(y; \lambda, \chi, \psi, \mu, \sigma^2, \gamma)$ takes the form of
\begin{align}
    f_{\mathrm{GH}_1}(y; \lambda&, \chi, \psi, \mu, \sigma^2, \gamma) =\\ 
&\frac{\psi^{\lambda} (\psi + \gamma^2 \sigma^{-2})^{\frac{1}{2} - \lambda} \left(\sqrt{\chi \psi}\right)^{-\lambda}}
{\sqrt{2\pi} \sigma K_{\lambda}(\sqrt{\chi \psi})}
\frac{K_{\lambda-\frac{1}{2}} \left( \sqrt{(\chi + Q)(\psi + \gamma^2 \sigma^{-2})} \right)}
{\left( \sqrt{(\chi + Q)(\psi + \gamma^2 \sigma^{-2})} \right)^{\frac{1}{2} - \lambda}}
e^{(y - \mu) \gamma \sigma^{-2}}, \label{eq:GH_density}
\end{align}
where \( \textup{det}(\Sigma)^{\frac{1}{2}} = \sigma \), and \( Q = (y - \mu)^2 \sigma^{-2} \), see \citet[Eq. (2.7)]{IgnatievaLandsman2019}. The survival function $\bar{F}_{\mathrm{GH}_1}(y_q; \lambda + 1, \chi, \psi, \mu, \sigma^2, \gamma)$ of the GH distribution is given by 1 minus the integral of the density function $f_{GH_1}(y_q; \lambda + 1, \chi, \psi, \mu, \sigma^2, \gamma)$ and $k_{\lambda}$ is a constant satisfying 
\begin{align}
k_\lambda = \sqrt{\frac{\chi}{\psi}} \frac{K_{\lambda+1}(\sqrt{\chi \psi})}{K_\lambda(\sqrt{\chi \psi})}. \label{eq:k_lambda}
\end{align}
As we can see, Theorem 3.1 of \cite{IgnatievaLandsman2019} requires an integration due to the survival function $\bar{F}_{GH_1}(y)$, where our results also only require a one-dimensional integration. However, our approach also has the flexibility to calculate conditional higher moments as we shall see below.\\

When considering a sum of losses, the TCE of the sum can be computed thanks to a formula of the form  \eqref{eq:TCE_IL_1} since if \((Y_1,...,Y_n) \sim GH_{n}(\lambda, \chi, \psi, \boldsymbol{\mu}, \Sigma, \boldsymbol{\gamma})\) then  \( S = \sum_{i=1}^n Y_i \) follows the distribution \(GH_{1}(\lambda, \chi, \psi, \textbf{1}^\top \boldsymbol{\mu}, \textbf{1}^\top \Sigma \textbf{1}, \textbf{1}^\top \boldsymbol{\gamma})\) with \(\textbf{1}=(1,\ldots,1)^\top\) an \(n\)-dimensional vector of ones. It is this property of the GH distribution that ensures a simple expression for the density of the sum that keeps the computation of the TCE a one-dimensional integration problem even in the multivariate case. Note that for many distributions, and in particular for the Wishart distribution, the density of the sum does not admit a simple expression, making the computation of the TCE of the sum challenging.\\

Furthermore, \citet[Theorem 3.2]{IgnatievaLandsman2019} show that the TCE of the sum can be decomposed as a sum of expectations of losses conditional on the sum being above a certain threshold. These conditional expectations are of the form \(\mathbb{E}[Y_i | S > s_*]\) and thanks to the property of the GH distribution they can admit an expression of the form \eqref{eq:TCE_IL_1} (with adjusted parameters). From a numerical point of view, it amounts to perform several one-dimensional integrations. With our approach, since the MGF of \((Y_1,...,Y_n)\) is known, the results can also be obtained at the same numerical cost.\myfootnote{{\color{black} A worked-out implementation of the \(\TCE\) for the GH distribution, computed using the density as in \citet{IgnatievaLandsman2019} and our approach, is provided in the supplementary appendix.}}\\

Thus, the numerical cost, the precision and accuracy of our results are comparable to, and in many cases equivalent to, methods currently in use. However, our approach applies to distributions for which the density of the sum is not simple to compute, and the Wishart distribution is one of them.\\

As previously noted, our approach not only applies to a wide range of distributions but also enables the computation of conditional higher moments, as demonstrated in the following section.

\subsection{Conditional higher moments tail risk measures}

Beyond the TCE, our methodology readily extends to other tail-based risk measures, and when considering the risk measures of a sum of losses, it naturally introduces conditional higher moments of losses.

\paragraph{The Tail variance (TV):} The tail variance, defined as the conditional second central moment given \( Y > y_* \), can be written as:
    \begin{align}
    \TV_Y(y_*) = \mathbb{E}[(Y - \mathbb{E}[Y | Y > y_*])^2 \mid Y > y_*] 
    = \mathbb{E}[Y^2 \mid Y > y_*] - (\mathbb{E}[Y \mid Y > y_*])^2. \label{eq:TailVariance}
    \end{align}
    Both \(\mathbb{E}[Y^2 \mid Y > y_*]\) and \(\mathbb{E}[Y \mid Y > y_*]\) are computable using our framework as aforementioned. Whether \(Y\) is an individual loss or a sum of losses has no impact on the computational cost. However, when \(Y\) is a sum of losses, it naturally introduces the tail covariance.
		
\paragraph{The Tail covariance (TCov):}
Consider the sum \( S = Y_1 + Y_2 \) of two losses, we illustrate the problem using two losses but the results can be extended to higher dimension, and assume that the MGF $\Phi_{(Y_1,Y_2)}$ is known. As mentioned above, we can compute \(\TCE_S(s^*)\), and we can also decompose it into individual contributions since we have:
    \[
    \TCE_S(s^*) = \mathbb{E}[Y_1 | S > s_*] + \mathbb{E}[Y_2 | S > s_*].
    \]
    This decomposition helps in understanding the contribution of each component variable to the tail behavior of the sum. This quantity can be calculated using Proposition~\ref{prop:JointExpectedShortfallDerivativeFourierTransform}. Then considering the tail variance risk of the sum, it leads to a decomposition of the form:
    \begin{align}
    \TV_{S}(s_*)= \TV_{Y_1+Y_2}(s_*) = \TV_{Y_1|S}(s_*) + \TV_{Y_2|S}(s_*) + 2\TCov_{Y_1,Y_2|S}(s_*), \label{eq:tail_variance_sum}
    \end{align}
    where \(\TV_{Y_1|S}(s_*) = \mathbb{E}[(Y_1 - \mathbb{E}[Y_1 | S > s_*])^2 \mid S > s_*]=\mathbb{E}[Y_1^2 \mid S > s_*]- \mathbb{E}[Y_1 | S > s_*]^2 \), a similar equation for \(\TV_{Y_2|S}(s_*)\), and the tail covariance defined as		
		\begin{align}
     \TCov_{Y_1,Y_2|S}(s_*) = \mathbb{E}[Y_1Y_2 \mid S > s_*] - \mathbb{E}[Y_1 \mid S > s_*] \mathbb{E}[Y_2 \mid S > s_*], \label{eq:TCov}
    \end{align}
		 which provides insight into how dependence structures behave in the tail regions. Using our Proposition~\ref{prop:JointExpectedShortfallDerivativeFourierTransform} and Proposition~\ref{prop:TripleExpectedShortfallDerivativeFourierTransform}, all these conditional expectations can be exactly computed using one-dimensional integrations. Note that \(\TCov_{Y_1,S|S}(s_*)\) corresponds to the tail covariance risk measure of \citet[Eq.~(1.12)]{FurmanLandsman2006}.\\

		This showcases how joint distributions can yield decompositions of more complex tail risk measures, in particular for moments higher than two.

 \paragraph{The Tail skewness (TS):}  Following the definition of the tail variance, it is natural to define the tail skewness as:
    \begin{align}
    \TS_Y(y_*) = \frac{\mathbb{E}[(Y - \mathbb{E}[Y | Y > y_*])^3 \mid Y > y_*]}{\left(\mathbb{E}[(Y - \mathbb{E}[Y| Y > y_*])^2 \mid Y > y_*]\right)^{3/2}}. \label{eq:tail_skew}
    \end{align}
    It can similarly be computed once we have the higher conditional moments, all available via our integral transforms.\\
		
		If we consider the tail skewness of the sum \( S = Y_1 + Y_2 \) of two losses, then the denominator of \eqref{eq:tail_skew} is simply \(\TV_{S}(s_*)\) given by \eqref{eq:tail_variance_sum}, and straightforward computations show that the numerator of \eqref{eq:tail_skew} can be decomposed as
		\begin{align*}
    \mathbb{E}[(S - \mathbb{E}[S | S > s_*])^3 \mid S > s_*]&= \mathbb{E}[Y_1^3\mid S > s_*] +3\mathbb{E}[Y_1^2Y_2\mid S > s_*]+3\mathbb{E}[Y_1Y_2^2\mid S > s_*]\\
		&+ \mathbb{E}[Y_2^3\mid S > s_*]-3\TV_{S}(s_*)\TCE_{S}(s_*) + 2 \TCE_{S}(s_*)^3,		
    \end{align*}
    with these conditional higher joint moments of \(Y_1,Y_2\), that is, conditional expectations of the form \(\mathbb{E}[Y_1^p Y_2^q|S>s_*]\) with \(p,q \in \mathbb{N}\) and \( p+q=3\), computed thanks to our Proposition~\ref{prop:JointExpectedShortfallDerivativeFourierTransform} and Proposition~\ref{prop:TripleExpectedShortfallDerivativeFourierTransform}. This latter conditional expectation is named tail joint moment in \citet[Eq.~(1.4)]{YangWangYao2025} and is a building block for the  tail joint central moment in \citet[Eq.~(1.5)]{YangWangYao2025}. The tail skewness appears in \citet{YangWangYao2025} (where it is named tail conditional skewness (TCS)), this work also defines the tail conditional kurtosis that can be handled in our framework.\myfootnote{{\color{black}A worked-out implementation of the \(\TV\) for the GH distribution, computed using the density as in \citet{IgnatievaLandsman2025} and our approach, is provided in the supplementary appendix. This appendix also contains an implementation of \(\TS\) using our approach, since it has not yet been derived using the density.}}

\section{Numerical experiments}\label{sec:numerical_implementation}

In this section, we present a numerical implementation to demonstrate the effectiveness of our formulas. We begin by detailing the implementation steps that control the dependency between losses in our model. Since our approach relies on the MGF of the state variable rather than its density, we also briefly review the well-known results that allow us to recover the key quantities used in risk measurement.

\subsection{Implementation details}
In this section, we consider the two-dimensional case \(n=2\) and illustrate our results using the {\color{black}synthetic} parameters listed in Table~\ref{tab:model_param_values}.
\begin{center}
    [ Insert Table~\ref{tab:model_param_values} here ]
\end{center}

We set $\beta$ in accordance with the Bru case as defined by \citet{Bru1991} for computational efficiency. Additionally, we set the initial value \(x_0\) to be the mean value of the stationary solution of the stochastic differential equation \eqref{eq:Wishart}, it is given by the matrix \(\bar{x}_{\infty}\) which is the solution of \(-\beta \sigma^2 =  m \bar{x}_{\infty} + \bar{x}_{\infty} m^\top \). This solution is well defined since it is a Lyapunov equation and we assume that the eigenvalues of \(m\) have negative real parts (\(\beta \sigma^2\) is a symmetric definite positive matrix by construction).\\  

From \eqref{eq:Wishart}, we can deduce the instantaneous covariance structure between the components of the Wishart process as follows: 
\begin{align} d\langle x_{11,.}, x_{11,.}\rangle_t &= 4 x_{11,t} (\sigma_{11}^2 + \sigma_{12}^2)dt, \label{eq:bracket_x11x11}\\ 
d\langle x_{22,.}, x_{22,.}\rangle_t &= 4 x_{22,t} (\sigma_{12}^2 + \sigma_{22}^2)dt, \label{eq:bracket_x22x22}\\ 
d\langle x_{12,.}, x_{12,.}\rangle_t &= x_{11,t}(\sigma_{12}^2 + \sigma_{22}^2)dt + 2x_{12,t}\sigma_{12}(\sigma_{11} + \sigma_{22})dt + x_{22,t}(\sigma_{11}^2 + \sigma_{12}^2)dt,\label{eq:bracket_x12x12}\\ 
d\langle x_{11,.}, x_{22,.} \rangle_t &= 4x_{12,t}\sigma_{12}(\sigma_{11} + \sigma_{22})dt. \label{eq:bracket_x11x22} 
\end{align}

If \( x_{11} \) and \( x_{22} \) represent two losses, their dependence is controlled by \eqref{eq:bracket_x11x22}. Setting \( \sigma_{12} = 0 \) results in a null instantaneous covariation. However, \eqref{eq:bracket_x12x12} shows that \( d\langle x_{12,.}, x_{12,.}\rangle_t \neq 0 \) even if \( \sigma_{12} = 0 \), since \( \sigma \in \mathbb{S}_2^{++} \) implies \( \sigma_{11} > 0 \) and \( \sigma_{22} > 0 \). The sign of this dependence is determined by the sign of \( \sigma_{12} \), which can be positive or negative as \( \sigma \in \mathbb{S}_2^{++} \). If \( \sigma_{12} = 0 \), then \( \sigma^2 \) is a diagonal matrix (\textit{i.e.}, \( (\sigma^2)_{12} = 0 \)). Note that if we wish to change to zero the dependence between \(x_{11,t}\) and \(x_{22,t}\), setting \(\sigma_{12}=0\) also inadvertently lowers the quadratic variation of \(x_{11,t}\) and \(x_{22,t}\) (given by \eqref{eq:bracket_x11x11} and \eqref{eq:bracket_x22x22}). Thus, to counteract this effect, we introduce a new matrix called $\tilde{\sigma}$ defined by
\begin{equation}\label{eq:sigma_tilde}
\tilde{\sigma}=
\begin{pmatrix}
\tilde{\sigma}_{11} & \tilde{\sigma}_{12} \\
\tilde{\sigma}_{12} & \tilde{\sigma}_{22} 
\end{pmatrix}
=
\begin{pmatrix}
\sqrt{\sigma_{11}^2 + \sigma_{12}^2} 	& 0 \\
0 																		& \sqrt{\sigma_{22}^2 + \sigma_{12}^2}
\end{pmatrix},
\end{equation}
and define a Wishart process \(\tilde{x}_t\) with dynamic given by \eqref{eq:Wishart} but with \(\sigma\) replaced with \(\tilde{\sigma}\). We call this process the zero-dependent equivalent (to $x_t$) process since, by construction, we have 
\begin{align}
\frac{d\langle \tilde{x}_{11,.}, \tilde{x}_{11,.}\rangle_t}{\tilde{x}_{11,t}} &= \sigma_{11}^2 + \sigma_{12}^2 =\frac{d\langle x_{11,.}, x_{11,.}\rangle_t}{x_{11,t}},\label{eq:tilde_sigma11_equal_sigma211}\\
d\langle \tilde{x}_{11,.}, \tilde{x}_{22,.} \rangle_t &=0, \nonumber \\
\frac{d\langle \tilde{x}_{22,.}, \tilde{x}_{22,.}\rangle_t}{\tilde{x}_{22,t}} &= \sigma_{12}^2 + \sigma_{22}^2 =\frac{d\langle x_{22,.}, x_{22,.}\rangle_t}{x_{22,t}}. \nonumber
\end{align}
The process \( \tilde{x}_t \) allows us to quantify the impact of the dependence between \( x_{11,t}\) and \( x_{22,t}\) by changing the value of $\sigma_{12}$ but keeping the quadratic variations unchanged.\\

In many applications, we need to find \( x_* \) for a given (scalar) variable \( X \) and \(\alpha \in ]0,\;1[\) such that \( \mathbb{E}[\mathbf{1}_{\{ X > x_* \}}] = \alpha \). In these applications, \( X \) is defined as \( X = \mathrm{tr}[\theta x_t] \) with \( \theta \in \mathbb{S}_n^{+} \), making \( X \) a positive random variable. The Laplace transform of \( X \) is given by \( \mathcal{L}_X(z) = \mathbb{E}[e^{-z X}] = \Phi(t, -z \theta, x_0) \) for \( z > 0 \), where \( \Phi(\cdot, \cdot, \cdot) \) is defined by \eqref{eq:mgf}. Using the inverse Laplace transform, the cumulative distribution function of \( X \) is:
$$
F_X(u) = \frac{1}{\pi \mathrm{i}} \int_{\bar{\gamma} - \mathrm{i}0}^{\bar{\gamma} + \mathrm{i}\infty} \Re\left( \frac{e^{u z} \mathcal{L}_X(z)}{z} \right) dz,
$$
with $\bar{\gamma}>0$ and the integral computed using a quadratic integration algorithm. Since the function \( u \mapsto F_X(u) \) is monotonic, the value of \( x_* \) such that \( \bar{F}_X(x_*) = 1 - F_X(x_*) = \mathbb{E}[\mathbf{1}_{\{ X > x_* \}}] = \alpha \) is found iteratively.\\

We use the following variables to illustrate our approach. First, we consider \(x_{11,t}=\mathrm{tr}[\theta x_{t}]\) with \( \theta =e_{11} \), and since \(\theta  \in \mathbb{S}_2^{+}\) we use the Laplace/inverse Laplace transforms above to find the truncating value for the distribution of \( x_{11,t}\). We take \(x_*=1.0\) and for which \(\mathbb{E}[\mathbf{1}_{\lbrace x_{11,t}>x_*\rbrace }]=0.0910 \) when \( t= 1\) and the parameters of Table \ref{tab:model_param_values}.\\

As we are interested in a portfolio of losses, we consider \(s_t = x_{11,t}+x_{22,t}=\mathrm{tr}[\theta x_t]\) with \(\theta = I_2\in \mathbb{S}_2^{++}\), so that \(s_t\) is a positive random variable with a distribution that we truncate at the level \( s_* = 1.3\) for which  \(\mathbb{E}[\mathbf{1}_{\lbrace s_{t}>s_*\rbrace }]=0.0584 \) when \( t= 1\) and the parameters of Table \ref{tab:model_param_values}. This value is also obtained using the Laplace/inverse Laplace transforms above.\\

Lastly, the equation \eqref{eq:bracket_x11x22} implies that the dependence between \(x_{11,t}\) and \(x_{22,t}\) is stochastic and its distribution is given by \(x_{12,t}\) and it can be used to perform the conditioning of a risk measure. Writing \(x_{12,t}=\mathrm{tr}[\theta x_t]\) with \(\theta = e_{12}/2 +e_{21}/2\) and choosing \(x_{12,*}=0.435\), it leads to \(\mathbb{E}[\mathbf{1}_{\lbrace x_{12,t}>x_{12,*}\rbrace }]=0.0544 \) for \(t=1\) and the parameters of Table \ref{tab:model_param_values}. Note that as \(x_{12,t}\) does not need to have a positive distribution, the Laplace/inverse Laplace transform above has to be replaced by a Fourier/inverse Fourier transform but since the procedure is similar, we omit the details.

\subsection{One date risk measures}
In this subsection, we present and analyze various risk measures computed for the model parameters specified in Table~\ref{tab:model_param_values} and the thresholds specified above. The values are reported in Table~\ref{tab:risk_measures}.

\begin{center}
    [ Insert Table~\ref{tab:risk_measures} here ]
\end{center}

The first four values provide the first two moments of \(x_{11,t}\) and \(s_t\) conditional on the events \(x_{11,t} > x_*\) and \(s_t > s_*\), respectively. The next four conditional moments, \(\mathbb{E}[x_{11,t} \mid s_t > s_*]\), \(\mathbb{E}[x_{11,t}^2 \mid s_t > s_*]\), \(\mathbb{E}[s_t \mid x_{11,t} > x_*]\), and \(\mathbb{E}[s_t^2 \mid x_{11,t} > x_*]\), illustrate the impact of changing the conditioning set. This involves considering the event \(s_t > s_*\) when computing the first two moments of \(x_{11,t}\), or the event \(x_{11,t} > x_*\) when computing the first two moments of \(s_t\). The values suggest that changing the conditioning event affects the second moment more than the first moment. Specifically, the first (conditional) moment of \(x_{11,t}\) changes by \(1.83\%\), while its second moment changes by 3.74\%. For \(s_t\), the corresponding changes are \(-3.07\%\) and \(-6.12\%\), respectively. The framework also allows us to use the conditioning event \(x_{12,t} > x_{12,*}\), capturing the effects of the covariance between the components of \(x_t\) when computing the first two (or higher) conditional moments of the aggregated losses \(s_t\). When considering the first two moments of \(s_t\), conditioning on this event leads to values closer to those obtained when conditioning on \(s_t > s_*\) than when conditioning on \(x_{11,t} > x_*\). The last value, \(\mathbb{E}[x_{11,t} s_t \mid s_t > s_*]\), illustrates the ability to compute conditional cross-moments. Figure \ref{fig:expectation_x_star} illustrates the sensitivity \(x_* \to \mathbb{E}\left[ x_{11,t} |  x_{11,t}>x_* \right]\) while  Figure \ref{fig:expectation_s_star} reports \( s_* \to \mathbb{E}\left[ s_{t} | s_{t}>s_* \right]\).

\begin{center}
    [ Insert Figure~\ref{fig:expectation_x_star} here ]
\end{center}

\begin{center}
    [ Insert Figure~\ref{fig:expectation_s_star} here ]
\end{center}

\subsection{Effect of correlation on risk measures}\label{sec:correl_measures}
 We analyze the effect of changing the dependence between the losses \(x_{11,t}\) and \(x_{22,t}\) on different risk measures. As explained earlier, this is achieved by setting the off-diagonal term of \(\sigma\) to zero. We denote by \(\tilde{x}_t\) the zero-dependent equivalent process to \(x_t\) and compute the different risk measures, keeping \(x_*\) and \(s_*\) unchanged. The results of the tail conditional expectation calculations are reported in Table~\ref{tab:risk_measures_sigma}.

\begin{center}
    [ Insert Table~\ref{tab:risk_measures_sigma} here ]
\end{center}

Analyzing the effect on shortfall measures with \(\sigma\)'s off-diagonals set to zero reveals interesting findings. Conditioning on \(x_{11,t}\) does not change the risk measure for both the first and second moments relative to the unchanged \(\sigma\) case, as expected since \(\tilde{\sigma}\) satisfies \eqref{eq:tilde_sigma11_equal_sigma211}. However, analyzing the sum \(\tilde{s}_t\) shows a decrease in shortfall compared to the correlated case, with changes of \(-1.71\%\) and \(-4.84\%\) for \(\mathbb{E}[ \tilde{s}_t \mid \tilde{s}_t > s_* ]\) and \(\mathbb{E}[ \tilde{s}_t^2 \mid \tilde{s}_t > s_* ]\), respectively. This indicates a narrowing of the tails due to the removal of dependence between the losses. The narrowing also explains the changes observed in \(\mathbb{E}[ \tilde{s}_t \mid \tilde{x}_{11,t} > x_* ]\) and \(\mathbb{E}[ \tilde{s}_t^2 \mid \tilde{x}_{11,t} > x_* ]\), with the values being \(-5.07\%\) and \(-13.37\%\), respectively. Conversely, \(x_{11,t}\) conditioned on \(\tilde{s}_t\) being above \(s_*\) increases relative to the dependent variant due to the lower expected level of \(\tilde{s}_t\) caused by the narrower tail. This also explains why the cross-moment \( x_{11,t} s_t \) conditioned on \(s_t > s_* \) increases by \(1.11\%\). Practically, it is worth noting that higher-order moments are more affected by changes in the dependence structure between the losses. Since the dependence between losses can be difficult to measure, setting its value to zero can significantly alter risk profiles and the distribution of outcomes.\\

\subsection{Two dates risk measures}
 We conduct numerical experiments on two different dates, \( t_1 \) and \( t_0 \) with \( t_0 < t_1 \), to compute the expected shortfall based on Proposition~\ref{prop:ConditionalMomentsOfXTwoDates2}. This proposition allows us to compute future expected moments conditional on past losses exceeding a certain threshold. For the dates, we take \(t_0 = 1\), as in the previous examples, while \(t_1 = 1.5\). The thresholds are those of the previous examples and are therefore consistent with the distribution of the process at that date. Table~\ref{tab:risk_measures_two_dates} reports the first and second moments of \(x_{11,t_1}\) conditional on \(x_{11,t_0} > x_*\) and similar quantities for \(s_{t_1}\) conditioned on \(s_{t_0} > s_*\), as well as the difference (in \(\%\)) with the corresponding values of Table~\ref{tab:risk_measures}. All the differences are negative, a result that is expected since the process reverts to its long-term mean, which is lower than its conditional expectation at earlier times. Note also that higher-order moments are more affected by this mean-reverting behavior. This modeling aspect leverages the time-dependent nature of the process and allows us to develop an intertemporal {\color{black}or time-lagged} point of view of risk measures. \\

To further illustrate this feature, Figure~\ref{fig:t1_vs_output_plot} displays the exponential decay of the two-date expected shortfall \(\mathbb{E}[x_{11,t_1} \mid x_{11,t_0} > x_*]\) over the range \(t_1 \in [1, 200]\) to its long-term mean value of \(0.84\), with substantial decay occurring only around \(t = 200\) and that is controlled by the matrix \(m\), in particular \(m_{11}\). This mean-reverting behavior can also be combined with a different conditioning, and to this end, the table also reports the first and second moments of \(x_{11,t_1}\) conditioned on \(s_{t_0} > s_*\) and similar quantities for \(s_{t_1}\) conditioned on \(x_{11,t_0} > x_*\), as well as the difference (in \(\%\)) with the corresponding values of Table~\ref{tab:risk_measures}. All the conditional expectations are smaller than their counterparts, and of course, the mean-reverting behavior is the reason for such a result.

\begin{center}
    [ Insert Table~\ref{tab:risk_measures_two_dates} here ]
\end{center}

\begin{center}
    [ Insert Figure~\ref{fig:t1_vs_output_plot} here ]
\end{center}

\subsection{Effect of dependency in capital allocation}\label{sec:XuMaoCapitalAllocation}
Another use of the conditional higher moments is to solve the capital allocation problem based on the tail mean-variance model defined in \citet{XuMao2013}. The problem involves optimally allocating capital between losses to minimize the mean-variance of the sum of the square of the losses, conditioned on a certain level of portfolio risk. Mathematically, Theorem 2.1 of \citet{XuMao2013} provides an explicit solution for a random vector of losses \((Y_1, \ldots, Y_{n})\) to the allocation problem below:
\begin{align}
	\min_{\mathbf{p} \in A} \left\{ \mathbb{E}\left[\sum_{i=1}^n (Y_i - p_i)^2 \mid Z>z_* \right] + \gamma \text{Var} \left(\sum_{i=1}^n (Y_i - p_i)^2 \mid Z>z_* \right) \right\}, \label{eq:XuMao2013_optimisation_problem}
\end{align}
where \(A = \{\mathbf{p} \in \mathbb{R}^n : p_1 + \ldots + p_n = c \}\) with the allocation budget \(c > 0\), \(\gamma > 0\), and \(Z\) is the conditioning variable and \(z_*\) a given threshold. The above conditional expectations can be reformulated in our framework using Remark \ref{rem:nExpectedShortfallDerivativeFourierTransform}. \citet{XuMao2013} illustrate their results using the multivariate elliptical distribution. \citet{LiYin2024} analyze the same model using the skew-elliptical distribution but underline the difficulty of the problem. Indeed, they state, ``... the computational challenge arises due to the calculation involving conditional joint moments of underlying risks" (see \citet[p. 2]{LiYin2024}). \citet{YangWangYao2025} solve the same problem for the multivariate generalized hyperbolic distribution. One of the difficulties with this model is that it necessitates the computation of \(\mathbb{E}[Y_j^2Y_i|Z>z_*]\) and \(\mathbb{E}[Y_jY_i|Z>z_*]\) for \(j,i=1,\ldots,n\). Thanks to the strong analytical properties of our results, this difficulty is easily handled in our framework. Note that \(\mathbb{E}[Y_jY_i|Z>z_*]\) is the first term of \(\TCov_{Y_1,Y_2|Z}(z_*)\) of \eqref{eq:TCov} {\color{black} while \(\mathbb{E}[Y_j^2Y_i|Z>z_*]\) is a third-order conditional moment}.\\

Note that the original result of \citet{XuMao2013} is based on conditioning on the sum of the random variables \(Y_i\) (\textit{i.e.}, setting \(Z\) to be \(S = \sum_{i=1}^{n} Y_i\) with \(S > \text{VaR}_q(S)\), the conditioning of the sum on the value-at-risk). However, the proof of their result does not require this specific condition, and thus we can choose \(Z\) to be another variable or combination of variables. Since this problem heavily relies on the dependence between the losses, the conditioning is performed on the variable \(x_{12,t}\), thereby quantifying the impact of loss dependence on the capital allocation. It exploits the matrix nature of the Wishart process for modeling risk measures. It will demonstrate the importance of appropriately accounting for dependency in risk management within the Wishart process. To this end, we proceed as above, solving the problem using the parameters of Table~\ref{tab:model_param_values} and then considering the process \(\tilde{x}_t\), the zero-dependent equivalent process to \(x_t\), and solving the problem \eqref{eq:XuMao2013_optimisation_problem}. A comparison between the two solutions illustrates the impact of the losses' dependence on the capital allocation.\\

First, we consider the capital allocation problem with Wishart parameters in Table~\ref{tab:model_param_values}, time value of \(t = 1\), the allocation budget \(c = 1.3\), \(\gamma = 1\), and set the conditioning random variable to be \(x_{12,t} > \text{VaR}_{0.95}(\text{tr}[\theta x_t])\), where \(\theta = (e_{12} + e_{21})/2\). Note that the values of \(\text{VaR}_{0.95}(\text{tr}[\theta x_t])\) and \(\text{VaR}_{0.95}(\text{tr}[\theta \tilde{x}_t])\) are different to account for the difference in the quadratic variations between \(x_{11,t}\) and \(x_{22,t}\) and between \(\tilde{x}_{11,t}\) and \(\tilde{x}_{22,t}\). In particular, for the dependent case, we have \(\text{VaR}_{0.95}(\text{tr}[\theta x_t]) = 0.438\), while in the diagonal case \(\tilde{\sigma}\) given by \eqref{eq:sigma_tilde} we have \(\text{VaR}_{0.95}(\text{tr}[\theta \tilde{x}_t]) = 0.085\).\\

The optimal allocation values for \eqref{eq:XuMao2013_optimisation_problem} for the case in Table~\ref{tab:model_param_values} are \(p_{1} = 1.031\) and \(p_{2} = 0.269\), which gives a ratio of \(p_{1}/p_{2} = 3.836\). In comparison, when using \(\tilde{x}_t\), the optimal allocation is \(\tilde{p}_{1} = 0.965\) and \(\tilde{p}_{2} = 0.335\), with a ratio of \(\tilde{p}_{1}/\tilde{p}_{2} = 2.877\). The difference in the allocations is best viewed through their ratios, and the impact of the dependence on this ratio is substantial. Having a dependency between the losses suggests allocating more to the loss with more variance (in this case, \(x_{11,t}\)). Thus, an erroneous assumption of independence or a significant error in its estimation can lead to a situation where one loss is under-capitalized and another over-capitalized.\\

As previously demonstrated, the Wishart process is highly tractable and is a natural framework to model dependent losses since it is, by construction, a positive definite matrix process. Furthermore, due to its tractability, we are even able to derive allocation results, and the results highlight that if dependence is not appropriately accounted for, it can lead to incorrect allocation of resources.

{\color{black}\subsection{Empirical estimation}\label{sec:empirical_estimation}
Practical applicability is an important criterion when evaluating the usefulness of a model. We illustrate our approach using the Danish Fire Loss dataset, a standard benchmark in insurance. The raw data contain 2,167 fire insurance claims from 1980 to 1990, measured in millions of DKK.\footnote{Note that we do not apply a logarithmic transformation, as in \citet{IgnatievaLandsman2021}.} We focus on the \emph{Building} and \emph{Contents} lines, excluding \emph{Profit} due to the many zeros, and we aggregate claim amounts to weekly totals, retaining weeks with strictly positive totals in both lines. Throughout, we treat these weekly bivariate observations as i.i.d. draws rather than a time series. As a result, for the model we estimate the parameters of its limiting, stationary distribution. By Corollary~\ref{coro:MGFInfinity}, this distribution is matrix gamma with scale $2\varsigma_{\infty}$ and shape  $\beta/2$. The pair $(\beta,\varsigma_{\infty})$ is obtained by the method of moments, as in \citet{GourierouxJasiakSufana2009} or \citet[Proposition~2]{GouerierouxLu2025}, and the explicit equations are given in the supplementary appendix. In other words, we fit the stationary law to the weekly aggregates, then compute the risk measures developed in this paper from the fitted law.\\

If only the diagonal terms of the process are used to estimate the parameters, then only \(\varsigma_{ij,\infty}^2\) can be identified; see the supplementary appendix for the expressions of the estimators. Since we have \(\mathbb{E}[x_{ij,\infty}]  =  \beta  \varsigma_{ij,\infty}\), this suggests specifying the sign of \( \varsigma_{ij,\infty}\) using the sign of the correlation between \(x_{ii,\infty}\) and \(x_{jj,\infty}\). These estimators yield the model parameters reported in Table~\ref{tab:gl_mom_estimates}.

\begin{center}
    [ Insert Table~\ref{tab:gl_mom_estimates} here ]
\end{center}

The implied correlation from the fitted matrix gamma distribution is 0.44, and is close to the sample correlation of 0.56 from the weekly aggregates, which indicates a reasonable fit. Using Proposition~\ref{prop:ExpectedShortfallFourierTransform}, we compute the model implied risk measures, see Table~\ref{tab:matrix_gamma_risk_measures}. For reference, we also report the empirical, nonparametric counterparts computed directly from the weekly data. The model implied quantities track their empirical analogues closely, which supports the adequacy of the limiting distribution specification for this application.

\begin{center}
    [ Insert Table~\ref{tab:matrix_gamma_risk_measures} here ]
\end{center}
}

\section{Conclusion}\label{Conclusion}

In this work, we introduce an analytical framework capable of quantifying multivariate risk measures while incorporating temporal aspects. The flexibility of the framework lies in its reliance solely on the characteristic function. We present Fourier transforms of tail conditional moments and their cross-moments, and show that this approach reduces a two-dimensional integration problem to a one-dimensional one. We compute the tail conditional expectation for various cross-moments, both at a single point in time and across multiple time points. This flexible framework is demonstrated using the Wishart process---a highly tractable matrix affine process characterized by its positive definiteness---making it well-suited for loss modeling. We provide a numerical implementation of the derived framework and demonstrate the impact of dependency between loss items. Our results show that removing dependency can lead to significant differences in the tail conditional expectation, with variations of up to 13.37\% at a fixed point in time and up to 18.86\% across time. We further demonstrate that our method and model can be applied in the context of capital allocation, showcasing that we are able to obtain operational results for the model proposed by \citet{XuMao2013}. Overall, our framework offers a robust and flexible tool for risk management, capable of handling complex dependencies and temporal evolutions in multivariate loss processes.

\clearpage

\footnotesize
\linespread{0}
\selectfont
\bibliographystyle{abbrvnat}
\bibliography{Biblio}
\clearpage

\appendix
\clearpage
\footnotesize
\linespread{1.1}
\selectfont
\section*{Tables and figures}

\subsection*{Tables}

\begin{table}[htbp]
	\caption{Wishart model parameter values.}
	\label{tab:model_param_values}
	\begin{center}
		\begin{tabular}{c c}
			Variable & Value \\
			\hline
			$\beta$ & 4 \\
			$x_0$ & $\begin{pmatrix} 0.84 & x_{12} \\ x_{12} & 0.22 \end{pmatrix}$ \\
			$x_{12}$ & $\approx 0.32$ \\
			$m$ & $\begin{pmatrix} -0.01 & 0 \\ 0 & -0.02 \end{pmatrix}$ \\
			$\sigma$ & $\begin{pmatrix} 0.06 & \sigma_{12} \\ \sigma_{12} & 0.04 \end{pmatrix}$ \\
			$\sigma_{12}$ & $\rho_{\sigma}  \sqrt{0.06 \times 0.04}$ \\
			$\rho_{\sigma}$ & 0.5 \\
			\hline
		\end{tabular}
	\end{center}
	{\footnotesize \textit{Note.} This table contains the model parameters used for our numerical experiments in Section~\ref{sec:numerical_implementation}. For the specified parameters $\beta$, $m$ and $\sigma$, the initial value of the process $x_0$ is {\color{black}the numerical} solution of $-\beta \sigma^2 = x_0 m +  x_0 m^\top$. }
\end{table}

\begin{table}
\begin{center}
\caption{Risk measures.}
\label{tab:risk_measures}
\[
\begin{array}{lc}
\textrm{Risk measure} & \textrm{Value} \\ \hline
\mathbb{E}\left[ x_{11,t} |  x_{11,t}>x_* \right]  			& 1.0613 \\
\mathbb{E}\left[ x_{11,t}^2 | x_{11,t}>x_* \right]  		& 1.1293	\\
\mathbb{E}\left[ s_{t} | s_{t}>s_* \right] 							& 1.3729	\\
\mathbb{E}\left[ s_{t}^2 | s_{t}>s_* \right] 						& 1.8892	\\
\mathbb{E}\left[ x_{11,t} |  s_{t}>s_* \right] 					& 1.0807	\\
\mathbb{E}\left[ x_{11,t}^2 | s_{t}>s_* \right] 				& 1.1715 	\\
\mathbb{E}\left[ s_{t} |  x_{11,t}>x_* \right] 					& 1.3320	\\
\mathbb{E}\left[ s_{t}^2 |  x_{11,t}>x_* \right] 				& 1.7803 	\\
\mathbb{E}\left[ s_{t} |  x_{12,t}>x_{12,*} \right] 	& 1.3628	\\
\mathbb{E}\left[ s_{t}^2 |  x_{12,t}>x_{12,*} \right] & 1.8635 	\\
\mathbb{E}\left[ x_{11,t} s_{t} | s_{t}>s_* \right] 		& 1.4871 	\\ \hline
\end{array}
\]

\end{center}
{\footnotesize \textit{Note.} This table contains the risk measures for the parameter values of Table \ref{tab:model_param_values} and $t=1$, $x_*=1$, $s_*=1.3$ and $x_{12,*}=0.435$.}
\end{table}

\begin{table}
\begin{center}
\caption{Risk measures without dependence.}\label{tab:risk_measures_sigma}
\[
\begin{array}{lcc}
\textrm{Risk measure} & \textrm{Value} & \textrm{Diff.} \%																						\\ \hline
\mathbb{E}\left[ \tilde{x}_{11,t} |  \tilde{x}_{11,t}>x_* \right]  			& 1.0613 &  0.00 	\\
\mathbb{E}\left[ \tilde{x}_{11,t}^2 |  \tilde{x}_{11,t}>x_* \right]  		& 1.1293 &  0.00	\\
\mathbb{E}\left[ \tilde{s}_{t} |  \tilde{s}_{t}>s_* \right] 							& 1.3558 & -1.71	\\
\mathbb{E}\left[ \tilde{s}_{t}^2 |  \tilde{s}_{t}>s_* \right] 						& 1.8408 & -4.84	\\
\mathbb{E}\left[ \tilde{x}_{11,t} |  \tilde{s}_{t}>s_* \right] 					& 1.1033 &  2.25  \\
\mathbb{E}\left[ \tilde{x}_{11,t}^2 |  \tilde{s}_{t}>s_* \right] 				& 1.2214 &  4.99  \\
\mathbb{E}\left[ \tilde{s}_{t} |  \tilde{x}_{11,t}>x_* \right] 					& 1.2813 & -5.07	 \\
\mathbb{E}\left[ \tilde{s}_{t}^2 |  \tilde{x}_{11,t}>x_* \right] 				& 1.6466 & -13.37 \\
\mathbb{E}\left[ \tilde{x}_{11,t} \tilde{s}_{t} |  \tilde{s}_{t}>s_* \right] 		& 1.4982 &	1.11   \\ \hline
\end{array}
\]

\end{center}
{\footnotesize \textit{Note.} This table contains the risk measures for the zero-equivalent process \(\tilde{x}\), as defined in section \ref{sec:numerical_implementation}, as well as the difference (in \%) with the values of Table \ref{tab:risk_measures} when the parameter values are those of Table \ref{tab:model_param_values} but with $\sigma$ replaced with \( \tilde{\sigma}\) given by \eqref{eq:sigma_tilde} and $t=1$, $x_*=1$ and $s_*=1.3$.}
\end{table}

\begin{table}[h!]
\begin{center}
\caption{Risk measures (two dates).}
\label{tab:risk_measures_two_dates}
\[
\begin{array}{lcc}
\textrm{Risk measure} & \textrm{Value} & \textrm{Diff.} \% \\ \hline
\mathbb{E}\left[ x_{11,t_1} \mid  x_{11,\,t_0}>x_* \right]  	&  1.0405 & -2.00 \\
\mathbb{E}\left[ x_{11,t_1}^2 \mid  x_{11,\,t_0}>x_* \right]  &  1.0944 & -3.19 \\
\mathbb{E}\left[ s_{t_1} \mid  s_{t_0}>s_* \right] 					&  1.3406 & -2.41 \\
\mathbb{E}\left[ s_{t_1}^2 \mid  s_{t_0}>s_* \right] 				&  1.8160 & -4.03 \\
\mathbb{E}\left[ x_{11,t} |  s_{t}>s_* \right] 							& 1.0594	& -1.97 \\
\mathbb{E}\left[ x_{11,t}^2 | s_{t}>s_* \right] 						& 1.1346 	&	-3.14	\\
\mathbb{E}\left[ s_{t} |  x_{11,t}>x_* \right] 							& 1.3011	&	-2.32 \\
\mathbb{E}\left[ s_{t}^2 |  x_{11,t}>x_* \right] 						& 1.7127 	& -3.80 \\ 
 \hline
\end{array}
\]
\end{center}
{\footnotesize \textit{Note.} This table contains the risk measures for the parameter values of Table \ref{tab:model_param_values}, as well as the difference (in \(\%\)) with the corresponding values of Table \ref{tab:risk_measures} and $t_0=1$, $t_1=1.5$, $x_*=1$ and $s_*=1.3$.}
\end{table}

\begin{table}[htbp]
	\caption{Danish Fire Loss parameter estimates.}\label{tab:gl_mom_estimates}
  \begin{center}
    \begin{tabular}{c c}
      \hline
      Variable & Value \\
      \hline
      $\beta$ & $3.24$ \\
      $\varsigma_{\infty}$ & $\begin{pmatrix} 7.09 & 4.65 \\ 4.65 & 9.60 \end{pmatrix}$ \\
      \hline
    \end{tabular}
  \end{center}
  {\footnotesize \textit{Note.} This table contains the method-of-moments estimates of the limiting distribution parameters from \citet[Proposition 2]{GouerierouxLu2025} for Corollary~\ref{coro:MGFInfinity}: $\beta$ and $\varsigma_{\infty}$ for the aggregated weekly losses for the Danish Fire Loss dataset on the \emph{Building} and \emph{Contents} lines of business. See the supplementary appendix for the equations used to compute these estimates.}
\end{table}

\begin{table}
\begin{center}
\caption{Risk measures of matrix gamma.}
\label{tab:matrix_gamma_risk_measures}
\[
\begin{array}{lcc}
\textrm{Risk measure} & \textrm{Model} & \textrm{Empirical} \\ \hline
\mathbb{E}\!\left[ x_{11,\infty} \mid x_{11,\infty} > x_* \right] & 74.09 & 88.60 \\
\mathbb{E}\!\left[ x_{11,\infty}^2 \mid x_{11,\infty} > x_* \right] & 5732.04 & 10541.72 \\
\mathbb{E}\!\left[ x_{22,\infty} \mid x_{22,\infty} > x_* \right] & 100.40 & 82.66 \\
\mathbb{E}\!\left[ x_{22,\infty}^2 \mid x_{22,\infty} > x_* \right] & 10534.68 & 8749.95 \\
\mathbb{E}\!\left[ s_{\infty} \mid s_{\infty} > s_* \right] & 145.47 & 156.27 \\
\mathbb{E}\!\left[ s_{\infty}^2 \mid s_{\infty} > s_* \right] & 21632.63 & 28409.03 \\ \hline
\end{array}
\]
\end{center}
{\footnotesize \textit{Note.} Entries report tail conditional expectations at quantile $0.95$.
For $i=1,2$, $x_{ii,\infty}$ denotes the limiting diagonal components and $s_\infty$ the sum; thresholds $x_*, s_*$ are the corresponding VaR levels (model VaR for the model column, empirical quantile for the empirical column).}
\end{table}

\clearpage

\subsection*{Figures}

\begin{figure}[h!]
    \begin{center}
		
    \caption{Dependence of \(\mathbb{E}\left[ x_{11,t} \mid  x_{11,t}>x_* \right]\) on \(x_*\).}
		\label{fig:expectation_x_star}
		\includegraphics[width=0.6\textwidth]{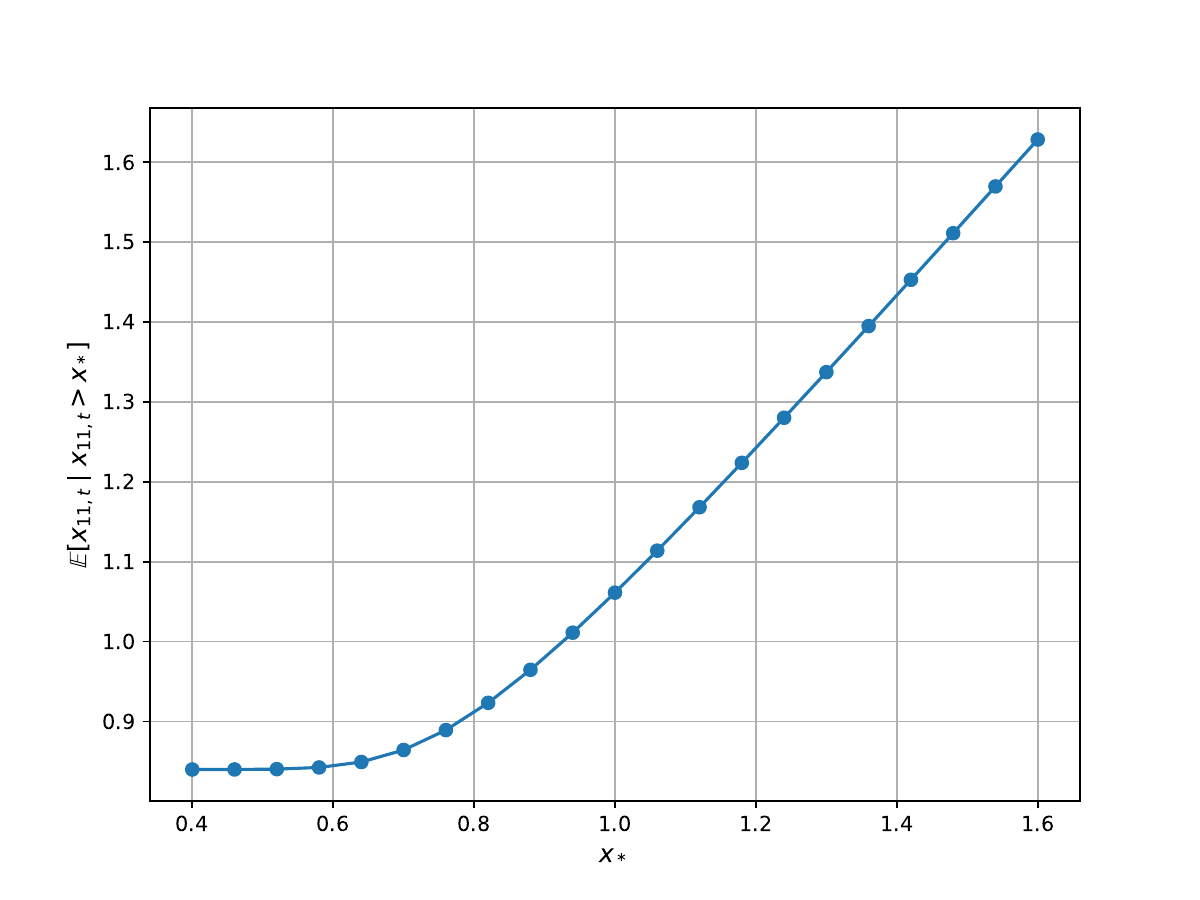}
    \end{center}
		Note: Sensitivity of \(\mathbb{E}\left[ x_{11,t} \mid  x_{11,t}>x_* \right]\) on \(x_*\) for \(t = 1\).
\end{figure}

\begin{figure}[h!]
    \begin{center}
		
    \caption{Dependence of \(\mathbb{E}\left[ s_{t} \mid  s_{t}>s_* \right]\) on \(s_*\).}
		\label{fig:expectation_s_star} 
    \includegraphics[width=0.6\textwidth]{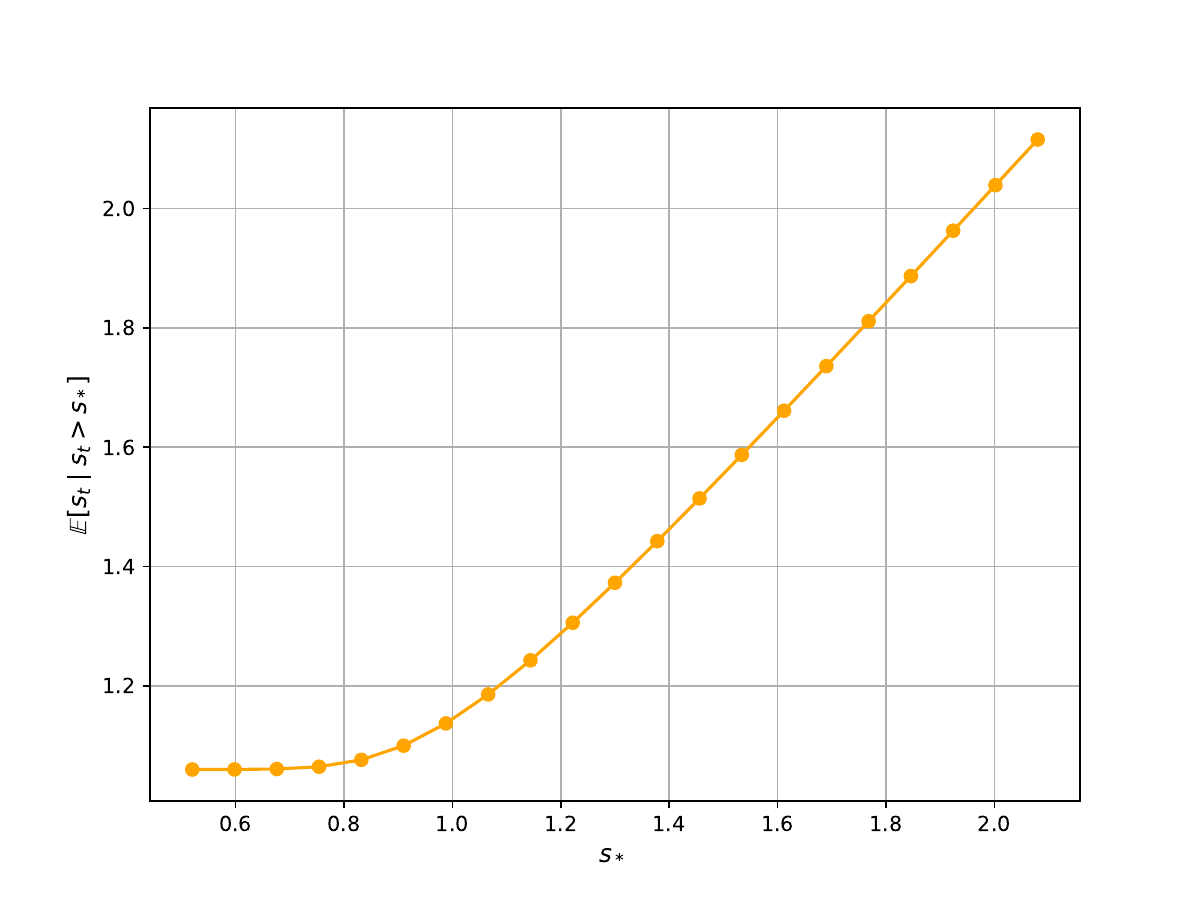}
    \end{center}
		Note: Sensitivity of \(\mathbb{E}\left[ s_{t} \mid  s_{t}>s_* \right]\) on \(s_*\) for \(t = 1\).
\end{figure}

\begin{figure}[h!]
    \begin{center}
		\caption{Dependence of $\mathbb{E}\left[ x_{11,t_1} \mid  x_{11,t_0}>x_* \right]$ on \(t_1\).}
    \label{fig:t1_vs_output_plot}
		\includegraphics[width=0.6\textwidth]{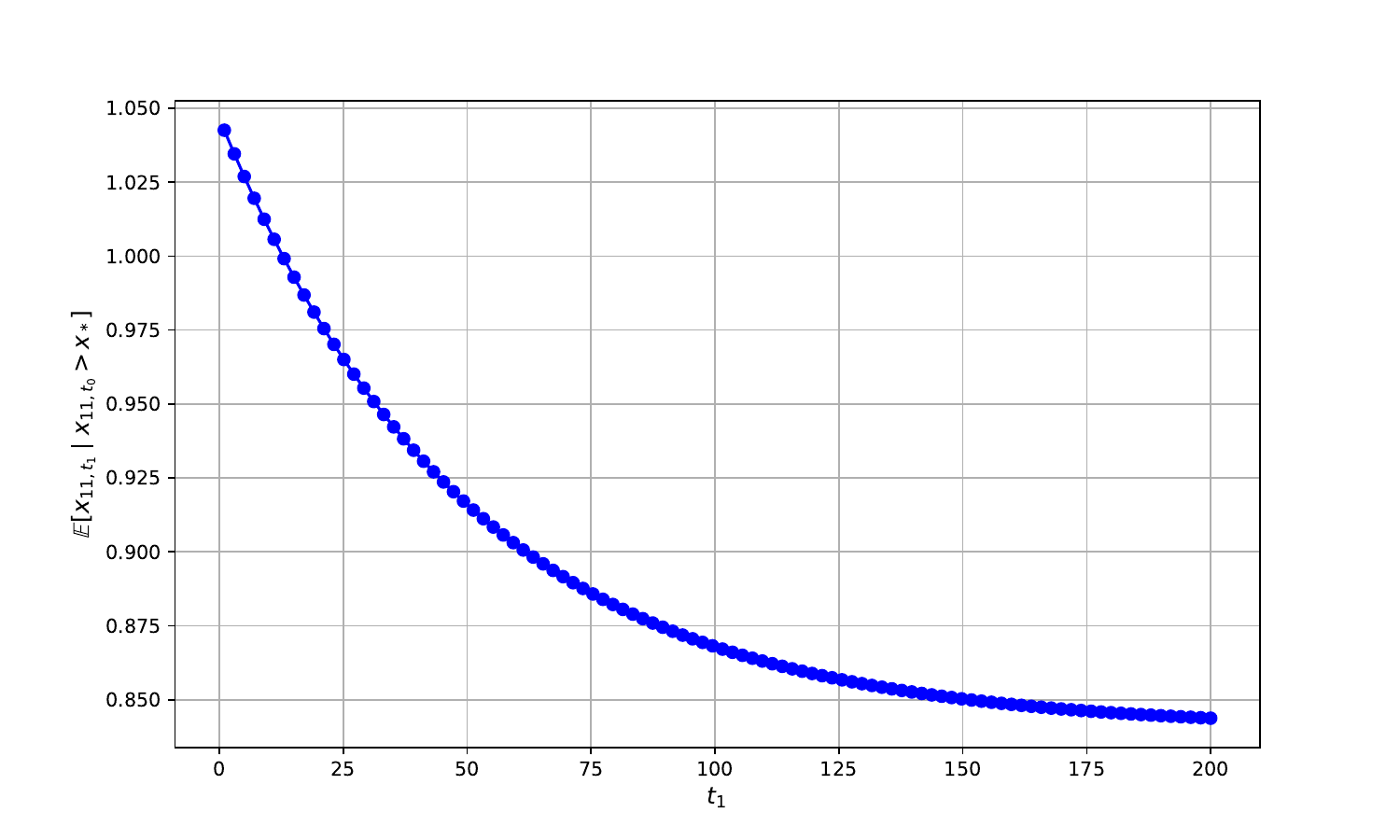}
    \end{center}
		Note: Two dates expected shortfall depending on $t_1$. We calculate $\mathbb{E}\left[ x_{11,t_1} \mid  x_{11,t_0}>x_* \right]$, where we fix $t_0 = 1$ and vary $t_1$ in the range $[1, 200]$.
\end{figure}

\clearpage

\setcounter{page}{1}

\begin{center}
 Online supplementary materials to:\\

\Large Wishart conditional tail risk measures: An analytic approach\\

\normalsize by\\ 

\large Jos\'e Da Fonseca and Patrick Wong
\end{center}
\normalsize

\paragraph{A comparison with \citet{IgnatievaLandsman2019} and \citet{IgnatievaLandsman2025}:} We provide complete details, using the methodology presented in this paper, on the implementation of the expected shortfall, the tail variance and the tail skewness for a univariate random variable following a Generalised Hyperbolic (GH) distribution. For the expected shortfall, it is given by \eqref{eq:TCE_IL_1} and corresponds to \citet[Eq. (3.4)]{IgnatievaLandsman2019}. As in \citet{IgnatievaLandsman2019}, we derive the expression for the characteristic function of the univariate GH distribution from the moment generating function \eqref{eq:GH_mgf}, and it corresponds to \citet[Eq.~(2.6)]{IgnatievaLandsman2019}, where we correct for a missing square root in the modified Bessel function of the third kind that appears in the denominator. Following \citet{IgnatievaLandsman2019}, see \citet{Paolella2007} for further details.\\

Given a scalar variable \(Y\sim GH_1(\lambda, \chi, \psi,\mu, \sigma^2,\gamma)\) with \(\lambda, \chi, \psi,\mu, \sigma^2,\gamma\) scalar values and \(\chi>0\) and \(\psi>0\), the univariate GH distribution with density \eqref{eq:GH_density}. Then the moment generating function of $Y$, given by \eqref{eq:GH_mgf}, is of scalar argument and simplifies to
\begin{align}
    \Phi_{Y}(v) = e^{v \mu} \left(\frac{\psi}{\psi - 2 v \gamma - v^2 \sigma^2}\right)^{\lambda / 2}  \frac{K_\lambda\left(\sqrt{\chi \left(\psi - 2 v\gamma - v^2  \sigma^2\right)}\right)}
    {K_\lambda\left(\sqrt{\chi \psi}\right)}. \label{eq:GH_mgf_scalar}
\end{align}
The tail conditional expectation \(\TCE_Y(y_*)\) is equal to \eqref{eq:ConditionalMomentYp} with \(p=1\), that is
\begin{align}
\mathbb{E}[Y | Y>y_*]=\frac{\mathbb{E}[Y \mathbf{1}_{\lbrace Y>y_* \rbrace}]}{\mathbb{E}[\mathbf{1}_{\lbrace Y>y_*\rbrace}]}, \label{eq:expected_shortfall_appendix}
\end{align}
and according to Proposition \ref{prop:ExpectedShortfallFourierTransform}, the expectation in the numerator is given by
\begin{align}
\mathbb{E}[Y \mathbf{1}_{\lbrace Y>y_* \rbrace}]=\frac{1}{\pi}\int_{0}^{+\infty} \Re\left( \left(\frac{ y_*}{(\alpha_1-\mathrm{i}u_1)}+\frac{1}{(\alpha_1-\mathrm{i}u_1)^{2}} \right) e^{-(\alpha_1 -\mathrm{i}u_1)y_* }\Phi_Y(\alpha_1-\mathrm{i} u_1)  \right)du_1,\label{eq:MomentY1_numerator}
\end{align}
while the denominator is 
\begin{align}
\mathbb{E}[\mathbf{1}_{\lbrace Y>y_* \rbrace}]=\frac{1}{\pi}\int_{0}^{+\infty} \Re\left( \frac{ e^{-(\alpha_1 -\mathrm{i}u_1)y_* }\Phi_Y(\alpha_1-\mathrm{i} u_1)}{(\alpha_1-\mathrm{i}u_1)} \right)du_1. \label{eq:MomentY1_denominator}
\end{align}
It remains to check the integrability of these functions. The second integrand in \eqref{eq:MomentY1_numerator} leads to
\begin{align}
\left| \int_{0}^{+\infty} \Re\left( \frac{e^{-(\alpha_1 -\mathrm{i}u_1)y_* }\Phi_Y(\alpha_1-\mathrm{i} u_1)}{(\alpha_1-\mathrm{i}u_1)^{2}}\right)du_1 \right|\leq c \int_{0}^{+\infty} \frac{|\Phi_Y(\alpha_1-\mathrm{i} u_1)|}{|(\alpha_1-\mathrm{i}u_1)^{2}|}du_1, 
\end{align}
with \(c>0\) and as \(|\Phi_Y(\alpha_1-\mathrm{i} u_1)| \leq |\mathbb{E}[e^{(\alpha_1-\mathrm{i} u_1)Y}]| \leq |\mathbb{E}[e^{\alpha_1Y}]| \) is finite for \(\alpha_1\) sufficiently small since \(\psi>0\), we conclude the integral is well defined. Note that \(\alpha_1\) can have any sign. The integral \eqref{eq:MomentY1_denominator} and the first integrand of \eqref{eq:MomentY1_numerator} are also well defined since we have
\begin{align}
\left|\int_{0}^{+\infty} \Re\left( \frac{ e^{-(\alpha_1 -\mathrm{i}u_1)y_* }\Phi_Y(\alpha_1-\mathrm{i} u_1)}{(\alpha_1-\mathrm{i}u_1)} \right)du_1\right|\leq c \int_{0}^{+\infty} \frac{|\Phi_Y(\alpha_1-\mathrm{i} u_1)|}{|(\alpha_1-\mathrm{i}u_1)|} du_1,\label{eq:upper_bound1}
\end{align}
with \(c>0\) (a generic constant that may change from one equation to the other), and using \citet[(10.25.3)]{DLMF} we have for \(u_1\) large (and positive)
\begin{align}
|\Phi_Y(\alpha_1-\mathrm{i} u_1)|\leq c \frac{e^{- \sqrt{\chi(\psi - 2\alpha_1 \gamma + u_1^2\sigma^2) + \mathrm{i}\chi(2u_1\gamma + 2\alpha_1 u_1\sigma^2)}}}{|(\psi - 2\alpha_1 \gamma + u_1^2\sigma^2) + \mathrm{i}(2u_1\gamma + 2\alpha_1 u_1\sigma^2)|^{\lambda/2 + 1/4}},
\end{align}
with \(c>0\). For \(u_1\) large we get, since \(\chi>0\), that there exists \(c>0\) such that
\begin{align}
|\Phi_Y(\alpha_1-\mathrm{i} u_1)|\leq c \frac{e^{- \sqrt{\chi} u_1\sigma}}{u_1^{\lambda + 1/2}},
\end{align}
and this upper bound combined with \eqref{eq:upper_bound1} implies the integral is well defined. As result, \eqref{eq:MomentY1_numerator} and \eqref{eq:MomentY1_denominator} are well defined and therefore the expected shortfall \eqref{eq:expected_shortfall_appendix} is also well defined.\\

For the tail variance, when \(p=2\) in \eqref{eq:ConditionalMomentYp} leads to 

\begin{align}
\mathbb{E}[Y^2 \mathbf{1}_{\lbrace Y>y_* \rbrace}]=\frac{1}{\pi}\int_{0}^{+\infty} \Re\left( \left(\frac{ y_*^2}{(\alpha_1-\mathrm{i}u_1)} + \frac{ 2y_*}{(\alpha_1-\mathrm{i}u_1)^2}+\frac{2}{(\alpha_1-\mathrm{i}u_1)^{3}} \right) e^{-(\alpha_1 -\mathrm{i}u_1)y_* }\Phi_Y(\alpha_1-\mathrm{i} u_1)  \right)du_1, \label{eq:MomentY2_numerator}
\end{align}
 
and, as previously mentioned, the \(\TV_Y(y_*)\) of \eqref{eq:TailVariance} is known and given by
\begin{align}
 \TV_Y(y_*) =\frac{ \mathbb{E}[Y^2 \mid Y > y_*]}{\mathbb{E}[\mathbf{1}_{\lbrace Y > y_*\rbrace }]} - (\mathbb{E}[Y \mid Y > y_*])^2. \label{eq:TV_example_mgf}
\end{align}
The integral above can be decomposed as a sum of three integrals, the first two being well defined according to the verifications already done, and the last one is also well defined. As a result, the tail variance risk representation based on the MGF is well defined.\\

In \citet{IgnatievaLandsman2025}, the following explicit expression is derived for the tail variance risk measure when \(Y\sim GH_1(\lambda, \chi, \psi,\mu, \sigma^2,\gamma)\):
\begin{align}
\TV_Y(y_*)	&=  \frac{\sigma^2 k_\lambda }{1-q}  \bar{F}_{\mathrm{GH}_1}(y_*; \lambda + 1, \chi, \psi, \mu, \sigma^2, \gamma)\left( 1 +(y_*-\mu)\frac{ f_{\mathrm{GH}_1}(y_*; \lambda + 1, \chi, \psi, \mu, \sigma^2, \gamma)}{ \bar{F}_{\mathrm{GH}_1}(y_*; \lambda + 1, \chi, \psi, \mu, \sigma^2, \gamma)} \right) \nonumber \\
						&+  \frac{\tilde{k}_\lambda \gamma }{1-q}  \bar{F}_{\mathrm{GH}_1}(y_*; \lambda + 2, \chi, \psi, \mu, \sigma^2, \gamma)\left( \gamma +\sigma^2\frac{ f_{\mathrm{GH}_1}(y_*; \lambda + 2, \chi, \psi, \mu, \sigma^2, \gamma)}{ \bar{F}_{\mathrm{GH}_1}(y_*; \lambda + 2, \chi, \psi, \mu, \sigma^2, \gamma)} \right) \nonumber \\
						&- \frac{k_\lambda^2}{(1-q)^2} \bar{F}_{\mathrm{GH}_1}(y_*; \lambda + 1, \chi, \psi, \mu, \sigma^2, \gamma)^2\left( \gamma +\sigma^2\frac{ f_{\mathrm{GH}_1}(y_*; \lambda + 1, \chi, \psi, \mu, \sigma^2, \gamma)}{ \bar{F}_{\mathrm{GH}_1}(y_*; \lambda + 1, \chi, \psi, \mu, \sigma^2, \gamma)} \right)^2   , \label{eq:TV_IL_1}
\end{align}
with \(k_\lambda\) as in \eqref{eq:k_lambda} and 
\begin{align*}
\tilde{k}_\lambda = \frac{\chi}{\psi} \frac{K_{\lambda+2}(\sqrt{\chi \psi})}{K_\lambda(\sqrt{\chi \psi})}. 
\end{align*}

\begin{remark}
Note that there is a typo in \citet[Eq.~(3.6)]{IgnatievaLandsman2025}. In the second line of this equation, the term \( \bar{F}_{\mathrm{GH}_1}(x_q;\mu,\sigma^2, g, \gamma, l,\tilde{\lambda}_2,\chi,\psi)\) should be replaced by \( \bar{F}_{\mathrm{GH}_1}(x_q;\mu,\sigma^2, g, \gamma, l,\tilde{\lambda}_4,\chi,\psi)\) (in the notations of \citet{IgnatievaLandsman2025}).
\end{remark}

Of course, \eqref{eq:TV_example_mgf} computed using the MGF and \eqref{eq:TV_IL_1} should yield the same numerical results.\\

Lastly, for the tail skewness, when \(p=3\) in \eqref{eq:ConditionalMomentYp}, the simplicity of the method proposed in our work allows us to compute
\begin{align}
\mathbb{E}[Y^3 \mathbf{1}_{\lbrace Y>y_* \rbrace}]&=\frac{1}{\pi}\int_{0}^{+\infty} \Re\left( \left(\frac{ y_*^3}{(\alpha_1-\mathrm{i}u_1)} + \frac{ 3y_*^2}{(\alpha_1-\mathrm{i}u_1)^2}+\frac{6y_*}{(\alpha_1-\mathrm{i}u_1)^{3}}\right) e^{-(\alpha_1 -\mathrm{i}u_1)y_* }\Phi_Y(\alpha_1-\mathrm{i} u_1)  \right)du_1 \nonumber \\
&+\frac{1}{\pi}\int_{0}^{+\infty} \Re\left( \frac{ 6}{(\alpha_1-\mathrm{i}u_1)^4} e^{-(\alpha_1 -\mathrm{i}u_1)y_* }\Phi_Y(\alpha_1-\mathrm{i} u_1)  \right)du_1, \label{eq:MomentY3_numerator}
\end{align}
with the first integral above being well-defined, since \eqref{eq:MomentY2_numerator} is, while the last integral is also well-defined. Combining this expectation with the tail conditional variance \eqref{eq:TV_example_mgf} and the expression \eqref{eq:tail_skew} gives us the tail skewness, which is yet to be computed in the case of \citet{IgnatievaLandsman2025}.\\

Let us verify that the tail conditional expectation and tail variance values obtained using the approach based on the MGF match those reported by \citet{IgnatievaLandsman2019} and \citet{IgnatievaLandsman2025}. For the tail skewness, we provide its value, which could be used to validate a formula based on the density once such a formula becomes available.

\begin{table}[h]
\begin{center}
\caption{Risk measures.}
\label{tab:risk_measures_comparison}
\[
\begin{array}{lcc}
\textrm{Risk measure} & \textrm{Value density}& \textrm{Value mgf} \\ \hline
\TCE_Y(y_*)						& 2.4649 								& 2.4649 		\\
\TV_Y(y_*)       			& 0.8749 								& 0.8749		\\
\TS_Y(y_*)       			&  											& 2.6444 		\\ \hline
\end{array}
\]

\end{center}
{\footnotesize \textit{Note.} This table contains the risk measures \(\TCE, \TV, \TS\) for the GH distribution computed using the density and given by \eqref{eq:TCE_IL_1} (or \citet[Eq.~(2.6)]{IgnatievaLandsman2019}), and \eqref{eq:TV_IL_1} (or \citet[Eq.~(3.6)]{IgnatievaLandsman2025}), as well as using the moment generating function \eqref{eq:expected_shortfall_appendix} and \eqref{eq:TV_example_mgf}. For the tail skewness, it is computed using \eqref{eq:tail_skew}, which is based on  \eqref{eq:MomentY3_numerator}, and relies on the moment generating function. The parameters are those for the aggregate portfolio \citet[Table 1, Panel F]{IgnatievaLandsman2019} while $q=0.95$ and $y_*$ is the $q$ level quantile.}
\end{table}

\begin{remark}
Note that our value for \(\TCE\) does not correspond to the reported value (\textit{i.e.}, \(2.219615\)) in \citet[Table 4, Panel F, column GH]{IgnatievaLandsman2019}, which implies that there is an implementation error in that paper.
\end{remark}


\paragraph{A simple implementation of the method of moments estimator:} This part records the method of moments formulas used to estimate $(\varsigma_{\infty},\, \beta)$ in Corollary~\ref{coro:MGFInfinity}, the parameters of the limiting distribution of $x_t$ in the \citet{Bru1991} case (\textit{i.e.,} when $\omega = \beta \sigma^2$). See \citet[Section 5]{GourierouxJasiakSufana2009} and \citet[Prop.~2 and Prop.~5]{GouerierouxLu2025} for derivations.\\

At the population level (with $i\neq j$), the identities are
\begin{align}
    \beta &= \frac{2\,\big(\mathbb{E}[x_{ii,\infty}]\big)^2}{\operatorname{Var}[x_{ii,\infty}]}, \\
    \varsigma_{ii,\infty} &= \frac{\operatorname{Var}[x_{ii,\infty}]}{2\,\mathbb{E}[x_{ii,\infty}]}, \\
    \varsigma_{ij,\infty}^{2} &= \frac{\operatorname{Cov}[x_{ii,\infty},x_{jj,\infty}]}{2\beta},
\end{align}
where $\beta$ does not depend on the index $i$. As result, there are several estimator for \(\beta\) and for \(\varsigma_{ij,\infty}^{2}\).\\

\citet{GouerierouxLu2025} suggest the following estimation procedure. Let $\widehat{\mu}_i$, $\widehat{\eta_i^{2}}$, and $\widehat{\gamma}_{ij}$ denote the sample mean of $x_{ii,t}$, the sample variance of $x_{ii,t}$, and the sample covariance of $(x_{ii,t},x_{jj,t})$, respectively. The plug-in estimators are
\begin{equation}
    \widehat{\beta}=\frac{2}{n}\sum_{i=1}^{n}\frac{(\widehat{\mu}_i)^{2}}{\widehat{\eta_i^{2}}}, 
    \qquad 
    \widehat{\varsigma_{ii,\infty}}=\frac{\widehat{\eta_i^{2}}}{2\,\widehat{\mu}_i}, 
    \qquad 
    \widehat{\varsigma_{ii,\infty}^{2}}=\frac{\widehat{\gamma}_{ij}}{2\widehat{\beta}}.
    \label{eq:MoM_estimators}
\end{equation}

Note that we also have

\begin{align}
    \mathbb{E}[x_{ij,\infty}]  =  \beta  \varsigma_{ij,\infty}. \label{eq:varsigma_ij_infty}
\end{align}

\begin{remark}
Relative to \citet{GouerierouxLu2025}, there is a missing factor of $2$ in the formulas for $\beta$, $\varsigma_{ii,\infty}$ and $\varsigma_{ij,\infty}$. Further to this, the $\varsigma_{ij,\infty}$ in \citet[Eq. (7)]{GouerierouxLu2025} is off by a factor of $\beta$.
\end{remark}

\end{document}